\definecolor{shadecolor}{rgb}{0.9,0.9,0.9}
\newtheorem{definition}{Definition}
\newtheorem{proposition}[definition]{Proposition}
\newtheorem{lemma}[definition]{Lemma}
\newtheorem{theorem}[definition]{Theorem}
\definecolor{Gray}{gray}{0.92}
\definecolor{Gray2}{gray}{0.75}
\definecolor{maroon}{cmyk}{0,0.87,0.68,0.32}
\def\squareforqed{\hbox{\rlap{$\sqcap$}$\sqcup$}}
\def\qed{\ifmmode\squareforqed\else{\unskip\nobreak\hfil
\penalty50\hskip1em\null\nobreak\hfil\squareforqed
\parfillskip=0pt\finalhyphendemerits=0\endgraf}\fi}
\def\endenv{\ifmmode\;\else{\unskip\nobreak\hfil
\penalty50\hskip1em\null\nobreak\hfil\;
\parfillskip=0pt\finalhyphendemerits=0\endgraf}\fi}
\newenvironment{proof}{\noindent \textbf{{Proof~} }}{\hfill $\blacksquare$ \vspace{0.2cm}}
\newcounter{remark}
\newenvironment{remark}[1][]{\refstepcounter{remark}\par\medskip\noindent%
\textbf{Remark~\theremark #1} }{\medskip}
\newcounter{example}
\mathchardef\ordinarycolon\mathcode`\:
\def\vcentcolon{\mathrel{\mathop\ordinarycolon}}
\newmdenv[skipabove=7pt,
skipbelow=7pt,
backgroundcolor=darkblue!15,
innerleftmargin=5pt,
innerrightmargin=5pt,
innertopmargin=5pt,
leftmargin=0cm,
rightmargin=0cm,
innerbottommargin=5pt,
linewidth=1pt]{tBox}
\newmdenv[skipabove=7pt,
skipbelow=7pt,
backgroundcolor=blue2!25,
innerleftmargin=5pt,
innerrightmargin=5pt,
innertopmargin=5pt,
leftmargin=0cm,
rightmargin=0cm,
innerbottommargin=5pt,
linewidth=1pt]{dBox}
\newmdenv[skipabove=7pt,
skipbelow=7pt,
backgroundcolor=darkkblue!15,
innerleftmargin=5pt,
innerrightmargin=5pt,
innertopmargin=5pt,
leftmargin=0cm,
rightmargin=0cm,
innerbottommargin=5pt,
linewidth=1pt]{sBox}
\definecolor{darkblue}{RGB}{0,76,156}
\definecolor{darkkblue}{RGB}{0,0,153}
\definecolor{blue2}{RGB}{102,178,255}
\definecolor{darkred}{RGB}{195,0,0}
\newcommand{\nc}{\newcommand}
\nc{\rnc}{\renewcommand}
\nc{\beg}{\begin{equation}}
\nc{\eeq}{{\end{equation}}}
\nc{\beqa}{\begin{eqnarray}}
\nc{\eeqa}{\end{eqnarray}}
\nc{\lbar}[1]{\overline{#1}}
\nc{\bra}[1]{\langle#1|}
\nc{\ket}[1]{|#1\rangle}
\nc{\ketbra}[2]{|#1\rangle\!\langle#2|}
\nc{\braket}[2]{\langle#1|#2\rangle}
\nc{\proj}[1]{| #1\rangle\!\langle #1 |}
\nc{\avg}[1]{\langle#1\rangle}
\nc{\Rank}{\operatorname{Rank}}
\nc{\smfrac}[2]{\mbox{$\frac{#1}{#2}$}}
\nc{\tr}{\operatorname{Tr}}
\nc{\ox}{\otimes}
\nc{\cA}{{\cal A}}
\nc{\cB}{{\cal B}}
\nc{\cC}{{\cal C}}
\nc{\cD}{{\cal D}}
\nc{\cE}{{\cal E}}
\nc{\cF}{{\cal F}}
\nc{\cG}{{\cal G}}
\nc{\cH}{{\cal H}}
\nc{\cI}{{\cal I}}
\nc{\cJ}{{\cal J}}
\nc{\cK}{{\cal K}}
\nc{\cL}{{\cal L}}
\nc{\cM}{{\cal M}}
\nc{\cN}{{\cal N}}
\nc{\cO}{{\cal O}}
\nc{\cP}{{\cal P}}
\nc{\cQ}{{\cal Q}}
\nc{\cR}{{\cal R}}
\nc{\cS}{{\cal S}}
\nc{\cT}{{\cal T}}
\nc{\cV}{{\cal V}}
\nc{\cX}{{\cal X}}
\nc{\cY}{{\cal Y}}
\nc{\cZ}{{\cal Z}}
\nc{\cW}{{\cal W}}
\nc{\csupp}{{\operatorname{csupp}}}
\nc{\qsupp}{{\operatorname{qsupp}}}
\nc{\var}{{\operatorname{var}}}
\nc{\rar}{\rightarrow}
\nc{\lrar}{\longrightarrow}
\nc{\polylog}{{\operatorname{polylog}}}
\nc{\wt}{{\operatorname{wt}}}
\nc{\av}[1]{{\left\langle {#1} \right\rangle}}
\nc{\supp}{{\operatorname{supp}}}
\def\ve{\varepsilon}
\nc{\RR}{{{\mathbb R}}}
\nc{\CC}{{{\mathbb C}}}
\nc{\FF}{{{\mathbb F}}}
\nc{\NN}{{{\mathbb N}}}
\nc{\ZZ}{{{\mathbb Z}}}
\nc{\PP}{{{\mathbb P}}}
\nc{\QQ}{{{\mathbb Q}}}
\nc{\UU}{{{\mathbb U}}}
\nc{\EE}{{{\mathbb E}}}
\nc{\id}{{\operatorname{id}}}
\nc{\CHSH}{{\operatorname{CHSH}}}
\nc{\be}{\begin{equation}}
\nc{\ee}{{\end{equation}}}
\nc{\bea}{\begin{eqnarray}}
\nc{\eea}{\end{eqnarray}}
\nc{\rU}{\mbox{U}}
\nc{\ob}[1]{#1}
\nc{\SEP}{{\text{\rm SEP}}}
\nc{\NS}{{\text{\rm NS}}}
\nc{\LOCC}{{\text{\rm LOCC}}}
\nc{\PPT}{{\text{\rm PPT}}}
\nc{\EXT}{{\text{\rm EXT}}}
\nc{\Sym}{{\operatorname{Sym}}}
\nc{\ERLO}{{E_{\text{r,LO}}}}
\nc{\ERLOCC}{{E_{\text{r,LOCC}}}}
\nc{\ERPPT}{{E_{\text{r,PPT}}}}
\nc{\ERLOCCinfty}{{E^{\infty}_{\text{r,LOCC}}}}
\nc{\Aram}{{\operatorname{\sf A}}}
\def\grd@save@target#1{%
  \def\grd@target{#1}}
\def\grd@save@start#1{%
  \def\grd@start{#1}}
\tikzset{
  grid with coordinates/.style={
    to path={%
      \pgfextra{%
        \edef\grd@@target{(\tikztotarget)}%
        \tikz@scan@one@point\grd@save@target\grd@@target\relax
        \edef\grd@@start{(\tikztostart)}%
        \tikz@scan@one@point\grd@save@start\grd@@start\relax
        \draw[minor help lines,magenta] (\tikztostart) grid (\tikztotarget);
        \draw[major help lines] (\tikztostart) grid (\tikztotarget);
        \grd@start
        \pgfmathsetmacro{\grd@xa}{\the\pgf@x/1cm}
        \pgfmathsetmacro{\grd@ya}{\the\pgf@y/1cm}
        \grd@target
        \pgfmathsetmacro{\grd@xb}{\the\pgf@x/1cm}
        \pgfmathsetmacro{\grd@yb}{\the\pgf@y/1cm}
        \pgfmathsetmacro{\grd@xc}{\grd@xa + \pgfkeysvalueof{/tikz/grid with coordinates/major step}}
        \pgfmathsetmacro{\grd@yc}{\grd@ya + \pgfkeysvalueof{/tikz/grid with coordinates/major step}}
        \foreach \x in {\grd@xa,\grd@xc,...,\grd@xb}
        \node[anchor=north] at (\x,\grd@ya) {\pgfmathprintnumber{\x}};
        \foreach \y in {\grd@ya,\grd@yc,...,\grd@yb}
        \node[anchor=east] at (\grd@xa,\y) {\pgfmathprintnumber{\y}};
      }
    }
  },
  minor help lines/.style={
    help lines,
    step=\pgfkeysvalueof{/tikz/grid with coordinates/minor step}
  },
  major help lines/.style={
    help lines,
    line width=\pgfkeysvalueof{/tikz/grid with coordinates/major line width},
    step=\pgfkeysvalueof{/tikz/grid with coordinates/major step}
  },
  grid with coordinates/.cd,
  minor step/.initial=.2,
  major step/.initial=1,
  major line width/.initial=2pt,
}
\def\problem@s{}
\newcounter{problems@cnt}
\newcommand{\allproblems}{\problem@s}
\nc{\MIO}{\text{\rm MIO}}
\nc{\DIO}{\text{\rm DIO}}
\nc{\SIO}{\text{\rm SIO}}
\nc{\IO}{\text{\rm IO}}
\nc{\PIO}{\text{\rm PIO}}
\nc{\DIIO}{\text{\rm DIIO}}
\nc{\LICC}{\text{\rm LICC}}  
\nc{\LQICC}{\text{\rm LQICC}} 
\nc{\SI}{\text{\rm SI}}     
\nc{\SQI}{\text{\rm SQI}}   
\nc{\QIP}{\text{\rm QIP}}   
\newcommand{\cU}{\mathcal U}
\newcommand{\sbar}{\;\rule{0pt}{9.5pt}\middle|\;}
\newcommand{\channel}[1]{{\cal C}(#1)}
\definecolor{beamer}{rgb}{0.2,0.2,0.7} 
\newcommand{\rel}{\middle\|}
\begin{document}

\title{\Large \textbf{Finite Block Length Analysis on Quantum Coherence Distillation\\ and Incoherent Randomness Extraction}}

\author[1,2,3,4]{Masahito Hayashi~\thanks{masahito@math.nagoya-u.ac.jp}}
\affil[1]{\small Graduate School of Mathematics, Nagoya University, Nagoya, 464-8602, Japan}
\affil[2]{Shenzhen Institute for Quantum Science and Engineering, \protect\\ Southern University of  Science and Technology, Shenzhen, 518055, China}
\affil[3]{Center for Quantum Computing, Peng Cheng Laboratory, Shenzhen 518000, China}
\affil[4]{Centre for Quantum Technologies, National University of Singapore, 3 Science Drive 2, 117542, Singapore}
\author[5]{\ Kun Fang~\thanks{fangfred11@gmail.com}}
\affil[5]{Institute for Quantum Computing, University of Waterloo, Waterloo, Ontario N2L 3G1, Canada}
\author[6,2]{\ Kun Wang~\thanks{nju.wangkun@gmail.com}}
\affil[6]{Institute for Quantum Computing, Baidu Research, Beijing 100193, China}

\date{\today}
\maketitle

\vspace{-0.5cm}
\begin{abstract}



We give the \emph{first} systematic study on the second order asymptotics of the operational task of coherence distillation with and without assistance.
In the unassisted setting,
we introduce a variant of randomness extraction framework where free incoherent operations are allowed before the incoherent measurement and the randomness extractors. We then show that the maximum number of random bits extractable
from a given quantum state is precisely equal to the maximum number of coherent bits that can be distilled from
the same state. This relation enables us to derive tight second order expansions of both tasks
in the independent and identically distributed setting.
Remarkably, the incoherent operation classes that can empower coherence distillation for generic states all admit the same second order expansions, indicating their operational equivalence for coherence distillation in both asymptotic and large block length regime.
We then generalize the above line of research to the assisted setting,
arising naturally in bipartite quantum systems where Bob distills
coherence from the state at hand, aided by the benevolent Alice possessing the other
system. More precisely,
we introduce a new assisted incoherent randomness extraction task and
establish an exact relation between this task and the assisted coherence distillation.
This strengthens the one-shot relation in the unassisted setting
and confirms that this cryptographic framework indeed offers a new perspective to the study of
quantum coherence distillation. Likewise, this relation yields second order characterizations to
the assisted tasks.
As by-products, we show the strong converse property of the aforementioned
tasks from their second order expansions.
\end{abstract}

{
  \hypersetup{linkcolor=black}
  \tableofcontents
}


\section{Introduction}



A central problem of a general resource theory~\cite{chitambar2018quantum} is \emph{resource distillation}:
the process of extracting canonical units of resources from a given quantum state
using a choice of free operations.
The usual asymptotic approach to studying problems in quantum information theory is to assume
that there is an {unbounded} number of independent and identically distributed (i.i.d.) copies of
a source state available and that the error measure asymptotically goes to zero.
However, for many practical applications there are natural restrictions on the code length imposed,
for example, by limitations on how much quantum information can be processed coherently.
Therefore it is crucial to go beyond the asymptotic treatment and understand the intricate
tradeoff between different operational parameters of concern.

In general, suppose a quantity of interest is given by $R(\rho^{\ox n},\ve)$ which is a
function of the sequence of states $\{\rho^{\ox n}\}_{n \in \NN}$ and the error threshold $\ve$.
The typical information-theoretical study focuses on finding the asymptotic
rate $R_1(\rho):= \lim_{\ve \to 0}\lim_{n\to \infty} \frac1n R(\rho^{\ox n},\ve)$.
Together with a strong converse property, this is equivalent to expanding $R(\rho^{\ox n},\ve) = n R_1 + o(n)$
where $R_1$ is called the \emph{first order coefficient}. An estimation to the order $o(n)$ is usually
unsatisfactory in the case with limited resources, motivating us to further investigate the \emph{second order expansion} -- a refined estimation of $R(\rho^{\ox n},\ve)$
to the order $o(\sqrt{n})$. More precisely, we aim to find an
expansion $R(\rho^{\ox n},\ve) = n R_1 + \sqrt{n} R_2 + o(\sqrt{n})$ where $R_2$ is called the
 \emph{second order asymptotics}. To achieve so, we shall first consider the \emph{one-shot} scenario where the source is characterized by a single instance of \emph{unstructured} quantum state,
 and then the \emph{second order} setting in which only a \emph{finite} number of copies of i.i.d. state is given.
 
 The significance of second order expansions is multifold. First, second order expansions provide a useful approximation for finite block length $n$, refining optimal rates that typically correspond to the first order asymptotics in asymptotic expansions. Second, they determine the rate of convergence to the first order asymptotics, analogous to the relation between the Central Limit Theorem and the Berry-Esseen Theorem, as the latter determines the rate of convergence in the former. Finally, second order expansions can be used to derive the strong converse property, an information-theoretic property that rules out a possible tradeoff between the transformation error and the rate of resulting resource of a protocol.

In this paper we focus on the resource theory of quantum coherence~\cite{streltsov_2017}
and investigate in depth the second order asymptotics of the coherence distillation task under various settings. In the following two paragraphs, we summarize individually the state-of-the-art research progress
of both the unassisted and assisted coherence distillation tasks, point out the major concerns,
and present shortly our strategy to solve these concerns and the obtained results.

\paragraph{Unassisted coherence distillation and randomness extraction} Quantum coherence is a physical resource that is essential for various information processing tasks~\cite{hillery2016coherence,coles2016numerical,streltsov_2016,anshu2018quantum,diaz2018,lostaglio2015description,Frowis2011}. A series of efforts have been devoted to building this resource theory in recent years~\cite{aberg2006quantifying,gour_2008,levi_2014,baumgratz2014quantifying,streltsov_2017}, characterizing in particular the state transformations and operational uses of coherence in fundamental resource manipulation protocols~\cite{winter_2016,chitambar_2016-3,streltsov_2016,chitambar_2016-2,regula2018one,fang2018probabilistic,zhao_2018}. The task of coherence distillation in the asymptotic scenario has been first investigated in~\cite{winter_2016} and has been recently completed in~\cite{lami2019completing}. In spite of their theoretical importance, the asymptotic assumptions become unphysical in reality due to our limited access to a finite number of copies of a given state, making it necessary to look at non-asymptotic regimes. The first step in this direction is to consider the one-shot setting that distills coherence from a single instance of the prepared state. Such a scenario has been investigated in~\cite{regula2018one} and has been mostly completed in~\cite{zhao_2018}. These works estimate the one-shot distillable coherence under different free operations by their corresponding one-shot entropies. The one-shot entropies most accurately describe the operational quantity, yet they tend to be difficult to calculate for large systems, even for the independent and identically distributed (i.i.d.) case. This motivates further investigations of second order expansions.

The usual approach to deriving the second order expansion of an information task is to combine the one-shot entropy bounds on the information quantity and the second order expansion of the corresponding entropies (e.g.~\cite{hayashi2008second,tomamichel2013hierarchy,li2014second,datta2014second,tomamichel2016quantum,fang2019non,wang2019converse}). However, as second order expansions have a strong dependence on the error parameter $\ve$, the existing one-shot entropy bounds on distillable coherence~\cite{regula2018one,zhao2019one} are insufficient to get a tight second order expansion. That is, the second order coefficients in the expansion of the one-shot entropy lower and upper bounds are often mismatched.
To solve this, we introduce a variant of randomness extraction framework in the context of quantum coherence theory \cite{Yuan,hayashi2018secure}
and build an exact connection of this task with coherence distillation. Such a connection provides us a new perspective to the study of distillation process.
Finally, expanding a one-shot entropy lower bound on the extractable randomness and a one-shot entropy upper bound on the distillable coherence, we obtain the desired second order expansion.
The exact one-shot relation between randomness extraction and coherence distillation builds a bridge between two seemly different information tasks, providing new perspectives to the study of both problems. Moreover, our second order expansions initiate \emph{the first} large block length analysis in quantum coherence theory, filling an important gap in the literature.


\paragraph{Assisted coherence distillation and randomness extraction}
Chitambar~\textit{et al.}~\cite{chitambar_2016-3}
originally proposed the assisted distillation task,
arising naturally in bipartite quantum systems in which Bob aims to distill
coherent bits from the state at his hand, aided by the friendly Alice who possesses the other
system~\cite{divincenzo1998entanglement,gregoratti2003quantum,hayden2004correcting,smolin2005entanglement,winter2005environment,buscemi2005inverting,buscemi2007channel,dutil2010assisted,buscemi2013general,karumanchi2016classical,karumanchi2016quantum,lami2020assisted}.
Streltsov \textit{et al.}~\cite{streltsov2017towards} pushed forward this task by enlarging the set of
free operation classes, mimicking the bipartite operation hierarchy well studied in the entanglement resource
theory~\cite{horodecki2009quantum}. They coined this the resource theory of coherence in distributed scenarios.
Subsequent works have been carried out both on
the theoretical~\cite{zhao2017coherence,zhao20191} and
experimental~\cite{wu2017experimentally,wu2018experimental,wu2020quantum} directions.
Recently, Regula \textit{et al.}~\cite{regula2018nonasymptotic} and Vijayan \textit{et al.}~\cite{vijayan2018one}
independently studied the one-shot assisted coherence distillation via different approaches, enhancing the
asymptotic results~\cite{chitambar2016assisted,streltsov2017towards} to the one-shot realm.
The obtained results are not satisfactory due to the following reasons.
Firstly, they assumed the bipartite state pre-shared to be pure \emph{a priori}. Practically, it is common that Bob
holds an extension, rather than a purification, of Bob's state. Secondly, they considered a special class of free
operations in which Bob perform measurements and Bob perform conditional incoherent operations. It would be
meaningful to explore the power of other free classes in the distributed incoherent operation hierarchy. Lastly but
most importantly, the obtained one-shot bounds cannot lead to second order asymptotics.
As we have argued, the seek for second order expansion is necessary since it not only provides
a useful approximation for the finite block length, but also determines the rate of convergence
to its first order coefficient.

Our approach towards these concerns is largely inspired by the ideas from the unassisted scenario.
More specifically, we propose a variant of randomness
extraction framework within the context of distributed quantum coherence theory, which we call the assisted
incoherent randomness extraction. We  establish an equivalence relation between this task and the assisted coherence
distillation in the one-shot regime. It uplifts the relation between coherence distillation and incoherent randomness
extraction in the unassisted setting to the assisted scenario.
Then, we make use of this relation to draw a complete
characterization on these two tasks by proving an one-shot lower bound on the assisted extractable randomness
and an one-shot upper bound on the assisted distillable coherence,
with matching dependence on the error threshold $\ve$ in the two bounds.
This bestows us with desired second order expansion for both tasks.

\begin{table}[H]
\centering
\renewcommand{\arraystretch}{1.3}
\begin{tabular}{@{}ccc@{}}\toprule[1.5pt]
    \textbf{Assistance}
  & \textbf{Characterization}
  & \textbf{Reference} \\\midrule[0.3pt]
    \multirow{5}{*}{Unassisted case}
  & one-shot equivalence
  & \multirow{2}{*}{Theorem~\ref{thm: one-shot relation}} \\
    {}
  & $C_{d,\cO}^\ve(\rho_B) = \ell_{\cO}^{\ve}(\rho_B)$
  & {} \\\cmidrule[0.3pt](l){2-3}
    {}
  & second order expansion
  & \multirow{3}{*}{Theorem~\ref{thm: second order}} \\
    {}
  & $C_{d,\cO}^{\ve}(\rho_B^{\ox n}) = \ell_{\cO}^{\ve}(\rho_B^{\ox n}) =$
  & {} \\
    {}
  & $n D(\rho_B\|\Delta(\rho_B)) + \sqrt{n V(\rho_B\|\Delta(\rho_B))}\, \Phi^{-1}(\ve^2) + O(\log n)$
  & {} \\\midrule[0.5pt]
     \multirow{5}{*}{Assisted case}
  & one-shot equivalence
  & \multirow{2}{*}{Theorem~\ref{thm:equivalence}} \\
    {}
  & $C_{d,\QIP}^\ve(\rho_{AB}) = \ell_{\QIP}^{\ve}(\rho_{AB})$
  & {} \\\cmidrule[0.3pt](l){2-3}
    {}
  & second order expansion
  & \multirow{3}{*}{Theorem~\ref{thm: second order EA}} \\
    {}
  & $C_{d,\QIP}^{\ve}(\rho_{AB}^{\ox n}),\;
        \ell_{\cF}^{\ve}(\rho_{AB}^{\ox n}) =$
  & {} \\
    {}
  & $nD\left(\rho_{AB}\rel\Delta_B(\rho_{AB})\right)
  + \sqrt{nV\left(\rho_{AB}\rel\Delta_B(\rho_{AB})\right)}\,\Phi^{-1}(\varepsilon^2) + O(\log n)$
  & {} \\
\bottomrule[1.5pt]
\end{tabular}
\caption{List of results with references obtained in this work regarding both unassisted and assisted coherence distillation and incoherent randomness extraction tasks. Here $\cO\in\{\MIO,\DIO,\IO,\DIIO\}$ is a single partite free operation class and $\cF\in\{\LICC,\LQICC,\SI,\SQI,\QIP\}$ is a bipartite free operation class.
}
\label{tbl:list-of-results}
\end{table}

\paragraph{Outline and main contributions}
The main contributions of this paper are listed in Table~\ref{tbl:list-of-results} for references and can be
summarized as follows:
\begin{itemize}
\item In Section~\ref{sec: coherence and randomness}, we first propose a variant of randomness extraction framework in
the context of quantum coherence theory, and then establish an \emph{exact} relation between the task of randomness
extraction and the task of quantum coherence distillation in the one-shot regime. More precisely, we show that the
maximum number of secure randomness bits ($\ell_{\cO}^\ve$) extractable from a given state is equal to the maximum
number of coherent bits ($C_{d,\cO}^\ve$) that can be distilled from the same state. That is, for any quantum state
$\rho$, error tolerance $\ve \in [0,1]$, it holds $C_{d,\cO}^\ve(\rho) = \ell_{\cO}^\ve(\rho)$, where free operation
class $\cO \in \{\MIO,\DIO,\IO,\DIIO\}$ whose definitions can be found in Section~\ref{sec: resource theory of quantum
coherence}. We further give one-shot achievable and converse bounds for $\ell_{\cO}^\ve$ as well as $C_{d,\cO}^\ve$ in terms of hypothesis testing relative entropy, through which we  get the second order expansion of our information tasks.
  \item In Section~\ref{sec:assisted},
        we first propose the assisted incoherent randomness extraction task
        within the quantum coherence theory in distributed scenario,
        and then set up an equivalence relation between the assisted coherence
        distillation and the assisted incoherent randomness extraction in the one-shot regime. That is, we show
        that the maximum number of coherent bits ($C_{d,\QIP}^\varepsilon$) that can be distilled from a bipartite
        quantum state $\rho_{AB}$ is equal to the maximum number of secure randomness bits ($\ell_{\QIP}^\varepsilon$)
        of the same state, where $\QIP$ is the set of quantum-incoherent state preserving operations (cf.
        Section~\ref{sec:distributed}). Finally,
        we prove one-shot achievability bounds for $\ell_{\cF}^\varepsilon$
        and one-shot converse bounds for $C_{d,\cF}^\varepsilon$, where $\cF\in\{\LICC,\LQICC,\SI,\SQI,\QIP\}$,
        in terms of the hypothesis testing relative entropy.
        These, together with the established one-shot equivalence relation,
        yield an one-shot characterization to these two quantities.
        Invoking the second order expansion of the hypothesis testing relative entropy,
        we are able to get the second order expansion of these rates.
\item In Section~\ref{sec: strong converse}, we show the strong converse property of
      all these explored tasks -- coherence distillation and incoherent randomness extraction and their assisted
      versions -- by using the established second order expansions.
\item In Section~\ref{sec: tripartite state}, we establish a series of relations among various entropic quantities
      evaluated on the dephased tripartite state $\Delta_B(\Psi_{RAB})$, where $\ket{\Psi}_{RAB}$ is a purification
      of $\rho_{AB}$. These relations are essential in the second order analysis and may find applications in other
      quantum information processing tasks.
\item In Appendix~\ref{sec:alternative formulation},
      we conceive an alternative formulation of the assisted incoherent randomness extraction. The advantage
      of this alternative is that we can establish an equivalence relation between the assisted coherence distillation
      ($C_{d,\cF}^\ve$) and the alternative assisted incoherent randomness extraction ($\widehat{\ell}_{\cF}^\ve$) for
      all free operation classes under consideration.
\end{itemize}

\section{Preliminaries}

In this section we define several quantities and set the notation that will be used throughout this paper. We label different physical systems by capital Latin letters (e.g. $A,C,L$). We often use these labels as subscripts to guide the reader by indicating which system a mathematical object belongs to. We drop the subscripts when they are evident in the context of an expression (or if we are not talking about a specific system). The corresponding Hilbert spaces of these physical systems are denoted as $\cH_A,\cH_C,\cH_L$ respectively. The corresponding alphabet sets are denoted by the same letters in mathcal font (e.g. $\cA,\cC,\cL$). For example, $\cA := \{1,2,\cdots, |A|\}$ where $|A|$ is the dimension of Hilbert space $\cH_A$. Let $\{\ket{a}\}_{a\in \cA}$ be the computational basis on Hilbert space $\cH_A$. The set of positive semidefinite operators on $\cH_A$ is denoted as $\cP(A)$. The set of quantum states, which are positive semidefinite operators with unit trace, on $\cH_A$ is denoted as $\cS(A)$.  Denote the completely mixed state on $\cH_A$ as $\pi_A$. The identity operator and the identity map are denoted as $\1$ and $\id$ respectively. A quantum operation $\Lambda_{A\to C}$ is a completely positive trace-preserving (CPTP) map from $\cS(A)$ to $\cS(C)$. All logarithms in this work are taken base two.

For any $\rho,\sigma \in \cP$, the purified distance $P$ is defined in terms of the generalized quantum fidelity $F$ as $P(\rho,\sigma) := \sqrt{1-F(\rho,\sigma)^2}$ with $F(\rho,\sigma):=\|\sqrt{\rho}\sqrt{\sigma}\|_1 + \sqrt{(1-\tr \rho)(1-\tr \sigma)}$~\cite{tomamichel2015quantum}.
For any $\rho \in \cS$ and $\sigma \in \cP$, their quantum hypothesis testing relative entropy is defined as
$D_H^\ve(\rho\|\sigma):= -\log \min\{\tr M \sigma: \tr M \rho \geq 1-\ve , 0 \leq M \leq \1\}$~\cite{wang2012one}.
The smooth max-relative entropy is defined as $D_{\max}^\ve(\rho\|\sigma) := \min_{P(\tilde \rho,\rho) \leq \ve} \inf\{\lambda: \widetilde \rho \leq 2^\lambda \sigma\}$~\cite{datta2009min}. The second order expansions of quantum hypothesis testing relative entropy~\cite{tomamichel2013hierarchy,li2014second} and smooth max-relative entropy~\cite{tomamichel2013hierarchy} are, respectively, given by
\begin{align}
     D_H^\ve(\rho^{\ox n}\|\sigma^{\ox n}) & = n D(\rho\|\sigma) + \sqrt{n V(\rho\|\sigma)}\, \Phi^{-1}(\ve) + O(\log n),\label{eq: DH second order} \\
     D_{\max}^\ve(\rho^{\ox n}\|\sigma^{\ox n}) & = n D(\rho\|\sigma) - \sqrt{nV(\rho\|\sigma)}\, \Phi^{-1}(\ve^2) + O(\log n),\label{eq: Dmax second order}
\end{align}
where $D(\rho\|\sigma) := \tr [\rho (\log \rho - \log \sigma)]$ is the quantum relative entropy, $V(\rho\|\sigma) := \tr [\rho (\log \rho - \log \sigma)^2] - D(\rho\|\sigma)^2$ is the quantum information variance and $\Phi^{-1}$ is the inverse of the cumulative distribution function of a standard normal random variable.
The logarithms are in base $2$ throughout this paper.

For any $\sigma\in\cP$, we denote the number of distinct non-zero eigenvalues of $\sigma$ by
$\nu(\sigma)$. Let $\lambda_{\max}$ and $\lambda_{\min}$ be the maximum and minimum non-zero eigenvalues of $\sigma$,
respectively. Set $\lambda(\sigma):=\log \lambda_{\max}(\sigma) - \log \lambda_{\min}(\sigma)$. The following $\theta$
function is commonly used in one-shot information quantities:
\begin{align}\label{eq: definition of theta function}
    \theta(\sigma) := \min\{2\lceil \lambda(\sigma)\rceil, \nu(\sigma)\}.
\end{align}

\section{Quantum coherence distillation and incoherent randomness extraction}
\label{sec: coherence and randomness}

In this section we first review the resource theory of quantum coherence and the operational task of quantum coherence distillation. We then introduce a variant of randomness extraction framework in the context of quantum coherence theory.

\subsection{Resource theory of quantum coherence}
\label{sec: resource theory of quantum coherence}

The resource theory of coherence consists of the following ingredients~\cite{baumgratz2014quantifying}: the set of
\emph{free states} and the set of \emph{free operations}, that is, a set of quantum operations that do not generate
coherence. The free states, so-called incoherent states, are the quantum states which are diagonal in a given reference
orthonormal basis $\{
\ket{b}\}_{b\in\cB}$.
We will use
$\Delta_B (\cdot):=\sum_ {b\in\cB} \ket{b}\bra{b} \cdot \ket{b}\bra{b}$ to denote the diagonal map (completely
dephasing channel) in this basis~\footnote{In this section we assume that Bob instead of the commonly used Alice is the main party of all the operations, thus the symbol of our notations starts from the letter $B$. We do so to keep consistent with the later assisted scenario in which Alice serves as an assistance.}. Then the set of incoherent states is denoted as $\cI(B):= \{\rho \in \cS(B): \rho =
\Delta_B(\rho)\}$. For convenience, we will also use the cone of diagonal positive semidefinite matrices, which is
denoted as $\cI^{**}(B):= \{X \in \cP(B): X = \Delta_B(X)\}$. The maximal resource state on $\cH_B$ is the
\emph{maximally coherent state (MCS)} $\ket{\Psi_B} := 1/\sqrt{|B|}\sum_{b=1}^{|B|} \ket{b}$ with dimension $|B|$.
Denote its density operator as $\Psi_B:= \ket{\Psi_B}\bra{\Psi_B}$.  The resource theory of coherence is known not to
admit a unique physically-motivated choice of allowed free operations
\cite{winter_2016,chitambar_2016,marvian_2016,vicente_2017,streltsov_2017}. The relevant choices of free operations
that we will focus on are: \emph{maximally incoherent operations (MIO)}~\cite{aberg2006quantifying}, defined to be all
operations $\Lambda$ such that $\Lambda(\rho) \in \cI$ for every $\rho \in \cI$;  \emph{dephasing-covariant incoherent
operations (DIO)}~\cite{chitambar_2016,marvian_2016}, which are maps $\Lambda$ such that $\Delta \circ \Lambda =
\Lambda \circ \Delta$; \emph{incoherent operations (IO)}~\cite{baumgratz2014quantifying}, which admit a set of
incoherent Kraus operators $\{K_l\}$ such that ${K_l \rho K_l^\dagger} \in \cI^{**}$ for all $l$ and $\rho\in \cI$; the
intersection of $\IO$ and $\DIO$ is denoted as $\DIIO:=\DIO \cap \IO$~\cite{zhao2019one}. Another two classes of free
operations commonly studied are \emph{strictly incoherent operations (SIO)}~\cite{winter_2016} and \emph{physically
incoherent operations (PIO)}~\cite{chitambar_2016}. We do not investigate further details of SIO and PIO, as it has
been recently shown that quantum coherence is generically undistillable under these two
classes~\cite{lami2019generic,lami2019completing}. Finally, the inclusion relations between free operation classes can
be summarized as  $\DIIO \subsetneq \IO \subsetneq \MIO$, $\DIIO \subsetneq \DIO \subsetneq \MIO$, while $\IO$ and
$\DIO$ are not contained by each other.

\subsection{Framework of quantum coherence distillation}

The task of \emph{coherence distillation} aims to transform a given quantum state $\rho_B$ to a maximally coherent state $\Psi_C$ such that the obtained maximally coherent state has dimension as large as possible and that the transformation error is within a given threshold. More formally, for any free operation class $\cO$, any given state $\rho_B \in \cS(B)$ and error threshold $\ve \in [0,1]$, the \emph{one-shot distillable coherence} is defined as
\begin{align}\label{eq: definition one shot coherence}
    C_{d,\cO}^{\ve}(\rho_B):= \max \big\{\log |C|: P(\Lambda_{B\to C}(\rho_B),\Psi_C) \leq \ve, \Lambda \in \cO\big\}.
\end{align}
{Note that some previous works (e.g.~\cite{regula2018one,fang2018probabilistic,zhao2019one}) estimate the performance of distillation by the error criterion $P(\Lambda_{B\to C}(\rho_B),\Psi_C) \leq \sqrt{\ve}$. Here we use the definition in~\eqref{eq: definition one shot coherence} for convenience.}

\subsection{Framework of incoherent randomness extraction}\label{sec:incoherent randomness extraction}

\begin{figure}[!hbtp]
\centering
\includegraphics[width=0.7\textwidth]{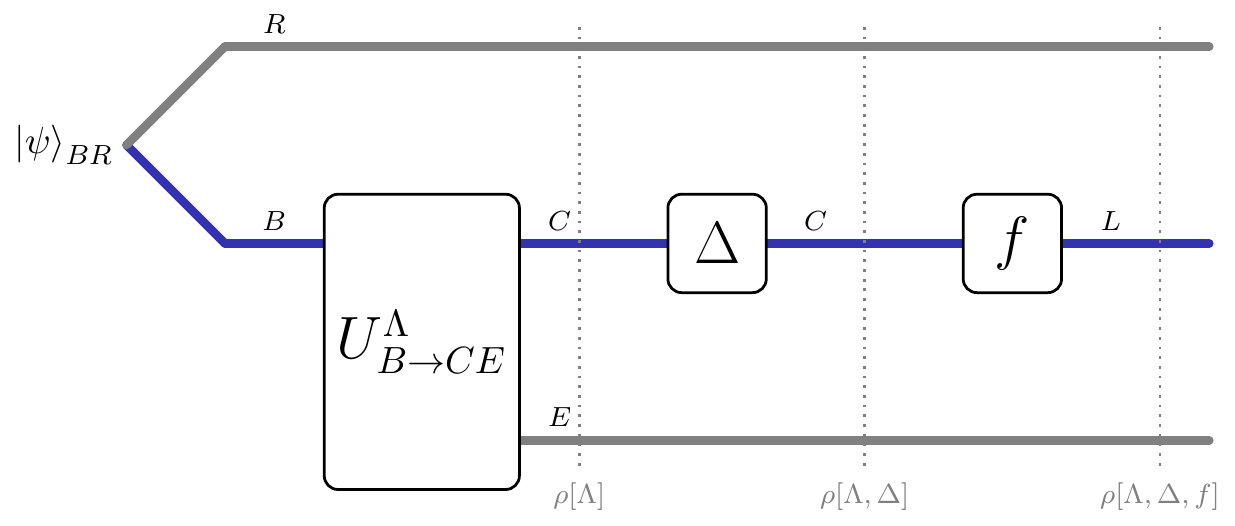}
\caption{\small Schematic diagram of an incoherent randomness extraction protocol given by $(\Lambda,\Delta,f)$. $\ket{\psi}_{BR}$ is a purification of $\rho_B$. $U^\Lambda_{B\to CE}$ 
is the Stinespring representation of $\Lambda_{B\to C} \in \cO$. $\Delta$
is a completely dephasing channel and $f$ is a hash function from alphabet $\cC$ to $\cL$. $\rho[\Lambda]$, $\rho[\Lambda,\Delta]$ and $\rho[\Lambda,\Delta,f]$ are respectively the output states in each step of the protocol. The systems in blue belong to Bob and the systems in red belong to Eve.}
\label{fig: protocol of incoherent randomness extraction 3}
\end{figure}

The task of \emph{incoherent randomness extraction} aims to obtain as many random bits as possible in Bob's laboratory that is secure from the possible adversary Eve. A general \emph{incoherent randomness extraction protocol} is characterized by a triplet $(\Lambda,\Delta,f)$, where $\Lambda$ is an incoherent operation in a certain class, $\Delta$ is a completely dephasing channel and $f$ is a hash function. A detailed procedure of randomness extraction by $(\Lambda,\Delta,f)$ is depicted in Figure~\ref{fig: protocol of incoherent randomness extraction 3}. Here, we assume that Eve has unlimited power in her system and all the information of Eve about Bob’s systems is encoded in a purification. That is, for any given quantum state $\rho_B$ held by Bob, we denote its purification as~\footnote{Note that $\ket{\psi_b}$ are not necessarily orthogonal to each other.}
\begin{align}
\ket{\psi}_{BR}:= \sum_{b \in \cB} \sqrt{p_b} \ket{b}_B\ket{\psi_b}_R    \quad \text{with}\quad \tr_R \ket{\psi}\bra{\psi}_{BR} = \rho_B,
\end{align}
where $R$ is the reference system held by Eve. Then the extraction protocol follows by three steps:
\begin{enumerate}
\item  Bob first performs a free operation $\Lambda_{B\to C} \in \cO$ on his part of the system. If he uses a quantum operation whose final state is always a specific incoherent state, say the maximally mixed state $\pi_C$,
the resulting conditional entropy equals $\log |C|$, which increases unlimitedly as $|C|$ increases.
To avoid such a trivial advantage for Bob, similar to the study of quantum key distribution \cite{tomamichel2012tight,shor2000simple,hayashi2006practical} and private capacity in quantum Shannon theory~\cite{li2009private}, we assume that the environment system $\cH_{E}$ of the free operation $\Lambda$ is also controlled by Eve. This is because it is not easy to exclude the possibility that Eve accesses a system that interacts with  Bob's operation. Hence Eve has control over two systems $\cH_R$ and $\cH_{E}$ in total.  To cover such a worst scenario, we consider the Stinespring representation $U^\Lambda$ of $\Lambda$, where $U^\Lambda$ is the isometry from  $\cH_B$ to $\cH_C \otimes \cH_E$.~\footnote{Note that a free operation does not necessarily admit a free dilation~\cite{chitambar_2016}. Thus $U^\Lambda$ is not necessarily incoherent though $\Lambda$ is free.}
After the action of $\Lambda$, the total output state is a pure state
\begin{align}
{\rho}[\Lambda]_{CER} := U^\Lambda \proj{\psi} (U^\Lambda)^\dagger.
\end{align}

\item Next, Bob performs an incoherent measurement, with respect to the computational basis, on his part of the state. The output state is then given by
\begin{align}
    \rho[\Lambda, \Delta]_{CER}:= \Delta_C(\rho[\Lambda]_{CER}).
\end{align}

\item Finally, a hash function $f$ is applied on his part of the system to extract the randomness that is secure from Eve.
For any deterministic function $f: \cC \to \cL$, and any classical-quantum (CQ) state $\rho_{CR} = \sum_{c\in \cC} t_c \ket{c}\bra{c}_C \ox \rho_{R|c}$, denote $\rho_{f(C)R} := \sum_{c\in \cC} t_c \ket{f(c)}\bra{f(c)}_L \ox \rho_{R|c}$.
Then the output state in the final step is given by
\begin{align}
    \rho[\Lambda,\Delta,f]_{LER}:= \rho[\Lambda,\Delta]_{f(C)ER}.
\end{align}
\end{enumerate}

To quantify the security of   randomness in a quantum state $\rho_{BR}$ with respect to the reference system $\cH_R$, we employ the following measure:
\begin{align}
d_{sec}({\rho}_{BR}|R):=
\min_{\sigma_R \in \cS(R)} P( {\rho}_{BR}, \pi_B \otimes {\sigma}_{R}).
\label{min1}
\end{align}
This measure quantifies the closeness of a given quantum state $\rho_{BR}$ to an ideal state $\pi_B$ which contains perfect randomness and is decoupled from the reference system $\cH_R$. Based on this security measure, the \emph{one-shot extractable randomness under given free operation} $\Lambda \in \cO$ is defined as
\begin{align}
    \ell_{\Lambda}^\ve(\rho_B):= \max_f \big\{\log |L|: d_{sec}(\rho[\Lambda,\Delta,f]_{LER}|ER) \leq \ve\big\}.
\end{align}
The \emph{one-shot extractable randomness under free operation class} $\cO$ is then defined by optimizing all possible choices of operation $\Lambda$ in $\cO$:
\begin{align}
    \ell_{\cO}^\ve(\rho_B):= \max_{\Lambda \in \cO} \ell_{\Lambda}^\ve(\rho_B).
\end{align}

Note that the identity map $\id$ is always free in coherence theory. Thus $(\id,\Delta,f)$ is a valid incoherent
randomness extraction protocol for any $f$, which was studied in~\cite[Section III]{tomamichel2013hierarchy}. That is,
Bob directly performs an incoherent measurement $\Delta$ on his given state $\rho_B$. In this case, the environment
system $\cH_E$ reduces to trivial and system $\cH_C = \cH_B$.  It has been shown in~\cite[Theorem
8]{tomamichel2013hierarchy} that for any $\eta \in (0,\ve]$,
\begin{align} \label{eq: hashing lemma}
H_{\min}^{\ve-\eta}(B|R)_{\sigma} + 4\log \eta-3
\le
\ell_{\id}^\ve(\rho_B)
\le
H_{\min}^{\ve}(B|R)_{\sigma},
\end{align}
where
\begin{align}\label{eq:conditional min}
     H_{\min}^\ve(B|R)_{\rho} := \max_{\omega_R \in \cS(R)} - D_{\max}^\ve(\rho_{BR}\|\1_B \ox \omega_R)
\end{align}
is the conditional min-entropy and $\sigma_{BR}:= \rho[\id,\Delta]$ is the dephased classical-quantum state in the protocol.

\begin{remark}
On the one hand, the randomness extraction protocol without using incoherent operations (e.g. the one considered in~\cite[Figure 1.(b)]{zhao2019one}) is too restrictive, as such a framework does not make good use of free resources at hand. On the other hand, an extraction protocol that does not consider Eve's attack on the free operation $\Lambda$ is too trivial because Bob can generate an arbitrary amount of randomness by using a free replacer channel $\Lambda(\cdot)=\pi_C$. Hence, the setup in Figure~\ref{fig: protocol of incoherent randomness extraction 3} contributes to a reasonable randomness extraction framework in the context of quantum coherence theory.
\end{remark}

\begin{remark}
Note that performing incoherent unitary operations in the first step does not make any difference with the protocol by identity map $\id$. This justifies our consideration of general incoherent operations. More precisely, for any incoherent unitary $U$, it holds
$\ell_{\id}^\ve(\rho_B) = \ell_{\cU}^\ve(\rho_B)$ with $\cU(\cdot):= U(\cdot) U^\dagger$.
To see this, recall that any incoherent unitary on $\cH_B$ can be written as $U_B=\sum_{b\in \cB} e^{i\theta_b}\ket{g(b)}\bra{b}$ with a permutation $g$ and phase factors $e^{i\theta_b}$~\cite[Section II.A.2]{streltsov_2017}. Then a direct calculation gives that $\rho[\cU,\Delta,f] = \sum_{b \in \cB} p_b \ket{f(g(b))}\bra{f(g(b))}\ox \ket{\psi_b}\bra{\psi_b} = \rho[\id,\Delta,f\circ g]$, implying the equivalence of extraction protocols $(\cU,\Delta,f)$ and $(\id,\Delta,f\circ g)$. Then $\ell_{\id}^\ve(\rho_B) = \ell_{\cU}^\ve(\rho_B)$ follows by definition.
\end{remark}

\begin{remark}
The randomness extraction framework proposed here is also closely related to the one in~\cite{hayashi2018secure}.
But we should note the following subtle differences: (i) the independence of the resulting randomness is quantified using the trace distance in~\cite{hayashi2018secure} instead of the purified distance we use in~\eqref{min1}. Though the trace distance can give us the nice property of universal composability (see e.g.~\cite{Renner2005}),
the choice of purified distance in~\eqref{min1} is crucial for obtaining the exact relation between coherence distillation and randomness extraction in the next section, which then becomes a key ingredient to proving the second order expansions;
(ii)  while the paper~\cite{hayashi2018secure} has a discussion on the large block length regime, its analysis is focused on the exponential decreasing rate
for the amount of the leaked information $\ve$,
but we will put more focus on the rate of extractable randomness in the one-shot and large block length regime with constant $\ve$.

The references \cite{Yuan} and \cite[Section VI]{hayashi2018secure} also address the randomness extraction via incoherent operations but with additional constraints on Eve, which are different from our setting here.

\end{remark}

\subsection{Relation between coherence distillation and randomness extraction}
\label{sec: one shot relation}

In this section we present an exact connection between incoherent randomness extraction and coherence distillation, the proof of which showcases a nice one-to-one correspondence between protocols in these tasks.

We first present a method of constructing a coherence distillation protocol from a given 
randomness extraction protocol with the same performance, which will be used in showing the 
one-shot connection. We note that this method has essentially been 
proposed in the proof of~\cite[Theorem 4]{zhao2019one}, using the trace distance in 
the security measure.
Whereas in our formulation, the purified distance is preferred and is crucial in establishing tight second order asymptotics. 

\begin{proposition}\label{prop: IO protocol}
For any quantum state $\rho_B\in\cS(B)$, error tolerance $\ve \in [0,1]$ and incoherent randomness extraction protocol $(\id,\Delta,f)$ such that $d_{sec}(\rho[\id,\Delta,f]_{LR}|R) \leq \ve$, there exists a quantum operation $\Gamma: \cS(B) \to \cS(L)$ such that $\Gamma \in \DIIO$ and $P(\Gamma_{B\to L}(\rho_B), \Psi_L ) \leq \ve$.
\end{proposition}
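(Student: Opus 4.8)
The plan is to reduce the statement to a single fidelity inequality and then produce $\Gamma$ by ``recohering'' an optimally aligned purification. First I would unfold the protocol for the trivial operation $\id$: here the environment is trivial and $\cH_C=\cH_B$, so the dephased-and-hashed state is the classical--quantum state
\begin{align}
\rho[\id,\Delta,f]_{LR}=\sum_{\ell\in\cL}\proj{\ell}_L\otimes\omega_R^{(\ell)},\qquad \omega_R^{(\ell)}:=\sum_{b:f(b)=\ell}p_b\proj{\psi_b}_R .
\end{align}
Since both $\rho[\id,\Delta,f]_{LR}$ and $\pi_L\otimes\sigma_R$ are block diagonal in the $L$-basis, the fidelity factorizes as $F(\rho[\id,\Delta,f]_{LR},\pi_L\otimes\sigma_R)=\tfrac1{\sqrt{|L|}}\sum_\ell F(\omega_R^{(\ell)},\sigma_R)$. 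Using $d_{sec}\le\ve$, I fix a minimizer $\sigma_R$, so that $F(\rho[\id,\Delta,f]_{LR},\pi_L\otimes\sigma_R)\ge\sqrt{1-\ve^2}$. As $\Psi_L$ is pure and $P=\sqrt{1-F^2}$, it then suffices to build $\Gamma\in\DIIO$ with $F(\Gamma(\rho_B),\Psi_L)\ge F(\rho[\id,\Delta,f]_{LR},\pi_L\otimes\sigma_R)$.

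For the construction I would fix one purification $\ket{\sigma_0}_{RB''}$ of $\sigma_R$ and, block by block, the Uhlmann-optimal purification $\ket{w^{(\ell)}}_{RB''}=\sum_{b:f(b)=\ell}\sqrt{p_b}\ket{\psi_b}_R\ket{\phi_b^{(\ell)}}_{B''}$ of $\omega_R^{(\ell)}$, with the phase chosen so that $\braket{\sigma_0}{w^{(\ell)}}=F(\omega_R^{(\ell)},\sigma_R)\ge0$ and with $\{\ket{\phi_b^{(\ell)}}\}_{b:f(b)=\ell}$ orthonormal for each $\ell$. Then I would define $\Gamma$ through the Kraus operators $K_k:=\sum_{b}\langle k|\phi_b^{(f(b))}\rangle\,\ket{f(b)}_L\bra{b}_B$, indexed by a basis $\{\ket{k}\}$ of $\cH_{B''}$. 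Conceptually this is: apply the hash coherently together with a copy, $\ket{b}\mapsto\ket{f(b)}_L\ket{f(b)}_{L''}\ket{\phi_b^{(f(b))}}_{B''}$; discard $B''$; merge $L''$ into $L$ by the incoherent permutation $\ket{\ell}_L\ket{m}_{L''}\mapsto\ket{\ell}_L\ket{m-\ell}_{L''}$; and discard $L''$.

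The verification then splits into membership and the fidelity bound. For membership, each $K_k$ sends $\ket{b}$ to a multiple of the single basis vector $\ket{f(b)}$, so $\Gamma\in\IO$, and the completeness relation $\sum_k K_k^\dagger K_k=\1$ is exactly the within-block orthonormality of the $\ket{\phi_b^{(\ell)}}$; moreover $\Delta\circ\Gamma=\Gamma\circ\Delta$ because both maps send $\rho\mapsto\sum_\ell\big(\sum_{b:f(b)=\ell}\rho_{bb}\big)\proj{\ell}$, giving $\Gamma\in\DIO$ and hence $\Gamma\in\DIIO$. For the fidelity, a direct computation gives the matrix elements $\langle\ell|\Gamma(\rho_B)|\ell'\rangle=\braket{w^{(\ell')}}{w^{(\ell)}}$, whence $F(\Gamma(\rho_B),\Psi_L)^2=\tfrac1{|L|}\big\|\sum_\ell\ket{w^{(\ell)}}\big\|^2$. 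Cauchy--Schwarz against the unit vector $\ket{\sigma_0}$ yields $\big\|\sum_\ell\ket{w^{(\ell)}}\big\|\ge\sum_\ell\braket{\sigma_0}{w^{(\ell)}}=\sum_\ell F(\omega_R^{(\ell)},\sigma_R)=\sqrt{|L|}\,F(\rho[\id,\Delta,f]_{LR},\pi_L\otimes\sigma_R)$, which closes the argument.

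The step I expect to be the main obstacle is producing a \emph{free} $\Gamma$ that actually attains $F(\rho[\id,\Delta,f]_{LR},\pi_L\otimes\sigma_R)$: a naive incoherent map that merely merges the preimage classes $f^{-1}(\ell)$ is provably insufficient (it can give fidelity strictly below this value), because the Uhlmann alignment realizing it generally requires a \emph{coherent} rotation on the purifying register. The resolution I would stress is that this coherence is entirely confined to the register $B''$ that is ultimately discarded, and that the Gram matrix $\big(\langle\phi_{b'}^{(f(b'))}|\phi_b^{(f(b))}\rangle\big)_{b,b'}$ is positive semidefinite --- which is precisely the condition guaranteeing that the resulting channel admits the incoherent Kraus representation $\{K_k\}$ above. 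Establishing this compatibility, together with the classical--quantum block structure that lets the alignment be done block-wise against a single fixed $\ket{\sigma_0}$, is the delicate heart of the proof.
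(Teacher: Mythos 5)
Your proposal is correct and is essentially the paper's own proof in different notation: the block-wise Uhlmann alignment of the purifications of the $\omega_R^{(\ell)}$ against a single fixed purification of the optimal $\sigma_R$ is realized in the paper by the controlled unitary $U=\sum_{\ell}\proj{\ell}_L\ox U_\ell$ composed with the incoherent isometry $U_f$, your Kraus operators $K_k$ coincide with the paper's $\bra{b}U\,U_f$, and your Cauchy--Schwarz step against $\ket{\sigma_0}$ is exactly the paper's data-processing inequality for the fidelity under the partial trace. The $\DIIO$ membership check (incoherent Kraus action plus commutation with $\Delta$) is likewise the same.
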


The proof can be found in Appendix~\ref{sec:proposition-one-proof}.

\begin{theorem}[Exact one-shot connection]\label{thm: one-shot relation}
For any quantum state $\rho_B \in \cS(B)$, error tolerance $\ve \in [0,1]$ and free operation class $\cO \in \{\MIO,\DIO,\IO,\DIIO\}$, the following equation holds
\begin{align}\label{eq: one-shot relation}
     C_{d,\cO}^\ve(\rho_B) = \ell_{\cO}^{\ve}(\rho_B).
\end{align}
\end{theorem}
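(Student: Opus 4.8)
The plan is to establish the equality $C_{d,\cO}^\ve(\rho_B) = \ell_{\cO}^{\ve}(\rho_B)$ by proving two inequalities, each via a direct construction that converts a protocol for one task into a protocol for the other with matching performance. The two-sided argument exploits the one-to-one correspondence between incoherent randomness extraction protocols and coherence distillation protocols, which is foreshadowed by Proposition~\ref{prop: IO protocol}.

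First I would prove $\ell_{\cO}^{\ve}(\rho_B) \le C_{d,\cO}^\ve(\rho_B)$. Fix a free operation $\Lambda_{B\to C}\in\cO$ achieving the optimum in $\ell_\cO^\ve$, together with an optimal hash function $f:\cC\to\cL$. The key observation is that the composite protocol $(\Lambda,\Delta,f)$ on the purification $\ket{\psi}_{BR}$ produces an output state $\rho[\Lambda,\Delta,f]_{LER}$ whose security measure $d_{sec}(\cdot|ER)\le\ve$ controls its purified distance from $\pi_L\ox\sigma_{ER}$. Since the purification/Stinespring dilation can be folded back into an honest quantum operation from $B$ to $L$, I would mimic the construction behind Proposition~\ref{prop: IO protocol}: the chain $\Lambda$ followed by the dephasing $\Delta$ and the hash $f$ defines a map $\Gamma_{B\to L}$ lying in the same class $\cO$ (or a subclass thereof, exploiting the inclusion $\DIIO\subseteq\cO$), and the security bound $d_{sec}\le\ve$ translates directly into the distillation bound $P(\Gamma_{B\to L}(\rho_B),\Psi_L)\le\ve$. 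This gives a distillation protocol with $\log|L|$ coherent bits, so $C_{d,\cO}^\ve\ge\ell_\cO^\ve$. The main technical point here is that the security measure against the \emph{enlarged} adversary (holding both $R$ and $E$) is exactly what is needed to certify that the \emph{distilled} state on $L$ is close to maximally coherent; the choice of purified distance in~\eqref{min1}, rather than trace distance, is what makes this translation an equality rather than an inequality with a loss.

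For the reverse inequality $C_{d,\cO}^\ve(\rho_B)\le\ell_\cO^\ve(\rho_B)$, I would start from an optimal distillation map $\Lambda_{B\to C}\in\cO$ with $P(\Lambda_{B\to C}(\rho_B),\Psi_C)\le\ve$ and $\log|C|=C_{d,\cO}^\ve(\rho_B)$, and read it as a randomness extraction protocol. Taking $f=\id$ on $\cC$, the output $\Delta_C(\Lambda(\rho_B))$ of a successful distillation is close to $\Delta_C(\Psi_C)=\pi_C$, the maximally mixed (perfectly random) state; passing to the Stinespring dilation $U^\Lambda$ and including the environment $E$ in Eve's hands, the purity of the global state $\rho[\Lambda]_{CER}$ forces the dephased output to be decoupled from $ER$ whenever it is close to $\Psi_C$. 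Concretely, $P(\Lambda(\rho_B),\Psi_C)\le\ve$ should be shown to imply $d_{sec}(\rho[\Lambda,\Delta,\id]_{CER}|ER)\le\ve$, so that $\ell_\cO^\ve\ge\log|C|$. The heart of the argument is a lemma relating closeness to the maximally coherent state $\Psi_C$ (a statement about coherence on $C$) to the decoupling/security condition (a statement about $C$ versus $ER$); this is where the monogamy-type structure of the pure global state and the definition of $d_{sec}$ via a minimization over $\sigma_R$ must be combined.

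The step I expect to be the main obstacle is the precise matching of the error parameter $\ve$ across both directions without any slack. It is routine to get each inequality up to a smoothing loss, but the paper's emphasis on using the purified distance (rather than trace distance) signals that the delicate part is showing that the \emph{same} $\ve$ appears on both sides. I would therefore invest the care in the equivalence between the purified distance $P(\Gamma(\rho_B),\Psi_L)$ and the security quantity $d_{sec}(\rho[\Lambda,\Delta,f]|ER)$ — verifying that the optimizing $\sigma_{ER}$ in $d_{sec}$ corresponds exactly to the environment marginal of a successful distillation, so that the two distance-based error criteria coincide identically. Establishing this exact correspondence, together with checking that $\Gamma$ indeed stays in the prescribed class $\cO$ (using the inclusion relations $\DIIO\subsetneq\IO,\DIO\subsetneq\MIO$ recalled in Section~\ref{sec: resource theory of quantum coherence}), completes the proof.
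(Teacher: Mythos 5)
Your proposal follows essentially the same route as the paper: one direction reads an optimal distillation map as the extraction protocol $(\Lambda,\Delta,\id)$ and bounds $d_{sec}$ via Uhlmann's theorem plus data-processing of the purified distance under $\Delta_C$ (using $\Delta_C(\Psi_C)=\pi_C$), while the other direction reinterprets $(\Lambda,\Delta,f)$ on $\rho_B$ as $(\id,\Delta,f)$ on $\Lambda(\rho_B)$, invokes Proposition~\ref{prop: IO protocol} to obtain $\Gamma\in\DIIO$, and uses $\DIIO\subseteq\cO$ to conclude $\Gamma\circ\Lambda\in\cO$. One caution on wording: the distillation map is \emph{not} the literal chain ``$\Lambda$ then $\Delta$ then $f$'' (that output is incoherent and hence far from $\Psi_L$); it is the coherent construction of Proposition~\ref{prop: IO protocol} built from the isometry $U_f$ and Uhlmann correction unitaries, which you correctly cite but describe loosely.
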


\begin{remark}
    Recall that there is an exact characterization of one-shot distillable coherence under MIO and DIO operations~\cite[Proposition 2]{regula2018one}. Together with the above~\eqref{eq: one-shot relation}, we have the following chain of equalities:
    \begin{align}\label{eq: one shot tmp1}
        C_{d,\MIO}^\ve(\rho_B) = C_{d,\DIO}^\ve(\rho_B) = \ell_{\MIO}^\ve(\rho_B) = \ell_{\DIO}^\ve(\rho_B) = \min_{\substack{X = \Delta(X)\\ \tr X = 1}} D_H^\ve(\rho_B\|X) - \delta,
    \end{align}
    where the minimum is taken over all Hermitian operators $X$ on $\cH_B$ satisfying the conditions and $\delta \geq 0$ is the least number such that the solution corresponds to the logarithm of an integer.
\end{remark}

\begin{remark}
A one-shot relation between distillable coherence and extractable randomness has appeared in~\cite[Equation (80)]{zhao2019one}. Unlike the precise equation in~\eqref{eq: one-shot relation}, the relation in~\cite{zhao2019one} is given in the form of one-shot lower and upper bounds with unmatched error dependence and additional correction terms. However, the clean form in~\eqref{eq: one-shot relation} plays a pivotal role in deriving the second order expansions where the error dependence matters.
\end{remark}

\noindent \textbf{[Proof of Theorem~\ref{thm: one-shot relation}]}
We first show the direction $\ell^{\ve}_{\cO}(\rho_B) \ge C_{d,\cO}^{\ve}(\rho_B).$ Denote $C_{d,\cO}^{\ve}(\rho_B) = \log |C|$ and suppose that this rate is achieved by a free operation $\Lambda: \cS(B) \to \cS(C)$ such that $P(\Lambda (\rho_B),\Psi_C ) \le \ve$. Consider a randomness extraction protocol $(\Lambda,\Delta,\id)$. Note that $\rho[\Lambda]_{CER}$ is a purification of $\Lambda(\rho_B)$. By Uhlmann's theorem~\cite{uhlmann1976transition} there exists an extension of $\Psi_C$, denoted as $\Psi_C \ox \sigma_{ER}^*$, such that $P(\Lambda(\rho_B),\Psi_C) = P(\rho[\Lambda]_{CER},\Psi_C\ox \sigma_{ER}^*)$. Then we have
\begin{align}
d_{sec}(\rho[\Lambda,\Delta,\id]_{CER}|ER) = & \min_{\sigma_{ER} \in \cS(ER)}  P\Big(\Delta_C (\rho[\Lambda]_{CER}),
\pi_C \otimes \sigma_{ER} \Big) \\
\le & P\Big(\Delta_C (\rho[\Lambda]_{CER}), \pi_C \otimes \sigma_{ER}^* \Big)
\\
= & P\Big(\Delta_C (\rho[\Lambda]_{CER}), \Delta_C(\Psi_C \otimes \sigma_{ER}^*) \Big) \\
\le & P\Big(\rho[\Lambda]_{CER},\Psi_C \otimes \sigma_{ER}^* \Big) \\
= & P\big(\Lambda (\rho_{A}), \Psi_C\big) \\
\le & \ve,
\end{align}
where the second equality follows by $\Delta_C(\Psi_C) = \pi_C$, the second inequality follows by the data-processing inequality of purified distance, the third equality follows from the assumption of $\sigma^*_{ER}$. Thus we know that $\log |C|$ is an achievable randomness extraction rate, which implies $\ell_{\cO}^\ve(\rho_B) \geq \log|C| = C_{d,\cO}^\ve(\rho_B)$.

For the other direction, we denote $\ell^{\ve}_{\cO}(\rho_B) = \log |L| $ and suppose that this rate is achieved by an extraction protocol $(\Lambda,\Delta,f)$ with $\Lambda \in \cO$. Notice that applying the protocol $(\Lambda,\Delta,f)$ on quantum state $\rho_B$ is the same as applying a protocol $(\id,\Delta,f)$ on $\Lambda(\rho_B)$ with purification $\rho[\Lambda]_{CER}$ and reference system $ER$. By Proposition~\ref{prop: IO protocol} there exists a quantum operation $\Gamma: \cS(C) \to \cS(L)$ such that $\Gamma \in \DIIO$ and $P(\Gamma(\Lambda(\rho_B)),\Psi_L) \leq \ve$. Since $\Gamma \in \DIIO \subseteq \cO$ and $\Lambda \in \cO$, we have $\Gamma\circ \Lambda \in \cO$ and this operation achieves the distillation rate $\log |L|$. This implies $C_{d,\cO}^{\ve}(\rho_B) \geq \log |L| = \ell_{\cO}^{\ve}(\rho_B)$ and completes the proof.
\hfill $\blacksquare$

\subsection{Second order analysis}
\label{sec: second order}

In this section we discuss the second order expansions of distillable coherence and extractable randomness. For this, we first show a one-shot characterization of distillable coherence by the hypothesis testing relative entropy.

\begin{proposition}[One-shot characterization]\label{prop: one shot estimation}
    For any quantum state $\rho_B \in \cS(B)$, free operation class $\cO \in \{\MIO,\DIO,\IO,\DIIO\}$, error tolerance $\ve \in (0,1)$ and  $0<\eta<\ve$, $0 < \delta < \min\{({\ve}-\eta)^2/3,1-({\ve}-\eta)^2\}$, it holds
    \begin{align}\label{eq: one shot estimation}
        D_H^{({\ve}-\eta)^2-2\delta}(\rho_B\|\Delta(\rho_B)) - c(\rho_B,\ve,\delta,\eta) \leq C_{d,\cO}^{\ve}(\rho_B) \leq D_H^{\ve^2}(\rho_B\|\Delta(\rho_B)),
    \end{align}
    where $c(\rho_B,\ve,\delta,\eta)=\log \theta(\rho_B) + \log \theta(\Delta(\rho_B)) + \log (({\ve}-\eta)^2-\delta)-\log (\delta^5 \eta^4 ({\ve}-\eta)^2 (1-({\ve}-\eta)^2+\delta)) + 11$.
\end{proposition}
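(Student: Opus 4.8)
The plan is to establish the two inequalities separately, reducing each to an extremal member of the family $\{\MIO,\DIO,\IO,\DIIO\}$. For the upper (converse) bound it suffices to treat $\MIO$, since it is the largest of the four classes and therefore $C_{d,\cO}^{\ve}(\rho_B) \le C_{d,\MIO}^{\ve}(\rho_B)$ for every $\cO$ under consideration. For the lower (achievability) bound it suffices to treat the smallest class $\DIIO$, because $C_{d,\cO}^{\ve}(\rho_B) \ge C_{d,\DIIO}^{\ve}(\rho_B)$; invoking the equivalence of Theorem~\ref{thm: one-shot relation} together with the fact that the identity channel $\id$ is free in every class, I would further bound $C_{d,\DIIO}^{\ve}(\rho_B) = \ell_{\DIIO}^{\ve}(\rho_B) \ge \ell_{\id}^{\ve}(\rho_B)$, reducing achievability to the concrete protocol $(\id,\Delta,f)$.

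For the converse, suppose $\Lambda \in \cO \subseteq \MIO$ attains $C_{d,\cO}^{\ve}(\rho_B) = \log|C|$ with $P(\Lambda(\rho_B),\Psi_C) \le \ve$. I would construct a feasible operator for the hypothesis test between $\rho_B$ and $\Delta(\rho_B)$ by pulling the projector $\Psi_C$ back through the adjoint channel, namely $M := \Lambda^\dagger(\Psi_C)$. Complete positivity and unitality of $\Lambda^\dagger$ give $0 \le M \le \1$, while $\tr[M\rho_B] = \tr[\Psi_C\Lambda(\rho_B)] = F(\Lambda(\rho_B),\Psi_C)^2 \ge 1-\ve^2$, where the last step uses $P^2 = 1-F^2$; this is exactly where the threshold squares to $\ve^2$. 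Since $\Lambda$ maps the incoherent state $\Delta(\rho_B)$ to a diagonal state and $\langle\Psi_C|\tau|\Psi_C\rangle = 1/|C|$ for any unit-trace diagonal $\tau$, I obtain $\tr[M\Delta(\rho_B)] = \tr[\Psi_C\Lambda(\Delta(\rho_B))] = 1/|C|$. Hence $M$ witnesses $D_H^{\ve^2}(\rho_B\|\Delta(\rho_B)) \ge -\log(1/|C|) = \log|C|$, which is the claimed upper bound.

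For achievability I would start from the hashing estimate~\eqref{eq: hashing lemma}, giving $\ell_{\id}^{\ve}(\rho_B) \ge H_{\min}^{\ve-\eta}(B|R)_\sigma + 4\log\eta - 3$ for the dephased classical-quantum state $\sigma_{BR} = \rho[\id,\Delta]$ and any $\eta \in (0,\ve)$. The decisive remaining task is to convert the smooth conditional min-entropy $H_{\min}^{\ve-\eta}(B|R)_\sigma$ into the hypothesis testing relative entropy $D_H^{(\ve-\eta)^2-2\delta}(\rho_B\|\Delta(\rho_B))$ up to the stated correction. I would unfold $H_{\min}^{\ve-\eta}(B|R)_\sigma = \max_{\omega_R} -D_{\max}^{\ve-\eta}(\sigma_{BR}\|\1_B\ox\omega_R)$, select the natural candidate $\omega_R$ (the marginal carrying the spectrum of $\rho_B$, for which the first-order value already equals $D(\rho_B\|\Delta(\rho_B))$), and then pass from the purified-distance smoothing of $D_{\max}$ to the error-probability formulation of $D_H$. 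This passage again squares the smoothing parameter ($\ve-\eta \mapsto (\ve-\eta)^2$), with $\delta$ absorbing the slack of the accompanying gentle-measurement-type estimates, and the factors $\theta(\rho_B)$ and $\theta(\Delta(\rho_B))$ enter when pinching the operators onto their eigenbases, the multiplicity/eigenvalue-range quantity $\theta$ controlling the cost of flattening.

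The hard part is precisely this conversion lemma: tracking all error parameters at once while holding the additive overhead down to $\log\theta(\rho_B)+\log\theta(\Delta(\rho_B))$ plus the explicit $\delta,\eta,\ve$ logarithms. Once it is in hand, substituting back into the hashing bound and collecting additive constants (the $4\log\eta-3$ from hashing merging with the $\eta^4$ term, and the remaining $-8$ from the conversion combining to $-11$) reproduces $c(\rho_B,\ve,\delta,\eta)$ exactly. I would close by remarking that, in the i.i.d.\ setting, letting $\eta,\delta \to 0$ pinches both bounds toward $D_H^{\ve^2}(\rho_B^{\ox n}\|\Delta(\rho_B)^{\ox n})$ up to $O(\log n)$, which is what feeds into Theorem~\ref{thm: second order}.
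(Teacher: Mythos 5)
Your architecture matches the paper's: the converse is reduced to $\MIO$ by inclusion, and the achievability is reduced, via Theorem~\ref{thm: one-shot relation} and the hashing bound~\eqref{eq: hashing lemma}, to converting $H_{\min}^{\ve-\eta}(B|R)_\sigma$ into $D_H^{(\ve-\eta)^2-2\delta}(\rho_B\|\Delta(\rho_B))$. Two points of comparison. For the upper bound you give a direct, self-contained argument ($M:=\Lambda^\dagger(\Psi_C)$, with $\tr[M\rho_B]=F^2\geq 1-\ve^2$ and $\tr[M\Delta(\rho_B)]=1/|C|$ since $\Lambda\in\MIO$ sends $\Delta(\rho_B)$ to a diagonal state), whereas the paper simply cites the exact $\MIO$ characterization~\eqref{eq: one shot tmp1} from prior work and picks $X=\Delta(\rho_B)$; your route is sound and is in fact the same adjoint-channel argument the paper deploys later for the assisted converse (Lemma~\ref{lemma:converse}), so nothing is lost. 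For the lower bound, your chain $C_{d,\cO}^{\ve}=\ell_{\cO}^{\ve}\geq\ell_{\id}^{\ve}\geq H_{\min}^{\ve-\eta}(B|R)_\sigma+4\log\eta-3$ is exactly the paper's, and the remaining conversion you flag as ``the hard part'' is precisely the content of~\eqref{eq: Hmin DH} in Proposition~\ref{prop:relations}. Here your sketch is directionally right (choose $\omega_R=\sigma_R$, square the smoothing parameter, pay $\log\theta$ for flattening spectra) but underestimates the one genuinely nontrivial ingredient: a standard $D_{\max}\to D_H$ conversion applied to $(\sigma_{BR},\1_B\ox\sigma_R)$ would produce a hypothesis-testing quantity for \emph{that} pair, not for $(\rho_B,\Delta(\rho_B))$. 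The switch of operator pairs is achieved in the paper by an exact one-shot duality for the information spectrum relative entropy of the Nussbaum--Szko\l{}a distributions, Eq.~\eqref{eq: reduction connection 1}, which uses the purification $\psi_{BR}$ and Schmidt symmetry; your remark that ``the first-order value already equals $D(\rho_B\|\Delta(\rho_B))$'' gestures at this but does not supply the one-shot statement. Without that duality the proof does not close, so you should either prove it or cite it explicitly; with it, your parameter bookkeeping ($4\log\eta-3$ merging with the $\eta^4$ term, the residual $-8$ giving $-11$) reproduces $c(\rho_B,\ve,\delta,\eta)$ as claimed.
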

\begin{proof}
The upper bound follows from a known result in~\cite[Proposition 2]{regula2018one} (see also~\eqref{eq: one shot tmp1}).
Choosing a feasible solution $X$ as $\Delta(\rho)$, we have
$C_{d,\cO}^{\ve}(\rho_B) \leq C_{d,\MIO}^{\ve}(\rho_B) \leq  D_H^{\ve^2}(\rho_B\|\Delta(\rho_B))$
where the first inequality follows by the fact that $\cO \subseteq \MIO$. As for the lower bound, we have $C_{d,\cO}^{\ve}(\rho_B) = \ell_{\cO}^\ve(\rho_B) \geq \ell_{\id}^\ve(\rho_B) \geq H_{\min}^{\ve-\eta}(B|R)_{\sigma} + 4\log \eta-3$ where the first equality follows from Theorem~\ref{thm: one-shot relation}, the first inequality follows since $\id \in \cO$, the second inequality follows from~\eqref{eq: hashing lemma}. By~\eqref{eq: Hmin DH} of Proposition~\ref{prop:relations} we can further lower bound the conditional min-entropy by
the hypothesis testing relative entropy. This gives us the one-shot lower bound stated in~\eqref{eq: one shot
estimation}.
\end{proof}

Now we are ready to present our main result in this section.

\begin{theorem}[Second order expansion]\label{thm: second order}
    For any quantum state $\rho_B \in \cS(B)$, error tolerance $\ve \in (0,1)$ and free operation class $\cO \in \{\MIO,\DIO,\IO,\DIIO\}$, the following second order expansions hold
    \begin{align}\label{eq: second order result thm}
        C_{d,\cO}^{\ve}(\rho_B^{\ox n}) =
        \ell_{\cO}^{\ve}(\rho_B^{\ox n}) = n D(\rho_B\|\Delta(\rho_B)) + \sqrt{n V(\rho_B\|\Delta(\rho_B))}\, \Phi^{-1}(\ve^2) + O(\log n),
    \end{align}
    where $\Phi^{-1}$ denotes the inverse of the cumulative distribution function of a standard normal random variable.
\end{theorem}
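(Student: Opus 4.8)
The plan is to combine the three ingredients assembled above: the exact one-shot identity $C_{d,\cO}^\ve = \ell_\cO^\ve$ (Theorem~\ref{thm: one-shot relation}), the one-shot hypothesis-testing sandwich of Proposition~\ref{prop: one shot estimation}, and the i.i.d.\ second order expansion~\eqref{eq: DH second order} of $D_H^\ve$. The first equality in~\eqref{eq: second order result thm} requires no work: Theorem~\ref{thm: one-shot relation} holds for an arbitrary state, so applying it to $\rho_B^{\ox n}$ (equipped with the $n$-fold reference basis and dephasing map $\Delta^{\ox n}$) gives $C_{d,\cO}^\ve(\rho_B^{\ox n}) = \ell_\cO^\ve(\rho_B^{\ox n})$ for every $n$. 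It therefore suffices to establish the displayed asymptotic expansion for $C_{d,\cO}^\ve(\rho_B^{\ox n})$. Throughout I use $\Delta^{\ox n}(\rho_B^{\ox n}) = (\Delta(\rho_B))^{\ox n}$, so that Proposition~\ref{prop: one shot estimation} applied to $\rho_B^{\ox n}$ produces hypothesis-testing entropies of the i.i.d.\ form $D_H^{s}(\rho_B^{\ox n}\|(\Delta(\rho_B))^{\ox n})$, to which~\eqref{eq: DH second order} applies with $\rho\leftarrow\rho_B$ and $\sigma\leftarrow\Delta(\rho_B)$.

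For the upper bound, Proposition~\ref{prop: one shot estimation} gives $C_{d,\cO}^\ve(\rho_B^{\ox n}) \le D_H^{\ve^2}(\rho_B^{\ox n}\|(\Delta(\rho_B))^{\ox n})$ with the \emph{fixed} error parameter $\ve^2$. Feeding this into~\eqref{eq: DH second order} with error $\ve^2$ yields exactly $n D(\rho_B\|\Delta(\rho_B)) + \sqrt{n V(\rho_B\|\Delta(\rho_B))}\,\Phi^{-1}(\ve^2) + O(\log n)$, matching the claim.

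The lower bound is the delicate half. Proposition~\ref{prop: one shot estimation} gives, for admissible $\eta,\delta$,
\begin{align}
    C_{d,\cO}^\ve(\rho_B^{\ox n}) \ge D_H^{(\ve-\eta)^2 - 2\delta}\big(\rho_B^{\ox n}\,\big\|\,(\Delta(\rho_B))^{\ox n}\big) - c(\rho_B^{\ox n},\ve,\delta,\eta).
\end{align}
Here I would let $\eta=\eta_n$ and $\delta=\delta_n$ vanish polynomially, say $\eta_n=\delta_n=1/n$ (legal for all large $n$, since $\eta_n<\ve$ and $\delta_n<\min\{(\ve-\eta_n)^2/3,\,1-(\ve-\eta_n)^2\}$ hold once $n$ is large, as the right-hand sides tend to $\ve^2/3>0$ and $1-\ve^2>0$). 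The correction term stays logarithmic: the number of distinct nonzero eigenvalues of $\rho_B^{\ox n}$ is polynomial in $n$, so $\theta(\rho_B^{\ox n})\le\nu(\rho_B^{\ox n})=\mathrm{poly}(n)$ and hence $\log\theta(\rho_B^{\ox n})=O(\log n)$, and likewise for $\Delta(\rho_B)^{\ox n}$; the remaining pieces of $c$ contribute $5\log(1/\delta_n)+4\log(1/\eta_n)+O(1)=9\log n+O(1)$. Thus $c(\rho_B^{\ox n},\ve,\delta_n,\eta_n)=O(\log n)$. Since $(\ve-\eta_n)^2-2\delta_n = \ve^2-O(1/n)\to\ve^2\in(0,1)$ and $\Phi^{-1}$ is smooth near $\ve^2$, expansion~\eqref{eq: DH second order} together with the Taylor estimate $\Phi^{-1}((\ve-\eta_n)^2-2\delta_n)=\Phi^{-1}(\ve^2)+O(1/n)$ converts the $\sqrt{n}$ term into $\sqrt{nV(\rho_B\|\Delta(\rho_B))}\,\Phi^{-1}(\ve^2)+O(1/\sqrt n)$, and the logarithmic correction then closes the lower bound.

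The main obstacle is the simultaneous control in this last step: the smoothing parameters $\eta_n,\delta_n$ must tend to zero \emph{fast} enough that $\Phi^{-1}((\ve-\eta_n)^2-2\delta_n)$ reaches $\Phi^{-1}(\ve^2)$ with the residual absorbed into $o(\sqrt n)$, yet \emph{slowly} enough (polynomially) that $\log(1/\delta_n^5\eta_n^4)$ and $\log\theta$ remain $O(\log n)$; the choice $\eta_n=\delta_n=1/n$ achieves both. The one technical point worth flagging is that~\eqref{eq: DH second order} is being invoked with the $n$-dependent error $(\ve-\eta_n)^2-2\delta_n$, so one needs its $O(\log n)$ remainder to be uniform for error parameters ranging over a compact subinterval of $(0,1)$ around $\ve^2$; this uniformity is exactly what the refined Berry--Esseen-type expansions of~\cite{tomamichel2013hierarchy,li2014second} supply. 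With the upper and lower bounds matched to order $O(\log n)$, the expansion~\eqref{eq: second order result thm} follows.
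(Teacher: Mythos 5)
Your proposal is correct and follows essentially the same route as the paper: the first equality via Theorem~\ref{thm: one-shot relation}, then expanding both sides of the sandwich in Proposition~\ref{prop: one shot estimation} via~\eqref{eq: DH second order} with smoothing parameters vanishing polynomially in $n$ (the paper takes $\eta,\delta\propto 1/\sqrt{n}$ where you take $1/n$, and bounds $\theta(\sigma^{\ox n})$ via $2\lceil n\lambda(\sigma)\rceil$ rather than $\nu$, but both choices yield the same $O(\log n)$ control). Your explicit remark on the uniformity of the $O(\log n)$ remainder in~\eqref{eq: DH second order} over an $n$-dependent error parameter is a point the paper leaves implicit, and is a welcome addition.
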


\begin{proof}
The first equality follows from the one-shot relation in Theorem~\ref{thm: one-shot relation}. We now prove that the second order expansion holds for $C_{d,\cO}^{\ve}(\rho_B^{\ox n})$. This can be seen as a direct consequence of expanding the one-shot characterization in Proposition~\ref{prop: one shot estimation}. Given the i.i.d. state $\rho_B^{\ox n}$, we have 
\begin{align}
C_{d,\cO}^{\ve}(\rho_B^{\ox n}) \leq D_{H}^{\ve^2}(\rho_B^{\ox n}\|\Delta(\rho_B)^{\ox n}).
\end{align}
Expanding the r.h.s. via formula~\eqref{eq: DH second order} we have the second order upper bound. On the other hand, we have 
\begin{align}\label{eq: second order tmp1}
C_{d,\cO}^{\ve}(\rho_B^{\ox n}) \geq D_H^{({\ve}-\eta)^2-2\delta}(\rho_B^{\ox n}\|\Delta(\rho_B)^{\ox n}) - c(\rho_B^{\ox n},\ve,\delta,\eta).
\end{align}
By definition~\eqref{eq: definition of theta function}, $\theta(\sigma^{\ox n}) \leq 2\lceil\lambda(\sigma^{\ox n})\rceil =2\lceil n\lambda(\sigma) \rceil$ which scales at most linearly in $n$. Choosing $\eta$ and $\delta$ proportional to $1/\sqrt{n}$, we know that the correction term $c(\rho^{\ox n},\ve,\delta,\eta) \in O(\log n)$ and $\Phi^{-1}(({\ve}-\eta)^2-2\delta) = \Phi^{-1}(\ve^2) + O(1/\sqrt{n})$. Thus expanding the r.h.s. of~\eqref{eq: second order tmp1} via formula~\eqref{eq: DH second order} leads to the second order lower bound which matches exactly with the upper bound.
\end{proof}

\begin{remark}
From Proposition~\ref{prop:relations} in Section~\ref{sec: tripartite state}, the first and second order asymptotics can also be written as conditional entropy $H(B|R)_{\sigma_{BR}}$ and conditional information variance $V(B|R)_{\sigma_{BR}}$ respectively, where $\sigma_{BR}= \rho[\id,\Delta]$ is the dephased classical-quantum state from the randomness extraction protocol.
\end{remark}

\begin{remark}\label{rm: randomness extraction advantage}
Comparing the second order expansion of $\ell_{\id}^{\ve}(\rho_B^{\ox n})$ in~\cite[Corollary 16]{tomamichel2013hierarchy} and the result above, we can conclude that a general incoherent randomness extraction protocol $(\Lambda,\Delta,f)$ has no advantage over the protocol $(\id,\Delta,f)$ in the sense that they lead to the same first order asymptotics~\cite{hayashi2018secure} and the second order asymptotics of extractable randomness. However, this does not rule out a possible advantage in the third or higher order terms.
\end{remark}

\begin{remark}
The distillable coherence under MIO/DIO/IO/DIIO not only have the same first order asymptotics as observed in~\cite{winter_2016,regula2018one,zhao2019one} but also have the same second order asymptotics, indicating that they are equivalently powerful for coherence distillation in the large block length regime. The same argument goes to the incoherent randomness extraction.
\end{remark}

\begin{remark}
For any quantum state $\rho_B = \sum_{i,j} \rho_{ij} \ket{i}\bra{j}_B$ written in the computational basis, we can
assign it to a bipartite maximally correlated state $\rho_{\text{mc}}:= \sum_{i,j} \rho_{ij} \ket{i}\bra{j}_A \ox
\ket{i}\bra{j}_B$. The second order expansion of  distillable entanglement of $\rho_{\text{mc}}^{\ox n}$ under local
operations and classical communication (LOCC) is also given by $n D(\rho\|\Delta(\rho)) + \sqrt{n
V(\rho\|\Delta(\rho))}\, \Phi^{-1}(\ve^2) + O(\log n)$~\cite[Proposition 10]{fang2019non}. Together with the result in
Theorem~\ref{thm: second order}, the coincidence of these second order expansions leads to a new evidence to the
long-standing conjecture (see e.g.~\cite[Section II.D]{streltsov_2017}) that any incoherent operation acting on a state
$\rho_B$ is equivalent to a LOCC operation acting on the associated maximally correlated state $\rho_{\text{mc}}$.
\end{remark}

\begin{remark}
Compared with the one-shot estimation in~\cite[Equations (37,46,47)]{zhao2019one}, we can verify that their upper and lower bounds on the one-shot distillable coherence agree in the first order term but \emph{disagree} in the second order term. In particular, the dependence on $\ve$ is qualitatively different in their upper and lower bounds. Thus, one could certainly argue that the bounds in~\cite{zhao2019one} are not as tight as they should be in the asymptotic limit.
\end{remark}



\section{Assisted coherence distillation and incoherent randomness extraction}\label{sec:assisted}

In this section we first introduce the resource theory of quantum coherence in the distributed scenario and then
formally define two different resource processing tasks with assistance from the environment within this resource
theory. The first is the assisted quantum coherence distillation originally investigated
in~\cite{chitambar2016assisted,streltsov2017towards}. The second is an variant of
randomness extraction task adapted into the distributed scenario. It plays a crucial role in establishing the second
order characterization of the former task.

\subsection{Resource theory of quantum coherence in distributed scenarios}\label{sec:distributed}

Motivated by the local operations and classical communication known from entanglement
theory~\cite{horodecki2009quantum}, Chitambar \textit{et al.}~\cite{chitambar2016assisted} seminally introduced the
framework of local incoherent operations and classical communication ($\LICC$) in the distributed resource theory of
quantum coherence which was later further explored by Streltsov \textit{et al.}~\cite{streltsov2017towards}. In this
scenario there are two separated parties, Alice and Bob, that are connected via a classical channel and restricted to
performing local incoherent operations. We think of Alice as assistant who helps Bob to manipulate coherence.
Here we briefly summarize several sets of free bipartite quantum operations widely studied in the distributed resource
theory of coherence and refer the interested readers to~\cite{streltsov2017towards,yamasaki2019hierarchy} for more
details:
\begin{itemize}
  \item $\mathbf{\LICC}$:
      the set of \emph{local incoherent operations and classical communication}~\cite{chitambar2016assisted}. That is,
      Alice and Bob perform local incoherent operations and share their outcomes via a classical channel. Throughout
      this work, we assume without loss of generality that the free local operations are chosen to be
      $\MIO$\footnote{One may choose other sets of local incoherent operations -- say $\SIO$ or $\IO$
      -- but all of these free local operations lead to the same result in the distributed setting~\cite[Section III]
         {regula2018nonasymptotic}. We choose $\MIO$ due to the fact that $\QIP$ reduces to $\MIO$ in the
         single partite scenario.};
  \item $\mathbf{\LQICC}$:
      the set of \emph{local quantum-incoherent operations and classical communication}~\cite{chitambar2016assisted}.
      That is, Alice can adopt arbitrary quantum operations while Bob is restricted to quantum incoherent operations,
      and they share the outcomes via a classical channel;
  \item $\mathbf{\SI}$: the set of \emph{separable incoherent operations}~\cite{streltsov2017towards}:
      \begin{align}\label{eq:sio}
          \Lambda_{AB\to A'B'}(\cdot) := \sum_i (A_i\ox B_i) (\cdot) (A_i\ox B_i)^\dagger,
      \end{align}
      where both $A_i$ and $B_i$ are incoherent Kraus operators satisfying
      $\sum_i A_i^\dagger A_i\ox B_i^\dagger B_i = \1_{AB}$;
  \item $\mathbf{\SQI}$: the set of \emph{separable quantum-incoherent operations}~\cite{streltsov2017towards}
      of the form~\eqref{eq:sio},
      where $B_i$ are incoherent Kraus operators satisfying
      $\sum_i A_i^\dagger A_i\ox B_i^\dagger B_i = \1_{AB}$.
\end{itemize}
The two free classes $\LQICC$ and $\SQI$ lead to the same set of free states, which is called the
\emph{quantum-incoherent states}~\cite{chitambar2016assisted,streltsov2017towards} (system $A$ is quantum and system
$B$ is incoherent) and bears the form
\begin{align}\label{eq:quantum-incoherent}
    \mathcal{QI}(A{:}B) := \left\{\sigma_{AB} = \sum_{b\in\cB}p_b\sigma^b_A\ox\proj{b}_B
                \sbar p_b\geq 0, \sum_{b\in\cB}p_b = 1,\; \sigma_A^b\in\cS(A)\right\}.
\end{align}
This motivates us to define the maximal set of
free operations that preserve $\mathcal{QI}$
-- the set of \emph{quantum-incoherent state preserving operations} $\mathbf{\QIP}$~\cite{yamasaki2019hierarchy}:
\begin{align}
  \QIP := \left\{\Lambda\in\channel{AB\to A^\prime B^\prime} \sbar
            \forall \sigma_{AB}\in\mathcal{QI},\;\Lambda_{AB\to A'B'}(\sigma_{AB})\in\mathcal{QI}\right\}.
\end{align}
By definition the following inclusion relations hold:
\begin{align}\label{eq:inclusion-relation}
    \LICC \subseteq \LQICC \subseteq \SQI \subseteq \QIP,\quad
    \LICC \subseteq \SI \subseteq \SQI\subseteq \QIP.
\end{align}

In what follows we assume $\cF\in\{\LICC,\LQICC,\SI,\SQI,\QIP\}$ be some 
chosen free \emph{bipartite} operation class, 
which is different from the free class $\cO\in\{\MIO,\DIO,\IO,\DIIO\}$ in
the previous section.

\subsection{Framework of assisted coherence distillation}\label{sec:assisted coherence distillation}

In the task of assisted coherence distillation, Alice and Bob work together to transform a given quantum state
$\rho_{AB}$ (not necessarily pure) to a MCS in system $B$ such that the error is within certain
threshold and the obtained MCS has rank as large as possible, under the constraint that the
available quantum operations are chosen from $\cF$.  We call this task the assisted coherence distillation since we can
think of Alice as a helpful environment who holds an \emph{extension} $\rho_{AB}$ of $\rho_B$ possessing certain amount
of quantum coherence. See Figure~\ref{fig:assisted-distillation} for illustration. More
formally, for any free operation class $\cF$, any given state $\rho_{AB} \in \cS(AB)$ and error tolerance $\ve \in
[0,1]$, the \emph{one-shot assisted distillable coherence of $\rho_{AB}$} is defined as
\begin{align}\label{eq:one-shot-EAD}
  C_{d,\cF}^{\ve}(\rho_{AB}) :=
    \max\left\{\log\vert B'\vert :
    P\left(\tr_{A'}\Lambda_{AB\to A'B'}(\rho_{AB}), \proj{\Psi_{B'}}\right)\leq\varepsilon,\Lambda\in\cF\right\},
\end{align}
where system $A'$ is at Alice's hand, system $B'$ is at Bob's hand, and $\ket{\Psi_{B'}}$ is a MCS in $B'$. 

\begin{figure}[!hbtp]
\centering
\includegraphics[width=0.6\textwidth]{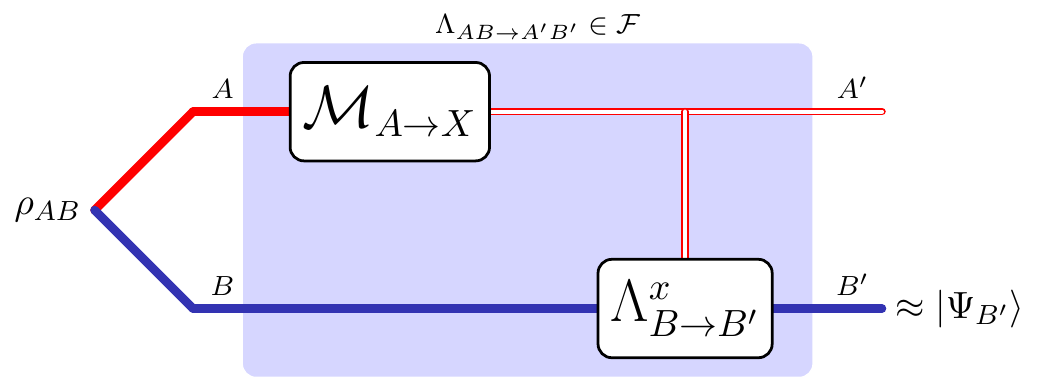}
\caption{Schematic diagram of the assisted coherence distillation. Alice and Bob together perform a free bipartite
        quantum operation $\cD_{AB\to A'B'}\in\cF$ to distill a MCS $\ket{\Psi_{B'}}$. The system
        in red belongs to Alice and the system in blue belongs to Bob. The shaded box depicts an one-way $\LQICC$
        strategy in which Alice performs a measurement $\cM_{A\to X}$ and sends the outcome $x$ to Bob. Conditioned on
        $x$, Bob performs an incoherent operation $\Lambda^x_{B\to B'}$ to distill $\ket{\Psi_{B'}}$.}
\label{fig:assisted-distillation}
\end{figure}

\subsection{Framework of assisted incoherent randomness extraction}\label{sec:assisted randomness extraction}

The task of assisted incoherent randomness extraction aims to obtain as many random bits as possible at Bob's
laboratory that is secure from possible attackers such as Eve, under the assistance of a helpful friend Alice. In the
beginning, Alice and Bob preshare a bipartite quantum state $\rho_{AB}$ with purification $\ket{\psi}_{RAB}$ such that
the reference system $R$ held by Eve. A general assisted incoherent randomness extraction protocol is characterized by
a triplet $(\Lambda,\Delta,f)$, where $\Lambda\in\cF$, and is composed of three steps:
\begin{enumerate}
\item Alice and Bob perform a free operation $\Lambda_{AB\to A'B'}\in\cF$ on their joint system. 
  Let $U_{AB\to A'B'E}$ be a Stinespring isometry representation of $\Lambda$. We
  assume the environment system $E$ of $\Lambda$ is also controlled by Eve. Since Alice is a friend of Bob, 
  Eve has no access to system $A'$. Hence Eve has control over two systems $ER$. 
  After the action of $\Lambda$, the whole system is in a pure state
  \begin{align}\label{eq:rho-Lambda-BER}
    \rho[\Lambda]_{A'B'ER} := U_{AB\to A'B'E}(\proj{\psi}_{RAB})U_{AB\to A'B'E}^\dagger.
  \end{align}
\item Bob dephases system $B'$ via the dephasing channel $\Delta_{B'}$.
  This yields the classical-quantum state
  \begin{align}
    \rho[\Lambda,\Delta]_{A'B'ER} := \Delta_{B'}(\rho[\Lambda]_{A'B'ER})
                                   = \sum_{b\in\cB}p_b\proj{b}_{C}\ox\sigma_{A'ER}^b,
  \end{align}
  where $p_b := \tr\bra{b}\rho[\Lambda]_{A'B'ER}\ket{b}$ 
  and $\sigma_{A'ER}^b:=\bra{b}\rho[\Lambda]_{A'B'ER}\ket{b}/p_b$.
\item A hash function $f$ is applied on $B'$ to extract the randomness that is \emph{secure from Eve},
  leading to the final output state
  \begin{align}
    \rho[\Lambda,\Delta,f]_{A'LER} := \rho[\Lambda,\Delta]_{A'f(B')ER}
    = \sum_{b\in\cB}p_b\proj{f(b)}_{L}\ox\sigma_{A'ER}^b.
  \end{align}
\end{enumerate}
A detailed procedure of the assisted randomness extraction via $(\Lambda,\Delta,f)$ is depicted in
Figure~\ref{fig:assisted-extraction}.

\begin{figure}[!hbtp]
\centering
\includegraphics[width=0.7\textwidth]{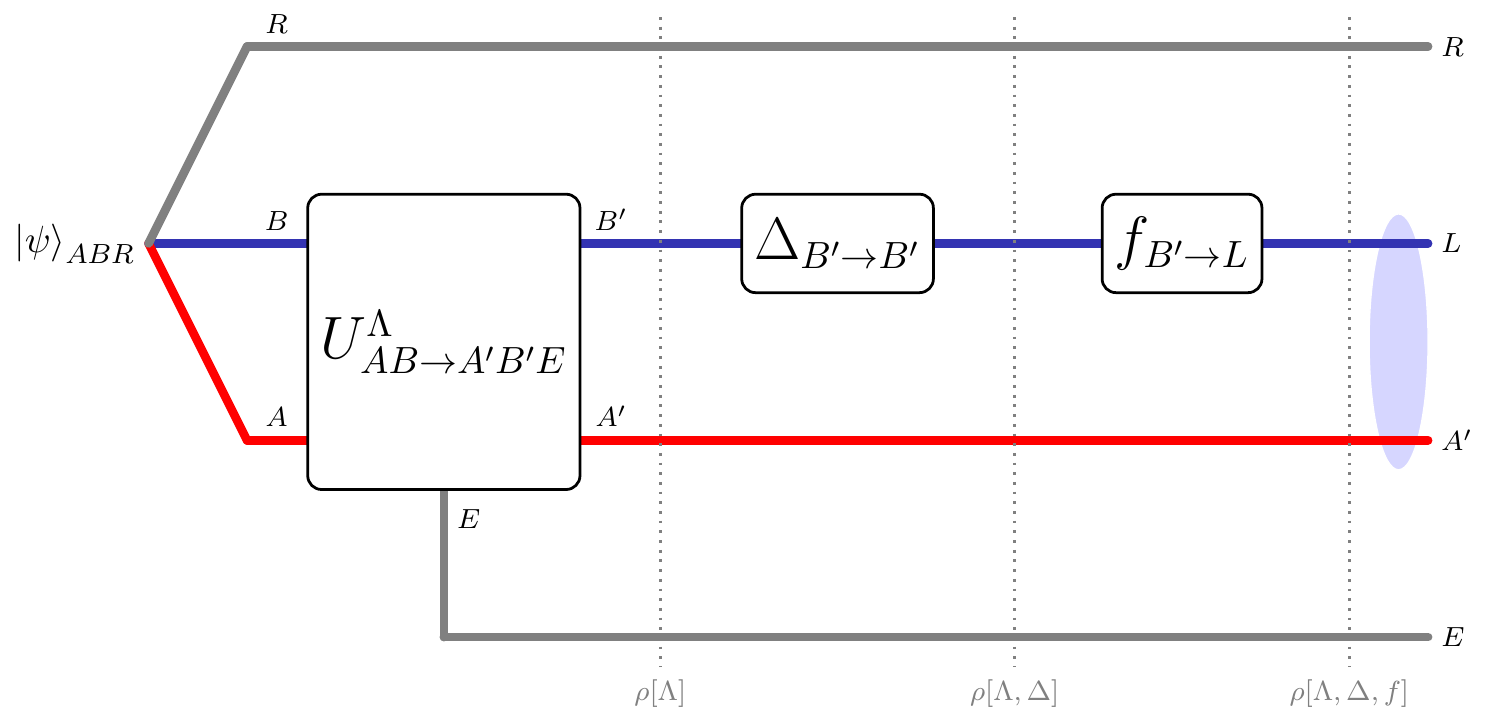}
\caption{Schematic diagram of an assisted incoherent randomness extraction protocol given by $[\Lambda,\Delta,f]$.
The system in red belongs to Alice, the system in blue belongs to Bob, and the systems in gray belong to Eve.}
\label{fig:assisted-extraction}
\end{figure}

Using the security metric~\eqref{min1}, the \emph{one-shot assisted extractable
randomness of $\rho_{AB}$} via $\Lambda$ is defined as
\begin{align}
    \ell_{\Lambda}^\varepsilon(\rho_{AB}) := \max_{f}\left\{\log |L| :
    d_{sec}\left(\rho[\Lambda,\Delta,f]_{LER}|ER\right)\leq \varepsilon \right\}.
\end{align}
Correspondingly, the \emph{one-shot assisted extractable randomness of $\rho_{AB}$} is defined as
\begin{align}\label{eq:one-shot-er}
    \ell_{\cF}^\varepsilon(\rho_{AB}) := \max_{\Lambda\in\cF}\ell_{\Lambda}^\varepsilon(\rho_{AB}).
\end{align}

Notice that the identity map $\id_{AB\to AB}$ is free in this distributed framework, 
thus $(\id_{AB\to AB},\Delta_B,f)$ is an valid assisted incoherent randomness extraction protocol for any hash function $f$. In this
protocol, Bob does not perform any assisted incoherent operation and directly dephases his state
$\psi_B\equiv\tr_{RA}\psi_{RAB}$ via $\Delta_B$. Hence, the environment system $\cH_E$
ceases to exist and we concern with the extractable randomness in state $\Delta_B(\psi_{RB})$ that is secure from $R$.
Following the argument around~\eqref{eq: hashing lemma}, we conclude for any $\eta\in (0,\varepsilon ]$ that
\begin{align}\label{eq: hashing lemma 2}
\ell_{\cF}^\varepsilon(\rho_{AB}) \geq
H_{\min}^{\varepsilon -\eta}(B\vert R)_{\sigma} + 4\log \eta-3,
\end{align}
where $\sigma_{RAB}:=\Delta_B(\psi_{RAB})$ is the dephased classical-quantum state and $H_{\min}^\ve$ is defined
in~\eqref{eq:conditional min}.

\begin{remark}
In the above assisted randomness extraction framework, we do not require that the extracted randomness by Bob is
secure from Alice. That is, it is possible that $L$ and $A'$ are classically correlated in the output state
$\rho[\Lambda,\Delta,f]_{A'LER}$ (cf. the shaded area of Figure~\ref{fig:assisted-extraction}). This assumption is
reasonable, since we regard Alice as a benevolent friend of Bob and aims to help him extracting randomness. We only
require that the extracted randomness is secure from the malicious Eve who has control over the systems $ER$.
\end{remark}

\begin{remark}\label{remark:randomness-extraction}
Assume that Bob ignores the assistance from Alice, i.e., Bob performs a free operation $\Lambda_{B\to B'}\in\DIIO$
(which is also free in the assisted framework) to extract randomness without touching $A$, we recover a slightly
more general randomness extraction framework than that was investigated in Section~\ref{sec:incoherent randomness
extraction} by allowing the state preshared between $B$ and the reference system $R$ to be a mixed state
$\psi_{RB}=\tr_A\psi_{RAB}$. Whereas in Section~\ref{sec:incoherent randomness extraction}, $\psi_{RB}$ is assumed to
be pure \textit{a priori}.
\end{remark}

\begin{remark}
Recently, Yang, Horodecki, and Winter initiated the research of distributed private randomness distillation~\cite{yang2019distributed}, in which Alice and Bob trust each
other and collaborate to extract independent randomness private against Eve.
Our assisted randomness extraction task is different from theirs mainly in two aspects:
First, we do not require that Alice and Bob share independent randomness -- all we need
is that Bob possesses randomness private against Eve; Second, our free operations
are naturally motivated within the resource theory of quantum coherence towards the distributed
scenario, which are different from their closed local operations and dephasing
channel communication. Actually, they mentioned in the Outlook section that the
distributed randomness extraction under the framework of coherence theory is a notable
topic to pursue. Our assisted randomness extraction task investigated here
is a first step towards this target.
\end{remark}

\subsection{Relation between assisted coherence distillation and assisted randomness extraction}

We present an equivalence between the assisted coherence distillation and the assisted incoherent randomness extraction
described above on the one-shot regime, extending Proposition~\ref{prop: IO protocol} to the assisted scenario.

\begin{theorem}[Exact one-shot connection]\label{thm:equivalence}
Let $\rho_{AB}$ be a bipartite quantum state and $\varepsilon\in[0,1]$. It holds that
\begin{align}
  C_{d,\QIP}^{\ve}(\rho_{AB})  = \ell_{\QIP}^\varepsilon(\rho_{AB}).\label{eq:equivalence}
\end{align}
\end{theorem}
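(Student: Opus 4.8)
The plan is to establish the two inequalities $\ell_{\QIP}^\ve(\rho_{AB}) \ge C_{d,\QIP}^\ve(\rho_{AB})$ and $C_{d,\QIP}^\ve(\rho_{AB}) \ge \ell_{\QIP}^\ve(\rho_{AB})$ separately, closely following the template of the unassisted Theorem~\ref{thm: one-shot relation} but carrying Alice's system $A'$ through every step. For the first inequality, I would start from an optimal assisted distillation protocol $\Lambda\in\QIP$ achieving $C_{d,\QIP}^\ve(\rho_{AB})=\log|B'|$ with $P(\tr_{A'}\Lambda_{AB\to A'B'}(\rho_{AB}),\Psi_{B'})\le\ve$, and reinterpret it as the assisted extraction protocol $(\Lambda,\Delta,\id)$ with a trivial hash. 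Since $\rho[\Lambda]_{A'B'ER}$ is a purification of Bob's marginal $\tr_{A'ER}\rho[\Lambda]_{A'B'ER}=\tr_{A'}\Lambda(\rho_{AB})$ and $\Psi_{B'}$ is pure, Uhlmann's theorem~\cite{uhlmann1976transition} supplies a (necessarily product) purification $\Psi_{B'}\ox\phi^*_{A'ER}$ of $\Psi_{B'}$ with $P(\rho[\Lambda]_{A'B'ER},\Psi_{B'}\ox\phi^*_{A'ER})\le\ve$. Choosing $\sigma_{ER}=\tr_{A'}\phi^*_{A'ER}$ in the definition of $d_{sec}$, using $\Delta_{B'}(\Psi_{B'})=\pi_{B'}$ together with the fact that $\Delta_{B'}$ and $\tr_{A'}$ act on disjoint systems, and finally invoking the data-processing inequality for the purified distance under the CPTP map $\tr_{A'}\circ\Delta_{B'}$, I would obtain $d_{sec}(\rho[\Lambda,\Delta,\id]_{B'ER}|ER)\le\ve$, so that $\log|B'|$ is an achievable extraction rate.

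The reverse inequality is the substantial direction and requires an assisted analogue of Proposition~\ref{prop: IO protocol}. Given an optimal extraction protocol $(\Lambda,\Delta,f)$ with $\Lambda\in\QIP$ and $d_{sec}(\rho[\Lambda,\Delta,f]_{LER}|ER)\le\ve$, I would construct a distillation protocol of the form $\Gamma\circ\Lambda$ outputting a state on $A''B''$ whose $B''$-marginal is $\ve$-close to $\Psi_{B''}$ with $|B''|=|L|$. The recovery step $\Gamma$ should coherently invert the dephasing and implement the hash $f$, turning the secure classical randomness into a maximally coherent state; the closeness of $\rho[\Lambda,\Delta,f]_{LER}$ to the decoupled product $\pi_L\ox\sigma_{ER}$ is exactly what certifies, through Uhlmann's theorem, that such a recovery exists and can be chosen to act only on the systems held by Alice and Bob rather than on Eve's $ER$. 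I would then verify that the resulting map lies in $\QIP$, using that $\QIP$ is closed under composition and that the recovery $\Gamma$ can be arranged to preserve $\mathcal{QI}$, so that the composition sends every quantum-incoherent state to a quantum-incoherent state.

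The main obstacle is the asymmetry between the two tasks in the assisted setting: a distilled maximally coherent state on $B''$ must be globally pure and hence decoupled from $A''ER$, whereas the extraction guarantee only certifies decoupling of $L$ from Eve's $ER$ and places no constraint on the correlations with Alice's $A'$. Consequently Bob's local marginal generically lacks the coherence needed to distill $\Psi_L$ on its own, and the construction must genuinely exploit Alice's assistance: her operation on $A'$ has to recover the coherence that is invisible to Eve. Making this precise---showing that the $ER$-decoupling certified by $d_{sec}\le\ve$ already suffices for an Alice-and-Bob-local, $\mathcal{QI}$-preserving recovery to steer Bob's system to the pure maximally coherent state---is the technical heart of the argument, and is where I expect the Uhlmann-type steering estimate and the careful bookkeeping of the purifying systems (the dephasing copy on Bob's side together with $A'$) to do the real work.
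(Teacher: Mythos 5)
Your overall strategy coincides with the paper's: the easy direction $C_{d,\QIP}^{\ve}(\rho_{AB})\le\ell_{\QIP}^{\ve}(\rho_{AB})$ is handled exactly as in Lemma~\ref{lemma:equivalence-converse} (trivial hash, Uhlmann extension $\Psi_{B'}\ox\sigma^*_{A'ER}$, data processing under $\tr_{A'}\circ\Delta_{B'}$), and that half of your argument is complete and correct. You have also correctly diagnosed the crux of the hard direction: the security criterion only decouples $L$ from $ER$, so the recovery must act jointly on $A'B'$ and genuinely use Alice's share.

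The gap is that for the hard direction you only assert that an ``Uhlmann-type steering estimate'' will produce the required recovery map; you never construct it, and that construction is essentially the entire content of the theorem (it is Proposition~\ref{prop: QIP protocol} in the paper). Concretely, what is missing is: (i) the reduction of the protocol $(\Lambda,\Delta,f)$ on $\rho_{AB}$ to the protocol $(\id,\Delta,f)$ on $\sigma_{A'B'}=\Lambda(\rho_{AB})$ with purification $\rho[\Lambda]_{A'B'ER}$ and reference $ER$; (ii) the explicit incoherent isometry $U_f\ket{b}=\ket{f(b)}_L\ket{b}$ together with the grouping of the purification into subnormalized branches $\sqrt{r_\ell}\,\vert\phi^\ell\rangle$ indexed by hash values; (iii) Uhlmann unitaries $U_\ell$ on $A'B'$ aligning each branch with a fixed purification of the optimal $\sigma^*_{ER}$, assembled into $U=\sum_\ell\proj{\ell}\ox U_\ell$, plus the direct-sum fidelity identity of Lemma~\ref{lem: fidelity} showing that $F(UU_f\ket{\psi},\ket{\Psi}_L\ox\ket{\phi^*})$ equals $F(\rho[\id,\Delta,f]_{LER},\pi_L\ox\sigma^*_{ER})\ge\sqrt{1-\ve^2}$; and (iv) the verification that $\Gamma(\cdot)=\tr_{A'B'}[UU_f(\cdot)U_f^\dagger U^\dagger]$ lies in $\QIP$, which is not automatic --- the $U_\ell$ are generally not incoherent --- and is checked in the paper via the Kraus operators $K_bU_f$ with $K_b=\bra{b}U$, which send any $\proj{b'}\ox\rho_{A'}^{b'}$ to a multiple of $\proj{f(b')}$. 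Your remark that the recovery ``can be arranged to preserve $\mathcal{QI}$'' is precisely the claim that needs proof rather than a usable hypothesis. So: right approach and right diagnosis of the difficulty, but the key proposition is stated as a hope rather than proved.
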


The proof of Theorem~\ref{thm:equivalence} is divided into two pieces -- we first conclude the ``$\leq$'' direction in
Lemma~\ref{lemma:equivalence-converse} and then prove the ``$\geq$'' direction in
Lemma~\ref{lemma:equivalence-achievability-1}. Actually, for the ``$\leq$'' direction, we can obtain a much stronger
statement that $C_{d,\cF}^{\ve}(\rho_{AB})\leq\ell_{\cF}^\varepsilon(\rho_{AB})$ for arbitrary $\cF$. That is, the
optimal rate of assisted randomness extraction is always larger than the optimal rate of assisted coherence
distillation.

\begin{lemma}\label{lemma:equivalence-converse}
Let $\cF\in\{\LICC,\LQICC,\SI,\SQI,\QIP\}$. Let $\rho_{AB}$ be a bipartite quantum state and $\varepsilon\in[0,1]$. It
holds that
\begin{align}
  C_{d,\cF}^{\ve}(\rho_{AB}) \leq \ell_{\cF}^\varepsilon(\rho_{AB}).
\end{align}
\end{lemma}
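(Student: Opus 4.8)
The plan is to mirror the achievability argument used for the ``$\ell_{\cO}^{\ve}\ge C_{d,\cO}^{\ve}$'' direction in the proof of Theorem~\ref{thm: one-shot relation}, adapting it to the distributed setting where Bob's target is obtained only after discarding Alice's system $A'$ and Eve controls $ER$. First I would fix an optimal assisted distillation protocol: let $C_{d,\cF}^{\ve}(\rho_{AB}) = \log|B'|$ be achieved by some $\Lambda_{AB\to A'B'}\in\cF$ with $P(\tr_{A'}\Lambda_{AB\to A'B'}(\rho_{AB}),\Psi_{B'})\le\ve$. I then feed exactly this $\Lambda$ into the assisted randomness extraction protocol $(\Lambda,\Delta,\id)$ with the identity hash function, so that the output register $L$ is identified with $B'$. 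Since the definition of $\ell_{\Lambda}^{\ve}$ maximizes over hash functions, it suffices to show that this single choice already meets the security constraint with parameter $\ve$, which would give $\ell_{\cF}^{\ve}(\rho_{AB}) \ge \log|B'| = C_{d,\cF}^{\ve}(\rho_{AB})$.

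The core step is an application of Uhlmann's theorem to the purification $\rho[\Lambda]_{A'B'ER}$ from~\eqref{eq:rho-Lambda-BER}. Writing $\tau_{B'}:=\tr_{A'}\Lambda_{AB\to A'B'}(\rho_{AB})$, the state $\rho[\Lambda]_{A'B'ER}$ is a purification of $\tau_{B'}$ with purifying systems $A'ER$, while $\Psi_{B'}$ is pure. Hence every purification of $\Psi_{B'}$ on $A'B'ER$ has the product form $\ket{\Psi_{B'}}\ox\ket{\phi}_{A'ER}$, and Uhlmann's theorem lets me choose $\ket{\phi}_{A'ER}$ so that
\[
  P\big(\rho[\Lambda]_{A'B'ER},\, \Psi_{B'}\ox\proj{\phi}_{A'ER}\big) = P(\tau_{B'},\Psi_{B'}) \le \ve .
\]

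Next I would push this inequality through the CPTP map $\tr_{A'}\circ\Delta_{B'}$, which discards Alice's system and dephases Bob's. Setting $\xi_{ER}:=\tr_{A'}\proj{\phi}_{A'ER}$ and using that $\Delta_{B'}$ commutes with $\tr_{A'}$ (they act on disjoint systems) together with $\Delta_{B'}(\Psi_{B'})=\pi_{B'}$, the second argument maps to $\pi_{B'}\ox\xi_{ER}$ while the first maps to $\rho[\Lambda,\Delta,\id]_{LER}$. By data processing of the purified distance this yields $P(\rho[\Lambda,\Delta,\id]_{LER},\pi_{B'}\ox\xi_{ER})\le\ve$, and taking $\xi_{ER}$ as a feasible point in the minimization defining $d_{sec}$ gives $d_{sec}(\rho[\Lambda,\Delta,\id]_{LER}|ER)\le\ve$. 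This is exactly the achievability of extraction rate $\log|B'|$, completing the bound.

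Note that this argument uses only $\Lambda\in\cF$ and no closure property of $\cF$ --- unlike the reverse direction it never composes two free operations --- which is why it holds uniformly for every class $\cF\in\{\LICC,\LQICC,\SI,\SQI,\QIP\}$. The main obstacle I anticipate is purely bookkeeping around the traced-out assistant: I must confirm that the decoupled reference state $\xi_{ER}$ produced by Uhlmann lives on Eve's systems $ER$ only (and not on $A'$), so that it is a legitimate candidate in $d_{sec}(\,\cdot\,|ER)$, and that discarding $A'$ before versus after dephasing $B'$ genuinely commutes. Once these are checked, the chain of data-processing inequalities closes without further difficulty.
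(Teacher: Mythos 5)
Your proposal is correct and follows essentially the same route as the paper's proof: fix the optimal $\Lambda\in\cF$, run the extraction protocol $(\Lambda,\Delta,\id)$, use Uhlmann's theorem on the purification $\rho[\Lambda]_{A'B'ER}$ to obtain a product extension $\Psi_{B'}\ox\sigma^\ast_{A'ER}$, and push through $\tr_{A'}\circ\Delta_{B'}$ via data processing with $\tr_{A'}\sigma^\ast_{A'ER}$ as the feasible point in $d_{sec}$. Your explicit bookkeeping that the Uhlmann extension lives on $A'ER$ and only its $ER$-marginal enters the security minimization matches the paper's choice $\sigma^\ast_{ER}:=\tr_{A'}\sigma^\ast_{A'ER}$ exactly.
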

\begin{proof}
This is shown by the same technique employed in proving Theorem~\ref{thm: one-shot relation}. Nevertheless, we write down the details for completeness.
Let $C_{d,\cF}^{\ve}(\rho_{AB}) = \log M$ and 
let $\Lambda_{AB\to A'B'}\in\cF$ achieves this rate, i.e.,
$P(\tr_{A'}\Lambda_{AB\to A'B'}(\rho_{AB}),\Psi_{B'})\leq\varepsilon$ and $\vert B'\vert = M$. Consider the following assisted
randomness extraction protocol $(\Lambda,\Delta,\id)$, where the hash function is chosen as the identity map $\id$.
Since $\rho[\Lambda]_{A'B'ER}$ purifies $\Lambda_{AB\to A'B'}(\rho_{AB})$ (it also purifies $\tr_{A'}\Lambda_{AB\to
A'B'}(\rho_{AB})$) and $\Psi_{B'}$ is a pure state, the Uhlmann's theorem~\cite{uhlmann1976transition} guarantees that
there exists a quantum state $\sigma^\ast_{A'ER}$ in $A'ER$ such that
\begin{align}\label{eq:jbxOB}
    P(\tr_{A'}\Lambda_{AB\to A'B'}(\rho_{AB}),\Psi_{B'}) 
  = P(\rho[\Lambda]_{A'B'ER}, \Psi_{B'}\ox\sigma^\ast_{A'ER}).
\end{align}
Consider the following chain of inequalities regarding state $\rho[\Lambda]_{LER}$:
\begin{align}
    d_{sec}(\rho[\Lambda,\Delta,\id]_{LER}|ER) 
:=&\;  \min_{\sigma_{ER} \in \cS(ER)}
        P(\rho[\Lambda,\Delta,\id]_{LER}, \pi_L \otimes \sigma_{ER}) \\
\stackrel{(a)}{\leq}&\;    P(\rho[\Lambda,\Delta,\id]_{LER}, \pi_L \otimes \sigma^\ast_{ER}) \\
\stackrel{(b)}{=}&\;  P\left( (\Delta_{B'}\ox\tr_{A'})(\rho[\Lambda]_{A'B'ER}), 
                              (\Delta_{B'}\ox\tr_{A'})(\Psi_{B'} \otimes \sigma^\ast_{A'ER})\right) \\
\stackrel{(c)}{\leq}&\;   
        P\left(\rho[\Lambda]_{A'B'ER}, \Psi_{B'} \otimes \sigma^\ast_{A'ER}\right) \\
\stackrel{(d)}{=}&\;     
        P(\tr_{A'}\Lambda_{AB\to A'B'}(\rho_{AB}),\Psi_{B'}) \\
\leq&\;   \varepsilon,
\end{align}
where $\sigma_{ER}^\ast:=\tr_{A'}\sigma_{A'ER}$ in $(a)$,
$(b)$ follows from $L\equiv B'$ and $\Delta_{B'}(\Psi_{B'})=\pi_{B'}$,
$(c)$ follows from the data-processing of purified distance, 
and $(d)$ follows from~\eqref{eq:jbxOB}. That is, $\log M$ is an achievable assisted randomness
extraction rate, yielding the desired inequality.
\end{proof}

\vspace*{\baselineskip}

The ``$\geq$'' direction in~\eqref{eq:equivalence} turns out to be more difficult. We first present a technical result
that is utilized in the proof of the this direction. This result is of independent interest as it offers a systematic
method to construct an assisted coherence distillation protocol from a given assisted randomness extraction protocol
(that does not use free operations) with the same performance.

\begin{proposition}\label{prop: QIP protocol}
Let $\id_{AB\to AB}$ is the identity map. Given an assisted randomness extraction protocol $(\id,\Delta,f)$ such that
$d_{sec}(\rho[\id,\Delta,f]_{LR}\vert R)\leq\varepsilon$, we can construct from the protocol a quantum operation
$\Gamma: AB \to L$ such that $\Gamma\in\QIP$ and $P(\Gamma_{AB\to L}(\rho_{AB}), \Psi_L) \leq
\varepsilon$, where $L$ is in Bob's possession.
\end{proposition}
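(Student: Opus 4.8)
The plan is to lift the construction behind Proposition~\ref{prop: IO protocol} to the bipartite setting, the essential new feature being that Alice's system $A$ must be exploited as part of the recovery and that the resulting channel need only lie in the \emph{maximal} free class $\QIP$ rather than in $\DIIO$. Write a purification of the input as $\ket{\psi}_{RAB}=\sum_{b\in\cB}\sqrt{p_b}\,\ket{b}_B\ket{\chi_b}_{AR}$, and introduce the \emph{coherent hash} isometry $V_{B\to LB'}:\ket{b}_B\mapsto\ket{f(b)}_L\ket{b}_{B'}$, which records the hash value in a fresh register $L$ while keeping a copy of the index in $B'$. Applying $V$ yields the pure state $\ket{\psi'}_{LB'AR}=\sum_b\sqrt{p_b}\,\ket{f(b)}_L\ket{b}_{B'}\ket{\chi_b}_{AR}$, whose $LR$--marginal is exactly the dephased-hashed state $\rho[\id,\Delta,f]_{LR}$ of the hypothesis. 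Thus $\ket{\psi'}_{LB'AR}$ is a purification of $\rho[\id,\Delta,f]_{LR}$ whose purifying systems are precisely $B'$ (Bob) and $A$ (Alice)---the two systems to which Eve has no access.

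First I would convert the security hypothesis into a recovery statement via Uhlmann's theorem~\cite{uhlmann1976transition}. Let $\sigma_R^\ast$ attain $d_{sec}(\rho[\id,\Delta,f]_{LR}|R)=P(\rho[\id,\Delta,f]_{LR},\pi_L\ox\sigma_R^\ast)\le\ve$, and fix the canonical purification $\ket{\Upsilon}_{L\hat L R\hat R}:=\ket{\Phi^+}_{L\hat L}\ox\ket{\sigma^\ast}_{R\hat R}$ of $\pi_L\ox\sigma_R^\ast$, where $\ket{\Phi^+}_{L\hat L}$ is maximally entangled. Since both $\ket{\psi'}$ and $\ket{\Upsilon}$ purify states on $LR$ that are $\ve$--close in purified distance, Uhlmann's theorem (after padding $B'$ with a fixed ancilla so that $\dim(B'A)\ge\dim(\hat L\hat R)$) yields an isometry $U_{\mathrm{rec}}:\hat L\hat R\to B'A$ with $P\big(\ket{\psi'}_{LB'AR},(\1_{LR}\ox U_{\mathrm{rec}})\ket{\Upsilon}\big)\le\ve$.

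I then define $\Gamma:=\cM\circ V$, where $\cM_{LB'A\to L}$ runs $U_{\mathrm{rec}}^\dagger$ on $B'A$ (undoing the recovery back onto $\hat L\hat R$), measures $\hat L$ in the Fourier basis with outcome $k$, applies the corresponding incoherent diagonal phase correction on $L$, and discards all remaining registers. To bound the error I would feed the two $\ve$--close pure states above through $\cM\ox\id_R$: on $(\1_{LR}\ox U_{\mathrm{rec}})\ket{\Upsilon}$ the map first restores $\ket{\Phi^+}_{L\hat L}\ket{\sigma^\ast}_{R\hat R}$, after which the Fourier measurement and phase correction steer $L$ deterministically (for every outcome $k$) into $\ket{\Psi_L}$, so its $L$--marginal is exactly $\Psi_L$; on $\ket{\psi'}$ the $L$--marginal is $\Gamma(\rho_{AB})$ by construction. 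Data-processing of the purified distance under $\cM\ox\id_R$, followed by $\tr_R$, then gives $P(\Gamma(\rho_{AB}),\Psi_L)\le\ve$, exactly as in the computations proving Theorem~\ref{thm: one-shot relation} and Lemma~\ref{lemma:equivalence-converse}.

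The step I expect to be the main obstacle is verifying $\Gamma\in\QIP$, since $U_{\mathrm{rec}}$ is an essentially arbitrary isometry coupling $A$ and $B'$ and could a priori manufacture coherence in $L$ even on quantum-incoherent inputs. The resolution is to test membership directly on $\mathcal{QI}$: by linearity it suffices to consider inputs $\proj{b}_B\ox\sigma_A^b$, for which $V$ outputs the product state $\proj{f(b)}_L\ox\proj{b}_{B'}\ox\sigma_A^b$. Here the $L$--register sits in the fixed basis state $\ket{f(b)}$ \emph{unentangled} from $B'A$; since $U_{\mathrm{rec}}^\dagger$ and the Fourier measurement act only on $B'A$, they leave $L$ untouched, and the diagonal correction maps $\ket{f(b)}$ to itself up to a phase. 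Hence $\Gamma\big(\sum_b p_b\,\sigma_A^b\ox\proj{b}_B\big)=\sum_b p_b\,\proj{f(b)}_L$ is diagonal, so $\mathcal{QI}$ is preserved and $\Gamma\in\QIP$. This shows that the coherent-hash structure confines all coherence creation to the off-diagonal (non-$\mathcal{QI}$) part of the input, which is precisely what makes assisted distillation consistent with $\QIP$ while genuinely exploiting the $L$--$A$ correlations that the unassisted $\DIIO$ construction of Proposition~\ref{prop: IO protocol} cannot touch.
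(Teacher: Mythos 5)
Your proposal is correct, and it reaches the same conclusion by a genuinely different construction than the paper's. The paper keeps the purifying systems $AB$ intact and applies an $L$-controlled family of Uhlmann unitaries $U=\sum_{\ell}\proj{\ell}_L\ox U_\ell$, where each $U_\ell$ rotates the conditional purification $\vert\phi^\ell\rangle_{RAB}$ onto a fixed purification $\ket{\phi^*}_{RAB}$ of $\sigma_R^*$; the global state then lands $\ve$-close to $\ket{\Psi_L}\ox\ket{\phi^*}$, so $\Gamma=\tr_{AB}\circ\,U U_f(\cdot)U_f^\dagger U^\dagger$ needs no measurement, and both the error bound (via a fidelity-decomposition lemma for classical-quantum states) and the $\QIP$ check (via the explicit Kraus operators $K_b=\bra{b}U$) come out of the same algebra. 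You instead apply Uhlmann once, globally, between the two purifications of the $LR$-marginals, and then convert the entanglement in $\Phi^+_{L\hat L}$ into coherence on $L$ by a Fourier-basis measurement of $\hat L$ with a diagonal phase correction $Z^k$ on $L$. This buys a cleaner conceptual separation (security of $LR$ $\Rightarrow$ recoverability of $\hat L\hat R$ from $B'A$ $\Rightarrow$ steering of $L$ into $\Psi_L$) and a very transparent $\QIP$ verification, since on quantum-incoherent inputs the register $L$ carries only the diagonal record $\proj{f(b)}$ and is acted on solely by diagonal unitaries; the costs are the auxiliary registers $\hat L\hat R$, the dimension-padding needed to realize $U_{\mathrm{rec}}$ as an isometry into $B'A$ (and the completion of $U_{\mathrm{rec}}^\dagger$ to a channel, which you should state but which is routine and vanishes on the ideal branch), and the measurement-feedback step. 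Your error analysis is sound: for every Fourier outcome $k$ the corrected $L$-state on the ideal branch is exactly $\ket{\Psi_L}$, so data-processing of the purified distance under $\cM\ox\id_R$ followed by $\tr_R$ gives $P(\Gamma(\rho_{AB}),\Psi_L)\le\ve$, matching the paper's bound.
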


The proof is given at the end of this section.

\begin{remark}
We emphasize that this technical result holds for the $\QIP$
class, i.e., we are only able to show that the devised quantum operation $\Gamma$ belongs to $\QIP$. This limits us to
enhance Theorem~\ref{thm:equivalence} (more concretely Lemma~\ref{lemma:equivalence-achievability-1} below) to other
free classes of operations. It remains as an interesting problem whether one can further construct a distillation
operation $\Gamma$ that belongs to a less powerful free class such as $\LICC$.
As a possible solution, we conceive an
alternative definition of the assisted incoherent randomness extraction task in Appendix~\ref{sec:alternative
formulation}. For this new variant, we are able to establish the equivalence relation~\eqref{eq:equivalence} for all free operation
classes under consideration. However, we have difficulty in obtaining one-shot achievability bound for that new task --
the lower bound concluded in~\eqref{eq: hashing lemma 2} does not hold any more. We left it as an open problem to
obtain complete characterization (one-shot, second order, and asymptotic analyses) for that task.
\end{remark}

\begin{lemma}\label{lemma:equivalence-achievability-1}
Let $\rho_{AB}$ be a bipartite quantum state and $\varepsilon\in[0,1]$. It holds that
\begin{align}
  C_{d,\QIP}^{\ve}(\rho_{AB}) \geq \ell_{\QIP}^\varepsilon(\rho_{AB}).
\end{align}
\end{lemma}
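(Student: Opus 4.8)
The plan is to mirror the argument used for the ``$\geq$'' direction of Theorem~\ref{thm: one-shot relation}, with Proposition~\ref{prop: QIP protocol} playing the role that Proposition~\ref{prop: IO protocol} played in the unassisted case. First I would denote $\ell_{\QIP}^\varepsilon(\rho_{AB}) = \log|L|$ and fix an assisted extraction protocol $(\Lambda,\Delta,f)$ with $\Lambda\in\QIP$ achieving this rate, so that $d_{sec}(\rho[\Lambda,\Delta,f]_{LER}|ER)\leq\varepsilon$. The goal is to manufacture from this a single operation in $\QIP$ that distills $\log|L|$ coherent bits.

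The key observation is a reduction to the identity-map setting. After Step~1 the global state $\rho[\Lambda]_{A'B'ER}$ is pure and is a purification of the reduced state $\tau_{A'B'}:=\Lambda_{AB\to A'B'}(\rho_{AB})$, with the combined system $ER$ (entirely in Eve's hands) serving as the reference. Steps~2 and~3 of the protocol -- dephasing $B'$ and hashing -- are then precisely the identity-map extraction protocol $(\id_{A'B'\to A'B'},\Delta_{B'},f)$ run on $\tau_{A'B'}$ with reference system $ER$ and purification $\rho[\Lambda]_{A'B'ER}$. Hence the security guarantee above reads $d_{sec}(\rho[\id,\Delta,f]_{L(ER)}|ER)\leq\varepsilon$ for this reduced protocol. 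Applying Proposition~\ref{prop: QIP protocol} (with the relabelling $A\mapsto A'$, $B\mapsto B'$, $R\mapsto ER$) yields a quantum operation $\Gamma\colon A'B'\to L$ with $\Gamma\in\QIP$ and $P\bigl(\Gamma_{A'B'\to L}(\tau_{A'B'}),\Psi_L\bigr)\leq\varepsilon$.

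It remains to assemble the distillation protocol. Composing gives $\Gamma\circ\Lambda\colon AB\to L$, whose output lives entirely on Bob's system $L$ (the Alice output being trivial), so that $P\bigl((\Gamma\circ\Lambda)(\rho_{AB}),\Psi_L\bigr)\leq\varepsilon$ by the previous display. Because $\QIP$ is closed under composition -- if both maps preserve $\mathcal{QI}$ then so does their composite -- we have $\Gamma\circ\Lambda\in\QIP$, and it constitutes a valid assisted coherence distillation protocol distilling a maximally coherent state of dimension $|L|$. This gives $C_{d,\QIP}^{\ve}(\rho_{AB})\geq\log|L|=\ell_{\QIP}^\varepsilon(\rho_{AB})$, as desired. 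The genuinely hard part is entirely packaged into Proposition~\ref{prop: QIP protocol}: constructing the $\QIP$ distillation operation $\Gamma$ from the identity-map extraction protocol. Once that construction is in hand, the only things to verify here are the clean identification of Steps~2--3 with an identity-map protocol over the enlarged reference $ER$ and the composition-closure of $\QIP$, both of which are routine.
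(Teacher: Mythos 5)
Your proposal is correct and follows essentially the same route as the paper's own proof: reduce to the identity-map extraction protocol $(\id,\Delta,f)$ acting on $\Lambda_{AB\to A'B'}(\rho_{AB})$ with purification $\rho[\Lambda]_{A'B'ER}$ and reference $ER$, invoke Proposition~\ref{prop: QIP protocol} to obtain $\Gamma\in\QIP$, and compose using closure of $\QIP$ under composition. No gaps.
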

\begin{proof}
Let $\ell_{\QIP}^\varepsilon(\rho_{AB}) = \log\vert L\vert$ and let the randomness extraction protocol
$(\Lambda,\Delta,f)$ with $\Lambda_{AB\to A^{\prime}B^{\prime}}\in\QIP$ achieves this rate, i.e.,
$d_{sec}\left(\rho[\Lambda,\Delta,f]_{LER}|ER\right)\leq \varepsilon$. The key observation
is that applying the randomness extraction protocol $(\Lambda,\Delta,f)$ on quantum state $\rho_{AB}$ is the same as
applying the randomness extraction protocol $(\id,\Delta,f)$ on quantum state
$\sigma_{A^{\prime}B^{\prime}}\equiv\Lambda_{AB\to A^{\prime}B^{\prime}}(\rho_{AB})$ with purification
$\rho[\Lambda]_{A^{\prime}B^{\prime}ER}$ and reference systems $ER$. For the latter protocol, Proposition~\ref{prop:
QIP protocol} guarantees that there exists a quantum operation $\Gamma_{A^{\prime}B^{\prime}\to L}$
such that $\Gamma \in \QIP$ and $P(\Gamma(\sigma_{A^{\prime}B^{\prime}}),\Psi_L) \leq
\varepsilon$. Compositing these two quantum operations $\Lambda$ and $\Gamma$ yields
\begin{align}
  P(\Gamma_{A^{\prime}B^{\prime}\to L}
        \circ\Lambda_{AB\to A^{\prime}B^{\prime}}(\rho_{AB}),\Psi_L)
= P(\Gamma_{A^{\prime}B^{\prime}\to L}(\sigma_{A^{\prime}B^{\prime}}), \Psi_L)
\leq\varepsilon.
\end{align}
That is to say, the composite operation $\Gamma\circ\Lambda$ distills a MCS of rank $\vert L\vert$ such that the error
is bounded by $\varepsilon$. Since $\Lambda \in
\QIP$, $\Gamma \in \QIP$, and the class $\QIP$ is closed under composition~\cite{yamasaki2019hierarchy}, we have
$\Gamma\circ\Lambda\in\QIP$. As so, we have constructed an operation in $\QIP$ that distills
MCS at rate $\log\vert L\vert$ and thus complete the proof.
\end{proof}

\vspace*{\baselineskip}

Now we proceed to prove the method declared in 
Proposition~\ref{prop: QIP protocol}.
The following lemma is utilized, whose proof can be checked by definition.

\begin{lemma}\label{lem: fidelity}
    Let $\ket{u_i}$ and $\ket{v_i}$ be the purification of $\rho_i$ and $\sigma_i$
    such that $F(\rho_i,\sigma_i) = \<u_i|v_i\>$. Then we have
    \begin{align}\label{eq: app tmp1}
        F\left(\sum_i p_i \ket{i}\bra{i}\ox \rho_i, \sum_i q_i \ket{i}\bra{i} \ox \sigma_i\right)
    = \sum_i \sqrt{p_iq_i} F(\rho_i,\sigma_i)
    = F\left(\sum_i \sqrt{p_i}\ket{i}\ket{u_i},\sum_i \sqrt{q_i}\ket{i}\ket{v_i}\right).
    \end{align}
\end{lemma}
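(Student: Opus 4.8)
The plan is to verify the two equalities separately, both by direct computation exploiting the block-diagonal structure induced by the mutually orthogonal flags $\{\ket{i}\}$. Throughout I treat $\rho_i,\sigma_i$ as normalized states, so that $\Theta:=\sum_i p_i\ket{i}\bra{i}\ox\rho_i$ and $\Xi:=\sum_i q_i\ket{i}\bra{i}\ox\sigma_i$ are again states and the generalized-fidelity correction term $\sqrt{(1-\tr\rho)(1-\tr\sigma)}$ vanishes; should one wish to allow subnormalized inputs, that term splits blockwise in exactly the same manner.

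For the first equality I would work from $F(\rho,\sigma)=\|\sqrt{\rho}\sqrt{\sigma}\|_1$. Since $\Theta$ and $\Xi$ are block diagonal with respect to the orthogonal decomposition labelled by $i$, their square roots are obtained blockwise, $\sqrt{\Theta}=\sum_i\sqrt{p_i}\ket{i}\bra{i}\ox\sqrt{\rho_i}$ and likewise for $\sqrt{\Xi}$. Multiplying and using $\braket{i}{j}=\delta_{ij}$ to kill the cross terms gives $\sqrt{\Theta}\sqrt{\Xi}=\sum_i\sqrt{p_iq_i}\,\ket{i}\bra{i}\ox\sqrt{\rho_i}\sqrt{\sigma_i}$. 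The key fact to invoke is that the trace norm of an operator supported blockwise on mutually orthogonal subspaces is the sum of the trace norms of the blocks, so $\|\sqrt{\Theta}\sqrt{\Xi}\|_1=\sum_i\sqrt{p_iq_i}\|\sqrt{\rho_i}\sqrt{\sigma_i}\|_1=\sum_i\sqrt{p_iq_i}F(\rho_i,\sigma_i)$, which is the claimed middle expression. For the second equality I would use that the fidelity between two pure states equals the modulus of their overlap. Writing $\ket{U}:=\sum_i\sqrt{p_i}\ket{i}\ket{u_i}$ and $\ket{V}:=\sum_i\sqrt{q_i}\ket{i}\ket{v_i}$, I compute $\braket{U}{V}=\sum_{i,j}\sqrt{p_iq_j}\braket{i}{j}\braket{u_i}{v_j}=\sum_i\sqrt{p_iq_i}\braket{u_i}{v_i}$, again by orthogonality of the flags; the hypothesis $\braket{u_i}{v_i}=F(\rho_i,\sigma_i)$ then yields $\braket{U}{V}=\sum_i\sqrt{p_iq_i}F(\rho_i,\sigma_i)$, a nonnegative real number since each $F(\rho_i,\sigma_i)\ge 0$ and each weight is nonnegative. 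Hence $F(\ket{U},\ket{V})=|\braket{U}{V}|=\sum_i\sqrt{p_iq_i}F(\rho_i,\sigma_i)$, closing the chain.

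There is no substantive obstacle here, as the lemma is essentially a bookkeeping identity, but the one point that warrants care is the blockwise additivity of the trace norm used in the first step: it holds precisely because the projectors $\ket{i}\bra{i}$ are mutually orthogonal, so that $\sqrt{\Theta}\sqrt{\Xi}$ is a genuine direct sum whose singular values are the union, with multiplicity, of those of the individual blocks. I would also note that the hypothesis $\braket{u_i}{v_i}=F(\rho_i,\sigma_i)$ is exactly the Uhlmann optimality condition, so such purifications always exist and $\ket{U},\ket{V}$ are legitimate purifications of $\Theta,\Xi$; this makes the middle term of~\eqref{eq: app tmp1} a bridge that both outer expressions reduce to, rather than requiring a direct comparison of the two fidelities.
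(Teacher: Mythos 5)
Your proof is correct and is exactly the routine verification the paper has in mind (the paper omits the argument entirely, stating only that the lemma ``can be checked by definition''): blockwise square roots plus additivity of the trace norm over mutually orthogonal blocks give the first equality, and the overlap computation with the Uhlmann-optimality hypothesis $\braket{u_i}{v_i}=F(\rho_i,\sigma_i)\ge 0$ gives the second. The only quibble is your aside that the generalized-fidelity correction term ``splits blockwise'' for subnormalized inputs---it does not, since $\sqrt{(1-\sum_i p_i)(1-\sum_i q_i)}$ is not a sum over $i$---but this is immaterial because the lemma, as stated and as used, concerns normalized states where that term vanishes.
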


\noindent\textbf{[Proof of Proposition~\ref{prop: QIP protocol}]}
Let $\sigma^\ast_R$ be a quantum state that attains the minimum in
\begin{align}
  d_{sec}(\rho[\id,\Delta,f]_{LR}|R)
:=&\; \min_{\sigma_R \in \cS(R)} P(\rho[\id,\Delta,f]_{LR},\pi_L\otimes \sigma_R) \\
 =&\; P(\rho[\id,\Delta,f]_{LR},\pi_L\otimes \sigma^\ast_R) \leq\varepsilon.
\end{align}
The relation between purified distance $P$ and fidelity $F$ gives
\begin{align}\label{eq:EA-direct-tmp2}
    F(\rho[\id,\Delta,f]_{LR},\pi_L\otimes \sigma^*_{R}) \geq \sqrt{1-\varepsilon^2}.
\end{align}
We decompose the pure tripartite state $\ket{\psi}_{RAB}$ into the basis $\cB$ of system $B$ as
\begin{align}
  \ket{\psi}_{RAB} := \sum_{b\in\cB} \sqrt{p_B(b)} \ket{b}_B\vert\psi^b\rangle_{RA},
\end{align}
where $p_B$ a probability distribution and $\{\vert\psi^b\rangle\}$ a set of pure states in $RA$ which are not
necessarily mutually orthogonal. Define the incoherent isometry $U_f$ from $\cH_B$ to $\cH_{LB}$ such that
\begin{align}
\forall b\in\cB,\; U_f \ket{b}_B:= \ket{f(b)}_L\ket{b}_B.
\end{align}
Notice that
\begin{align}
    U_f\ket{\psi}_{ABR}
=  \sum_{b\in\cB} \sqrt{p_B(b)} U_f\ket{b}_B\vert\psi^b\rangle_{RA}
=  \sum_{b\in\cB} \sqrt{p_B(b)} \ket{f(b)}_L\ket{b}_B\vert\psi^b\rangle_{RA}.
\end{align}
Since the mapping $f:\cB\to \cL$ is non-injective, where $\cL$ is the alphabet of $L$, each new basis $\ell$ may
 corresponds to several bases $b$ such that $f(b)=\ell$. As so, we can equivalently write $U_f\ket{\psi}_{ABR}$ as
\begin{align}
U_f\ket{\psi}_{ABR} = \sum_{\ell\in \cL} \sqrt{r_\ell} \ket{\ell}_L \otimes \vert\phi^\ell\rangle_{RAB},
\end{align}
where the normalized vectors $\vert\phi^\ell\rangle_{RAB}$ and normalization factors $r_\ell$ for $\ell\in\cL$ satisfy
\begin{align}
        \sqrt{r_\ell} \vert\phi^\ell\rangle_{RAB}
:= \sum_{b \in \cB: f(b) = \ell} \sqrt{p_B(b)}\ket{b}_B\otimes \vert\psi^b\rangle_{RA}.
\end{align}

Let $\ket{\phi^*}_{RAB}$ on $\cH_{RAB}$ be a purification of $\sigma^\ast_R$. By the Uhlmann's
theorem~\cite{uhlmann1976transition}, for each $\vert\phi^\ell\rangle_{RAB}$ there exists a unitary $U_\ell$ on
$\cH_{AB}$ such that
\begin{align}\label{eq:EA-direct-tmp3}
F(\tr_{AB} \proj{\phi_\ell}_{RAB},\sigma^*_{R})
= F(U_\ell\vert\phi^\ell\rangle_{RAB},\ket{\phi^*}_{RAB}).
\end{align}
Define the conditional unitary $U := \sum_{\ell \in \cL} \proj{\ell}_L \otimes U_\ell$.
We have
\begin{align}
&\; F\left(UU_f \ket{\psi}_{RAB},\; \ket{\Psi}_L \otimes \ket{\phi^*}_{RAB}\right)  \\
\stackrel{(a)}{=}&\;
    F\left(\sum_{\ell\in \cL} \sqrt{r_\ell} \ket{\ell}_L \otimes U_{\ell}\vert\phi^\ell\rangle_{RAB},\;
          \sum_{\ell\in\cL} \frac{1}{\sqrt{\vert\cL\vert}}\ket{\ell}_L \otimes \ket{\phi^*}_{RAB}\right)\\
\stackrel{(b)}{=}&\;
    \sum_{\ell\in \cL}\sqrt{\frac{r_\ell}{\vert\cL\vert}}
    F\left(U_\ell \vert\phi^\ell\rangle_{RAB}, \ket{\phi^*}_{RAB}\right)\\
\stackrel{(c)}{=}&\;
    \sum_{\ell\in \cL}\sqrt{\frac{r_\ell}{\vert\cL\vert}}
    F\left(\tr_{AB} \proj{\phi_\ell}_{RAB},\sigma^*_{R}\right) \\
\stackrel{(d)}{=}&\;
    F\left( \sum_{\ell\in\cL} r_\ell \proj{\ell}_L\otimes \tr_{AB}\proj{\phi^\ell}_{RAB},
            \sum_{\ell\in\cL}\frac{1}{\vert\cL\vert}\proj{\ell}_L\otimes\sigma_R^*\right)\\
\stackrel{(e)}{=}&\;
    F\left(\rho[\id,\Delta,f]_{LR}, \pi_L\otimes \sigma^*_{R}\right), \label{eq:EA-direct-tmp1}
\end{align}
where $(a)$ follows by definition, $(b)$ and $(d)$ follow from Lemma~\ref{lem: fidelity}, $(c)$ follows from
Eq.~\eqref{eq:EA-direct-tmp3}, and $(e)$ follows from the fact that
\begin{align}
  \rho[\id,\Delta,f]_{LR}
= \tr_{AB} U_f\Delta_B\left(\Psi_{RAB}\right)U_f^\dagger
= \sum_{\ell\in \cL}r_\ell\proj{\ell}_L\otimes \tr_{AB} \proj{\phi^\ell}_{RAB}.
\end{align}

Now we construct the required quantum operation $\Gamma_{AB\to L}$ as
\begin{align}
  \Gamma_{AB\to L}(\cdot) := \tr_{AB}\left[UU_f (\cdot) U_f^\dagger U^\dagger\right].
\end{align}
See Figure~\ref{fig:distillation-protocol} for illustration of this construction. First we show that $\Gamma_{AB\to L}$
can transform $\rho_{AB}$ to a MCS $\Psi_L$ within error $\varepsilon$. Consider the following chain of inequalities:
\begin{align}
    F\left(\Gamma_{AB\to L}(\rho_{AB}), \proj{\Psi}_L\right)
&=  F\left(\tr_{AB}\left[UU_f\rho_{AB}U_f^\dagger U^\dagger\right],
            \proj{\Phi}_L\right) \\
&\stackrel{(a)}{\geq} F\left(UU_f \ket{\psi}_{RAB}, \ket{\Phi}_L\ox\ket{\phi^*}_{RAB}\right) \\
&\stackrel{(b)}{=} F\left(\rho[\id,\Delta,f]_{LR}, \pi_L\otimes \sigma^*_{R}\right) \\
&\stackrel{(c)}{\geq} \sqrt{1-\varepsilon^2},
\end{align}
where $(a)$ follows by the data-processing inequality of the fidelity $F$ under $\tr_{ABR}$, $(b)$ follows
by~\eqref{eq:EA-direct-tmp1}, and $(c)$ follows from~\eqref{eq:EA-direct-tmp2}. This implies that
$P(\Gamma_{AB\to L}(\rho_ {AB}), \Psi_L) \leq \varepsilon$.
It remains to check that $\Gamma_{AB\to L}\in\QIP$. Note that $\Gamma$ admits the following Kraus decomposition
\begin{align}
    \Gamma(\cdot) = \sum_ {b\in \cB} K_b U_f (\cdot) U_f^\dagger K_b^\dagger
\end{align}
with operators $K_b:=\bra{b}U$. This is indeed a Kraus decomposition since
\begin{align}
  \sum_ {b\in \cB}\left(U_f^\dagger K_b^\dagger\right) \left(K_b U_f\right)
= U_f^\dagger\left(\sum_{b\in \cB}K_b^\dagger K_b\right)U_f
= U_f^\dagger U^\dagger \left(\sum_{b\in\cB}\proj{b}_B\right)U U_f
= U_f^\dagger U^\dagger U U_f = \1.
\end{align}
On the other hand, one can verify that for any bases $b,b^\prime \in \cB$ and any quantum state $\rho_A^{b^\prime}$ in
$A$,
\begin{align}
  K_b U_f\left(\proj{b^\prime}_B\otimes\rho_A^{b^\prime}\right) U_f^\dagger K_b^\dagger
= \tr\left[K_b \rho_A^{b^\prime}  K_b^\dagger\right]
  \vert\langle b \vert U_{f(b^\prime)} \vert b^\prime\rangle\vert^2
  \proj{f(b^\prime)}_B \in \cI^{**}(B).
\end{align}
That is to say, $\Gamma$ transforms arbitrary quantum-incoherent state in $\mathcal{QI}$ to some incoherent state in
$\cI$, which in turn is a strict subset of $\mathcal{QI}$. Thus we can conclude that $\Gamma \in \QIP$.
We are done.
\hfill $\blacksquare$

\begin{figure}[!hbtp]
\centering
\includegraphics[width=0.6\textwidth]{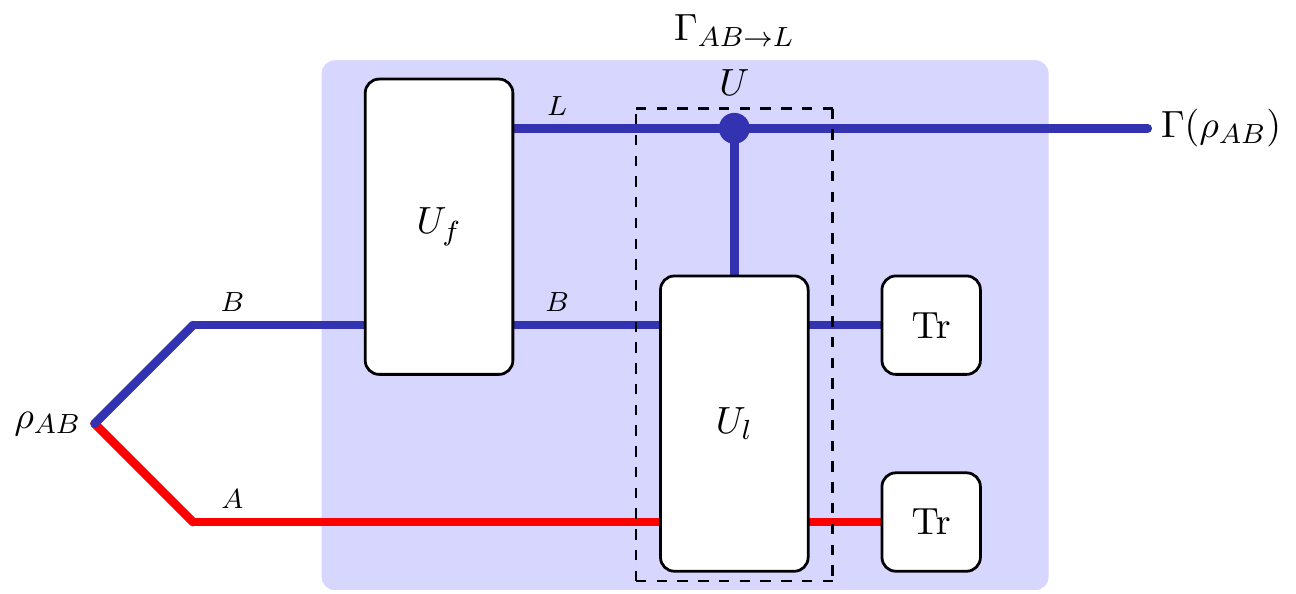}
\caption{Schematic diagram of the $\QIP$ distillation protocol in 
        Proposition~\ref{prop: QIP protocol}.
        The system in red belongs to Alice and the system in blue belongs to Bob.
        The isometry $U_f$ and conditional unitary $U$ in the dashed box are not 
        necessarily incoherent.
        The shaded area depicts the designed operation $\Gamma_{AB\to L}$, 
which belongs to $\QIP$.}
\label{fig:distillation-protocol}
\end{figure}

\subsection{Second order analysis}

We now give second order characterization to both the assisted coherence distillation and the assisted randomness
extraction. We do so by first establishing one-shot achievability and converse bounds for these two tasks in terms of
the hypothesis testing relative entropy. These two bounds have matching dependence on the error threshold
$\varepsilon$. Then we derive a second order characterization by invoking the second order expansion of the hypothesis
testing relative entropy. As a direct corollary, we show that the ultimate rate in the asymptotic setting is uniquely
determined by the quantum incoherent relative entropy of coherence.

\begin{theorem}[One-shot characterization]\label{thm:one-shot-characterization}
Let $\cF\in\{\LICC,\LQICC,\SI,\SQI,\QIP\}$.
Let $\rho_{AB}$ be a bipartite quantum state and $\varepsilon\in(0,1)$.
For arbitrary $\eta\in(0,\varepsilon)$ and $\delta\in(0,\min\{(\varepsilon-\eta)^2/3, 1-(\varepsilon-\eta)^2\})$,
it holds that
\begin{subequations}\label{eq:one-shot characterization}
\begin{align}
D_H^{(\varepsilon-\eta)^2-2\delta}\left(\rho_{AB}\rel\Delta_B(\rho_{AB})\right)
- c(\rho_{AB},\varepsilon,\delta,\eta)
&\stackrel{\text{\ding{172}}}{\leq} \ell_{\cF}^\varepsilon(\rho_{AB}) \label{eq:one-shot characterization-1} \\
&\stackrel{\text{\ding{173}}}{\leq} \ell_{\QIP}^\varepsilon(\rho_{AB}) \\
&\stackrel{\text{\ding{174}}}{=}  C_{d,\QIP}^{\ve}(\rho_{AB})
\stackrel{\text{\ding{175}}}{\leq}
D_H^{\varepsilon^2}\left(\rho_{AB}\rel\Delta_B(\rho_{AB})\right),\label{eq:one-shot characterization-2}
\end{align}
\end{subequations}
where the correction term $c$ is defined as
\begin{align}
  c(\rho_{AB},\varepsilon,\delta,\eta)
:=&\; \log \theta(\rho_{AB}) + \log \theta(\Delta_B(\rho_{AB}))\nonumber \\
  &\quad + \log (({\varepsilon }-\eta)^2-\delta)-\log (\delta^5\eta^4 ({\varepsilon }-\eta)^2
  (1-({\varepsilon }-\eta)^2+\delta)) + 11.\label{eq:correction term}
\end{align}
\end{theorem}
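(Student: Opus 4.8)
The plan is to verify the four links of the displayed chain separately; two of them are immediate. Link \ding{174} is exactly the exact one-shot connection of Theorem~\ref{thm:equivalence}, so nothing new is required there. Link \ding{173} is the monotonicity of the extractable randomness under enlarging the operation class: since every $\cF\in\{\LICC,\LQICC,\SI,\SQI,\QIP\}$ satisfies $\cF\subseteq\QIP$ by the inclusions~\eqref{eq:inclusion-relation}, and $\ell_\cF^\varepsilon$ is by definition~\eqref{eq:one-shot-er} a maximum of $\ell_\Lambda^\varepsilon$ over $\Lambda\in\cF$, passing to the larger feasible set $\QIP$ can only increase the value, giving $\ell_\cF^\varepsilon(\rho_{AB})\leq\ell_\QIP^\varepsilon(\rho_{AB})$. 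It then remains to establish the converse bound \ding{175} and the achievability bound \ding{172}.

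For the converse \ding{175} I would argue exactly as in the unassisted Proposition~\ref{prop: one shot estimation}, but exploiting that $\QIP$ preserves the set $\mathcal{QI}$ of quantum-incoherent states. Let $\Lambda\in\QIP$ attain $C_{d,\QIP}^\varepsilon(\rho_{AB})=\log|B'|$, so $P(\tr_{A'}\Lambda_{AB\to A'B'}(\rho_{AB}),\Psi_{B'})\leq\varepsilon$. Applying the data-processing inequality of the hypothesis testing relative entropy under the channel $\tr_{A'}\circ\Lambda$ yields
\begin{align}
D_H^{\varepsilon^2}(\rho_{AB}\|\Delta_B(\rho_{AB}))
\geq D_H^{\varepsilon^2}(\tr_{A'}\Lambda(\rho_{AB})\|\omega_{B'}),\quad
\omega_{B'}:=\tr_{A'}\Lambda(\Delta_B(\rho_{AB})).\nonumber
\end{align}
The crucial point is that $\Delta_B(\rho_{AB})$ has the quantum-incoherent form~\eqref{eq:quantum-incoherent}, so $\Lambda(\Delta_B(\rho_{AB}))\in\mathcal{QI}$ and hence $\omega_{B'}$ is incoherent, i.e.\ diagonal in the computational basis. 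Choosing the feasible test operator $M=\Psi_{B'}$ — feasible because the fidelity bound gives $\tr[\Psi_{B'}\tr_{A'}\Lambda(\rho_{AB})]=F(\tr_{A'}\Lambda(\rho_{AB}),\Psi_{B'})^2\geq 1-\varepsilon^2$ — and using $\tr[\Psi_{B'}\omega_{B'}]=1/|B'|$ for diagonal $\omega_{B'}$, the definition of $D_H$ gives $D_H^{\varepsilon^2}(\tr_{A'}\Lambda(\rho_{AB})\|\omega_{B'})\geq-\log(1/|B'|)=\log|B'|$, which is the claim.

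For the achievability \ding{172}, the starting point is the hashing-type lower bound~\eqref{eq: hashing lemma 2}, available because the identity protocol $(\id_{AB\to AB},\Delta_B,f)$ lies in every class $\cF$; it gives $\ell_\cF^\varepsilon(\rho_{AB})\geq H_{\min}^{\varepsilon-\eta}(B|R)_\sigma+4\log\eta-3$ with $\sigma_{RAB}=\Delta_B(\psi_{RAB})$ and $\psi_{RAB}$ a purification of $\rho_{AB}$. I would then convert the smooth conditional min-entropy of the dephased state into the target hypothesis testing relative entropy by invoking the entropic relations of Proposition~\ref{prop:relations} proved in Section~\ref{sec: tripartite state} — the assisted analog of the bound~\eqref{eq: Hmin DH} used in the proof of Proposition~\ref{prop: one shot estimation}. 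This replaces $H_{\min}^{\varepsilon-\eta}(B|R)_\sigma$ by $D_H^{(\varepsilon-\eta)^2-2\delta}(\rho_{AB}\|\Delta_B(\rho_{AB}))$ up to a correction, and collecting every smoothing and integer-rounding loss produces exactly the term $c(\rho_{AB},\varepsilon,\delta,\eta)$ of~\eqref{eq:correction term}.

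The main obstacle is this last conversion, which is precisely why Section~\ref{sec: tripartite state} is developed as a separate ingredient. In the unassisted case the relation~\eqref{eq: Hmin DH} links the two-party dephased state to $D_H(\rho_B\|\Delta(\rho_B))$; here one must instead work with the genuinely tripartite dephased state $\Delta_B(\Psi_{RAB})$, in which Alice's system $A$ is present and must be handled so that the min-entropy $H_{\min}(B|R)$ on the $BR$ marginal still controls the coherence quantity $D_H(\rho_{AB}\|\Delta_B(\rho_{AB}))$ living on $AB$. The careful bookkeeping — the passage from the purified-distance smoothing of $H_{\min}$ to the squared parameter $(\varepsilon-\eta)^2$, the further $-2\delta$ slack incurred in the min-entropy-to-$D_H$ conversion, and the $\theta$-function factors $\log\theta(\rho_{AB})$ and $\log\theta(\Delta_B(\rho_{AB}))$ tracking eigenvalue multiplicities and spread — is where the technical effort concentrates; once Proposition~\ref{prop:relations} is in hand, the four links assemble into~\eqref{eq:one-shot characterization}.
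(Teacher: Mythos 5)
Your proposal is correct and follows essentially the same route as the paper: \ding{173} from the inclusion relations, \ding{174} from Theorem~\ref{thm:equivalence}, \ding{172} from the hashing bound~\eqref{eq: hashing lemma 2} combined with~\eqref{eq: Hmin DH} of Proposition~\ref{prop:relations}, and \ding{175} from the fact that $\Lambda\in\QIP$ maps $\Delta_B(\rho_{AB})\in\mathcal{QI}$ to a state whose $B'$-marginal is diagonal. Your converse is phrased via data-processing of $D_H$ followed by the test $\Psi_{B'}$ on the output, whereas the paper (Lemma~\ref{lemma:converse}) explicitly pulls the test back through the adjoint $\cD^\dagger(\Psi_M)$ — these are the same computation.
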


The proof of Theorem~\ref{thm:one-shot-characterization} is divided into the following pieces: the achievability part
\text{\ding{172}} is drawn in Lemma~\ref{lemma:achievability}; the inequality \text{\ding{173}} follows directly from
the inclusion relation~\eqref{eq:inclusion-relation}; the equality \text{\ding{174}} has already been shown in
Theorem~\ref{thm:equivalence}; and the converser part \text{\ding{175}} is proved in Lemma~\ref{lemma:converse}.
However, we note that the one-shot characterization for the assisted coherence distillation
$C_{d,\QIP}^{\ve}(\rho_{AB})$ cannot be generalized to other classes of free operations due to that
$\text{\ding{174}}$ holds only for $\QIP$.

\begin{lemma}[Achievability]\label{lemma:achievability}
Let $\cF\in\{\LICC,\LQICC,\SI,\SQI,\QIP\}$.
Let $\rho_{AB}$ be a bipartite quantum state and $\varepsilon\in(0,1)$.
For arbitrary $\eta\in(0,\varepsilon)$ and $\delta\in(0,\min\{(\varepsilon-\eta)^2/3, 1-(\varepsilon-\eta)^2\})$,
it holds that
\begin{align}\label{eq:achievability}
\ell_{\cF}^\varepsilon(\rho_{AB}) \geq
D_H^{(\varepsilon-\eta)^2-2\delta}\left(\rho_{AB}\rel\Delta_B(\rho_{AB})\right)
- c(\rho_{AB},\varepsilon,\delta,\eta).
\end{align}
where the correction term $c$ is defined in~\eqref{eq:correction term}.
\end{lemma}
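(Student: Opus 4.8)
The plan is to chain together the hashing-type lower bound already recorded in~\eqref{eq: hashing lemma 2} with an entropic conversion supplied by the tripartite analysis of Section~\ref{sec: tripartite state}. Since the identity map $\id_{AB\to AB}$ is free in every distributed class, we have $\id\in\cF$ for each $\cF\in\{\LICC,\LQICC,\SI,\SQI,\QIP\}$, so the protocol $(\id_{AB\to AB},\Delta_B,f)$ is always admissible and the bound~\eqref{eq: hashing lemma 2} applies verbatim:
\[
\ell_{\cF}^\varepsilon(\rho_{AB}) \geq H_{\min}^{\varepsilon-\eta}(B|R)_{\sigma} + 4\log\eta - 3,
\]
where $\sigma_{RAB}=\Delta_B(\psi_{RAB})$ and $\ket{\psi}_{RAB}$ is a purification of $\rho_{AB}$. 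This step costs essentially nothing; the whole content of the lemma is to turn the smooth conditional min-entropy into the hypothesis testing relative entropy of $\rho_{AB}$ against its $B$-dephasing.

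Next I would invoke the relation~\eqref{eq: Hmin DH} of Proposition~\ref{prop:relations}, evaluated on the dephased tripartite state $\Delta_B(\Psi_{RAB})$, to lower bound
\[
H_{\min}^{\varepsilon-\eta}(B|R)_{\sigma} \geq D_H^{(\varepsilon-\eta)^2-2\delta}\left(\rho_{AB}\rel\Delta_B(\rho_{AB})\right) - c_0,
\]
where $c_0$ gathers $\log\theta(\rho_{AB})$, $\log\theta(\Delta_B(\rho_{AB}))$ and logarithmic factors in $\delta,\varepsilon,\eta$. The quadratic change of the smoothing level from $\varepsilon-\eta$ to $(\varepsilon-\eta)^2-2\delta$ reflects the passage from purified-distance smoothing (which controls the fidelity, hence $P=\sqrt{1-F^2}$) to the hypothesis testing error, with $-2\delta$ a slack absorbed by the conversion. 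Combining the two displays and consolidating $4\log\eta-3$ with $c_0$, I would then verify that the total overhead equals exactly $c(\rho_{AB},\varepsilon,\delta,\eta)$ of~\eqref{eq:correction term}; in particular the $+4\log\eta$ from the hashing bound cancels the $-4\log\eta$ sitting inside $c$. This mirrors the bookkeeping already carried out in the unassisted Proposition~\ref{prop: one shot estimation}, whose correction term has the identical form with $\rho_B,\Delta$ replaced by $\rho_{AB},\Delta_B$.

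The hard part is the second display, i.e.\ the relation~\eqref{eq: Hmin DH}, whose justification is deferred to Section~\ref{sec: tripartite state}. In contrast to the unassisted case, where $R$ alone purifies $\rho_B$ and $\sigma_{BR}=\Delta(\psi_{BR})$ comes from a pure $\psi_{BR}$, here the marginal $\psi_{BR}=\tr_A\psi_{RAB}$ is generically mixed. The conversion must therefore bridge a conditional min-entropy living on $BR$ with a hypothesis testing relative entropy living on the full joint system $AB$, and the bridge is precisely the special structure of the dephased tripartite pure state $\Delta_B(\Psi_{RAB})$. Establishing that structural relation, rather than the present two-line combination, is where the genuine work resides.
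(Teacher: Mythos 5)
Your proposal is correct and follows exactly the paper's own proof: the paper likewise chains the hashing bound~\eqref{eq: hashing lemma 2} (valid for every $\cF$ since $\id_{AB\to AB}$ is free) with the conversion~\eqref{eq: Hmin DH} of Proposition~\ref{prop:relations} applied at smoothing level $\varepsilon-\eta$, and your bookkeeping check that $c(\rho_{AB},\varepsilon-\eta,\delta)-4\log\eta+3 = c(\rho_{AB},\varepsilon,\delta,\eta)$ is accurate. Your closing observation is also on point: the substantive work lives in establishing~\eqref{eq: Hmin DH} for the dephased tripartite state, which the paper defers to Section~\ref{sec: tripartite state} just as you do.
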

\begin{proof}
In~\eqref{eq: hashing lemma 2} we already obtained the following lower bound for arbitrary free class $\cF$
in terms of the smooth conditional min-entropy:
\begin{align}\label{eq:hashing lemma QIP}
\ell_{\cF}^\varepsilon(\rho_{AB})
\geq H_{\min}^{\varepsilon -\eta}(B|R)_{\sigma} + 4\log \eta-3,
\end{align}
where $\sigma_{RAB}:=\Delta_B(\proj{\psi}_{RAB})$ is the dephased classical-quantum state. Moreover, we establish
in Proposition~\ref{prop:relations} a relation between the smooth min entropy $H_{\min}^\varepsilon$ and the hypothesis
testing relative entropy $D_H^\varepsilon$ regarding state $\sigma_{RAB}$.
Adapting~\eqref{eq: Hmin DH} into~\eqref{eq:hashing lemma QIP}, we reach the desired~\eqref{eq:achievability}.
\end{proof}

\vspace*{0.1in}

Our converse bound proved below asserts that the quantum-incoherent hypothesis testing relative entropy of coherence
upper bounds the one-shot assisted distillable coherence of $\rho_{AB}$ via arbitrary free operations
$\cF\in\{\LICC,\LQICC,\SI,\SQI,\QIP\}$. This result can be viewed as an one-shot analog of~\cite[Theorem
2]{streltsov2017towards}, in which Streltsov \textit{et al.} showed that the quantum-incoherent relative entropy of
coherence upper bounds the asymptotic assisted distillable coherence of $\rho_{AB}$. Note that \text{\ding{175}} in~\eqref{eq:one-shot
characterization-2} is immediately concluded from Lemma~\ref{lemma:converse} by choosing
$\sigma_{AB}\equiv\Delta_B(\rho_{AB})$ in~\eqref{eq:converse}.

\begin{lemma}[Converse]\label{lemma:converse}
Let $\cF\in\{\LICC,\LQICC,\SI,\SQI,\QIP\}$.
Let $\rho_{AB}$ be a bipartite quantum state and $\varepsilon\in(0,1)$. It holds that
\begin{align}\label{eq:converse}
  C^\varepsilon_{d,\cF}(\rho_{AB})
\leq \min_{\sigma_{AB}\in\mathcal{QI}}D_H^{\varepsilon^2}\left(\rho_{AB}\rel\sigma_{AB}\right).
\end{align}
\end{lemma}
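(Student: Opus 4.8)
The plan is to establish, for every fixed $\sigma_{AB}\in\mathcal{QI}$, the single inequality $C^\varepsilon_{d,\cF}(\rho_{AB})\leq D_H^{\varepsilon^2}(\rho_{AB}\|\sigma_{AB})$; minimizing over $\sigma_{AB}$ then gives~\eqref{eq:converse}. Let $\Lambda_{AB\to A'B'}\in\cF$ attain the assisted distillation rate, so that $|B'|=2^{C^\varepsilon_{d,\cF}(\rho_{AB})}$ and $P\big(\tr_{A'}\Lambda(\rho_{AB}),\Psi_{B'}\big)\leq\varepsilon$. Write $\omega_{B'}:=\tr_{A'}\Lambda(\rho_{AB})$ and $\tau_{B'}:=\tr_{A'}\Lambda(\sigma_{AB})$.

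The structural observation driving the proof is that $\tau_{B'}$ is necessarily incoherent. Indeed, since $\cF\subseteq\QIP$ by the inclusion relations~\eqref{eq:inclusion-relation}, the output $\Lambda(\sigma_{AB})$ stays in $\mathcal{QI}$; a quantum-incoherent state has the form $\sum_b q_b\,\varsigma^b_{A'}\ox\proj{b}_{B'}$, so tracing out $A'$ yields the diagonal state $\tau_{B'}=\sum_b q_b\proj{b}_{B'}\in\cI(B')$. I would then combine two ingredients. First, the data-processing inequality for the hypothesis testing relative entropy under the CPTP map $\tr_{A'}\circ\Lambda$ gives $D_H^{\varepsilon^2}(\rho_{AB}\|\sigma_{AB})\geq D_H^{\varepsilon^2}(\omega_{B'}\|\tau_{B'})$. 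Second, I would lower-bound the right-hand side by $\log|B'|$ by exhibiting the maximally coherent projector $\Psi_{B'}$ as an explicit feasible test in the variational definition of $D_H$. Using $\Psi_{B'}$ is natural because $\tau_{B'}$ is diagonal: with $\ket{\Psi_{B'}}=|B'|^{-1/2}\sum_b\ket{b}$ one computes $\tr\Psi_{B'}\tau_{B'}=|B'|^{-1}\sum_b q_b=1/|B'|$, which supplies exactly the dimensional factor. Feasibility of $M=\Psi_{B'}$ follows from the distillation error bound: since $\Psi_{B'}$ is pure, $\tr\Psi_{B'}\omega_{B'}=F(\omega_{B'},\Psi_{B'})^2=1-P(\omega_{B'},\Psi_{B'})^2\geq 1-\varepsilon^2$. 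Hence $M=\Psi_{B'}$ satisfies $\tr M\omega_{B'}\geq 1-\varepsilon^2$ and $0\le M\le\1$, so $\min_M\tr M\tau_{B'}\leq 1/|B'|$ and therefore $D_H^{\varepsilon^2}(\omega_{B'}\|\tau_{B'})\geq\log|B'|$. Chaining the two inequalities closes the argument.

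The step requiring genuine care—rather than a hard obstacle—is the bookkeeping of the smoothing parameter: the distillation error is a purified-distance threshold $\varepsilon$, whereas the converse is phrased with $D_H$ at level $\varepsilon^2$. The identity $F^2=1-P^2$ is precisely what turns $P\leq\varepsilon$ into $\tr\Psi_{B'}\omega_{B'}\geq 1-\varepsilon^2$, making $\Psi_{B'}$ admissible at smoothing level $\varepsilon^2$ and explaining why $\varepsilon^2$ (not $\varepsilon$) is the correct exponent; this is the same $\varepsilon\mapsto\varepsilon^2$ matching that underlies the unassisted converse in Proposition~\ref{prop: one shot estimation}. The remaining pieces—data processing for $D_H$, the invariance of $\mathcal{QI}$ under $\QIP$, and the overlap computation $\tr\Psi_{B'}\tau_{B'}=1/|B'|$—are routine, and no smoothing of the states themselves is needed since the projector test already absorbs the error. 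Note that the argument uses only $\cF\subseteq\QIP$, so the bound holds uniformly for all five free classes, matching the generality asserted in the statement.
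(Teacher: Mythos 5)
Your proof is correct and is essentially the paper's argument: the paper pulls the test back to the input space by explicitly setting $M_{AB}=\cD^\dagger(\Psi_M)$ and verifying feasibility, which is exactly the content of the data-processing step you invoke, and both proofs hinge on the same observation that the free map sends $\mathcal{QI}$ states to incoherent (diagonal) states on $B'$ so that $\tr[\Psi_{B'}\tau_{B'}]\le 1/|B'|$. The $\varepsilon\mapsto\varepsilon^2$ bookkeeping via $F^2=1-P^2$ also matches the paper.
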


\begin{proof}
Assume $C^\varepsilon_{d,\cF}(\rho_{AB}) = \log M$, that is, there exists a free operation $\cD_{AB\to C}\in\cF$
such that
\begin{align}
    P\left(\cD_{AB\to C}(\rho_{AB}), \Psi_M\right)\leq\varepsilon.
\end{align}
By the definition of purified distance $P$, the above condition is equivalent to
\begin{align}\label{eq:converse-tmp1}
 \tr \left[\cD_{AB\to C}(\rho_{AB}) \Psi_M\right]  \geq 1-\varepsilon^2.
\end{align}

Set $M_{AB}:=\cD^\dagger(\Psi_M)$, where $\cD^\dagger$ is the adjoint of $\cD$.
Since $\cD$ is completely positive, so is $\cD^\dagger$. This gives $M_{AB}\geq 0$.
On the other hand, condition~\eqref{eq:converse-tmp1} guarantees that $M_{AB}\leq \1_{AB}$.
Now we have
\begin{align}
    \tr\left[\rho_{AB}M_{AB}\right]
=  \tr\left[\rho_{AB}\cD^\dagger(\Psi_M)\right]
= \tr\left[\cD(\rho_{AB})\Psi_M\right] \geq 1-\varepsilon^2.
\end{align}
That is to say, $M_{AB}$ is a feasible solution for $D_H^{\varepsilon^2}\left(\rho_{AB}\rel\sigma_{AB}\right)$. Then
\begin{align}
      D_H^{\varepsilon^2}\left(\rho_{AB}\rel\sigma_{AB}\right)
&\geq  -\log \tr\left[\sigma_{AB}M_{AB}\right] \\
&=   -\log \tr\left[\sigma_{AB}\cD^\dagger(\Psi_M)\right] \\
&=   -\log \tr\left[\cD(\sigma_{AB})\Psi_M\right] \\
&\geq -\log\frac{1}{M} \\
&= C^\varepsilon_{d,\cF}(\rho_{AB}),
\end{align}
where the second inequality follows from that $\cD$ is quantum-incoherent state preserving and thus $\cD(\sigma_{AB})$
is a incoherent state in $C$. This, together with Lemma~\ref{lemma:incoherent-bound}, leads to the desired
inequality. We are done.
\end{proof}

\begin{lemma}\label{lemma:incoherent-bound}
Let $\sigma\in\cI$ be an incoherent quantum state.
It holds that $\langle\Psi_M\vert\sigma\vert\Psi_M\rangle\leq 1/M$.
\end{lemma}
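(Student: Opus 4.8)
The plan is to exploit the defining feature of an incoherent state---diagonality in the reference basis---which kills every off-diagonal overlap with the maximally coherent state. Concretely, $\sigma \in \cI$ means $\sigma = \Delta(\sigma)$, where $\Delta$ is the completely dephasing channel in the reference basis. Since $\Delta$ is self-adjoint (it is an orthogonal projection onto the diagonal subalgebra), I would move the dephasing from $\sigma$ onto the MCS projector $\Psi_M = \proj{\Psi_M}$:
\[
  \langle\Psi_M\vert\sigma\vert\Psi_M\rangle
  = \tr\left[\Psi_M\,\Delta(\sigma)\right]
  = \tr\left[\Delta(\Psi_M)\,\sigma\right].
\]
The only computation needed is $\Delta(\Psi_M) = \frac{1}{M}\sum_{b=1}^{M}\proj{b}$, which is dominated by $\frac{1}{M}\1$. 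Substituting and using $\sigma\geq 0$ with $\tr\sigma = 1$ yields $\tr[\Delta(\Psi_M)\,\sigma] \leq \frac{1}{M}\tr\sigma = \frac{1}{M}$, which is exactly the claimed bound.

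Alternatively, one may argue by direct computation, which makes the source of the inequality transparent. Writing $\sigma = \sum_b q_b \proj{b}$ with $q_b \geq 0$ and $\sum_b q_b = 1$, and recalling $\ket{\Psi_M} = \frac{1}{\sqrt{M}}\sum_{b=1}^{M}\ket{b}$, only the diagonal entries of $\sigma$ contribute to the overlap, so that $\langle\Psi_M\vert\sigma\vert\Psi_M\rangle = \frac{1}{M}\sum_{b=1}^{M} q_b \leq \frac{1}{M}$. There is no substantive obstacle in this lemma; the only point worth flagging is that the support of $\sigma$ may extend beyond the $M$-dimensional span carrying $\ket{\Psi_M}$, so that $\sum_{b=1}^{M} q_b$ is genuinely $\leq 1$ and need not equal $1$. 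This is precisely why the statement is an inequality rather than an equality, and it is what licenses the final bound $-\log\tr[\sigma_{AB}M_{AB}] \geq -\log(1/M)$ used in the converse of Lemma~\ref{lemma:converse}.
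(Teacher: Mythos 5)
Your proof is correct; both routes you give are valid, and they amount to the same elementary computation that the paper leaves implicit (the lemma is stated without proof there). The dephasing argument $\tr[\Psi_M\,\Delta(\sigma)]=\tr[\Delta(\Psi_M)\,\sigma]\le \tfrac1M\tr\sigma$ and the direct diagonal-overlap calculation are both exactly what is needed, and your remark about the support of $\sigma$ possibly exceeding the span of $\ket{\Psi_M}$ correctly explains why the statement is an inequality.
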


The matching dependence on the error parameter $\varepsilon$ of the achievable and converse bounds in
Theorem~\ref{thm:one-shot-characterization} yields a second order expansion of the assisted distillable
coherence and also the assisted incoherent extractable randomness. The result is useful in refining the
optimal rate and determining the convergence rate of the assisted distillable coherence to its first order coefficient.

\begin{theorem}[Second order expansion]\label{thm: second order EA}
Let $\cF\in\{\LICC,\LQICC,\SI,\SQI,\QIP\}$.
Let $\rho_{AB}$ be a bipartite quantum state and $\varepsilon\in[0,1]$.
The following second order expansions hold:
\begin{subequations}
\begin{align}
    C_{d,\QIP}^{\ve}\left(\rho_{AB}^{\ox n}\right) & = nD\left(\rho_{AB}\rel\Delta_B(\rho_{AB})\right)
  + \sqrt{nV\left(\rho_{AB}\rel\Delta_B(\rho_{AB})\right)}\Phi^{-1}(\varepsilon^2) + O(\log n),\\
  \ell_{\cF}^\varepsilon\left(\rho_{AB}^{\ox n}\right)
& = nD\left(\rho_{AB}\rel\Delta_B(\rho_{AB})\right)
  + \sqrt{nV\left(\rho_{AB}\rel\Delta_B(\rho_{AB})\right)}\Phi^{-1}(\varepsilon^2) + O(\log n).
\end{align}
\end{subequations}
\end{theorem}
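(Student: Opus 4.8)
The plan is to derive both expansions at once by instantiating the one-shot sandwich of Theorem~\ref{thm:one-shot-characterization} at the $n$-copy state and expanding its two ends, mirroring the unassisted argument in the proof of Theorem~\ref{thm: second order}. First I would apply~\eqref{eq:one-shot characterization} to $\rho_{AB}^{\ox n}$. The key structural observation is that the $B$-dephasing factorizes over tensor powers, i.e. $\Delta_B(\rho_{AB}^{\ox n}) = \Delta_B(\rho_{AB})^{\ox n}$, since dephasing in the computational basis of $B^{\ox n}$ coincides with dephasing each copy individually. Consequently the three quantities $\ell_{\cF}^\ve(\rho_{AB}^{\ox n})$, $\ell_{\QIP}^\ve(\rho_{AB}^{\ox n})$ and $C_{d,\QIP}^{\ve}(\rho_{AB}^{\ox n})$ are all trapped between $D_H^{(\ve-\eta)^2-2\delta}(\rho_{AB}^{\ox n}\rel\Delta_B(\rho_{AB})^{\ox n}) - c(\rho_{AB}^{\ox n},\ve,\delta,\eta)$ (the achievability end \ding{172}) and $D_H^{\ve^2}(\rho_{AB}^{\ox n}\rel\Delta_B(\rho_{AB})^{\ox n})$ (the converse end \ding{175}).

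For the upper bound, the converse piece \ding{175} gives $C_{d,\QIP}^{\ve}(\rho_{AB}^{\ox n}) \le D_H^{\ve^2}(\rho_{AB}^{\ox n}\rel\Delta_B(\rho_{AB})^{\ox n})$, and I would expand the right-hand side directly via~\eqref{eq: DH second order} with the substitutions $\rho\leftarrow\rho_{AB}$, $\sigma\leftarrow\Delta_B(\rho_{AB})$ and error parameter $\ve^2$, yielding $nD(\rho_{AB}\rel\Delta_B(\rho_{AB})) + \sqrt{nV(\rho_{AB}\rel\Delta_B(\rho_{AB}))}\,\Phi^{-1}(\ve^2) + O(\log n)$.

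For the matching lower bound I would feed the achievability piece \ding{172} and choose the free parameters $\eta,\delta$ to scale as $1/\sqrt{n}$. Two points then need checking. First, the correction term $c(\rho_{AB}^{\ox n},\ve,\delta,\eta)$ of~\eqref{eq:correction term} stays in $O(\log n)$: by~\eqref{eq: definition of theta function} one has $\theta(\sigma^{\ox n}) \le 2\lceil\lambda(\sigma^{\ox n})\rceil = 2\lceil n\lambda(\sigma)\rceil$, so both $\log\theta(\rho_{AB}^{\ox n})$ and $\log\theta(\Delta_B(\rho_{AB})^{\ox n})$ are $O(\log n)$, while the remaining logarithms in $\delta,\eta\sim 1/\sqrt{n}$ contribute only $O(\log n)$. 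Second, by continuity of $\Phi^{-1}$ at $\ve^2$ the shifted exponent obeys $\Phi^{-1}((\ve-\eta)^2-2\delta) = \Phi^{-1}(\ve^2) + O(1/\sqrt{n})$, and since this is multiplied by the leading $\sqrt{n}$ prefactor the discrepancy is absorbed into $O(1)\subseteq O(\log n)$. Expanding \ding{172} via~\eqref{eq: DH second order} therefore reproduces exactly the leading and second order terms of the upper bound, so the sandwiched quantities all share this expansion; in particular both $C_{d,\QIP}^{\ve}(\rho_{AB}^{\ox n})$ and $\ell_{\cF}^\ve(\rho_{AB}^{\ox n})$ admit the claimed form.

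The argument is essentially a transcription of the unassisted Theorem~\ref{thm: second order}, so no genuinely new difficulty arises. The only delicate point is the simultaneous bookkeeping that confines the $\theta$-dependence and the $\eta,\delta$-dependent logarithms of the correction term to $O(\log n)$ while still driving the smoothing parameters to zero fast enough that $\Phi^{-1}$ is evaluated at $\ve^2$ in the limit; I expect this balancing of the correction term against the $1/\sqrt{n}$ perturbation of the error argument to be the main (and only) step requiring care.
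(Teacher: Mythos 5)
Your proposal is correct and follows essentially the same route as the paper: the authors prove Theorem~\ref{thm: second order EA} by citing Theorem~\ref{thm:one-shot-characterization} and repeating the argument of Theorem~\ref{thm: second order}, i.e.\ sandwiching the $n$-copy quantities between the two hypothesis-testing bounds, using $\Delta_B(\rho_{AB}^{\ox n})=\Delta_B(\rho_{AB})^{\ox n}$, taking $\eta,\delta\sim 1/\sqrt{n}$, and expanding via~\eqref{eq: DH second order}. Your bookkeeping of the correction term and of the $O(1/\sqrt{n})$ perturbation of the argument of $\Phi^{-1}$ matches the paper's treatment exactly.
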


\begin{proof}
The proof follows from Theorem~\ref{thm:one-shot-characterization}
and a similar argument of the proof of Theorem~\ref{thm: second order}.
\end{proof}

\begin{remark}
From Proposition~\ref{prop:relations} in Section~\ref{sec: tripartite state}, the first and second order asymptotics can also be written as conditional entropy $H(B|R)_{\sigma_{ABR}}$ and conditional information variance $V(B|R)_{\sigma_{ABR}}$ respectively, where $\sigma_{ABR}= \rho[\id,\Delta]$ is the dephased classical-quantum state from the assisted randomness extraction protocol.
\end{remark}

\begin{remark}\label{remark:asymptotic}
As a consequence of Theorem~\ref{thm: second order EA}, we find the quantum-incoherent relative entropy of
coherence, defined as $C^{A\vert B}_R(\rho_{AB})
:=\min_{\sigma_{AB}\in\mathcal{QI}}D\left(\rho_{AB}\rel\sigma_{AB}\right)
=D\left(\rho_{AB}\rel\Delta_B(\rho_{AB})\right)$~\cite[Eq. (4)]{chitambar2016assisted}, quantifies the ultimate power
of assistance in both the coherence distillation and incoherent randomness extraction tasks, in the sense that it is
the best distillation and extraction rates that can be achieved using the largest free class $\QIP$ in the distributed
scenario. This empowers the quantum-incoherent relative entropy of coherence measure a new operational meaning.
\end{remark}

\begin{remark}
Our results -- the single-shot in Theorem~\ref{thm:one-shot-characterization}, the second order in
Theorem~\ref{thm: second order EA}, and its corollary in Remark~\ref{remark:asymptotic} -- together paint an
`almost' complete picture for the assisted coherence distillation task. What's more, they cover many known results as
special cases:
\begin{enumerate}
  \item When $\rho_{AB}=\rho_A\ox\rho_B$ is a product state, our results reduce to the single partite coherence
      distillation~\cite{winter_2016}. This is so because $\QIP$ reduces to $\MIO$ in single party setting.
      Theorem~\ref{thm: second order EA} matches the second order
      results for the coherence distillation without assistance in Theorem~\ref{thm: second order}.
  \item When $\rho_{AB}$ is pure or maximally correlated, Remark~\ref{remark:asymptotic} enhances the results of
      Theorem 5 and Proposition 6 in~\cite{streltsov2017towards} by stating that the quantum-incoherent relative
      entropy of coherence $C^{A\vert B}_R(\rho_{AB})$ is the ultimate rate that can be achieved in the assisted
      coherence distillation even if we make use of the largest free operation class $\QIP$. Note that the same
      conclusion was previously obtained in~\cite[Proposition 19]{yamasaki2019hierarchy} for the pure state case.
\end{enumerate}
\end{remark}

\section{Strong converse property}
\label{sec: strong converse}

The direct part of resource distillation states that for any rate below the optimal rate, there is a corresponding distillation protocol that accomplishes the task successfully. More precisely, if we denote the transformation error in the protocol for $n$ uses of the underlying resource by $\ve$, then for any rate below the optimal rate there exists a protocol, whose transformation error $\ve$ vanishes in the asymptotic limit $n\to +\infty$. Such rates are called achievable, and the optimal rate is defined as the supremum over all achievable rates. In contrast, the converse part states that for any distillation protocol with a rate above the optimal rate, the error does not vanish asymptotically, that is, it is bounded away from 0 in the asymptotic limit $n\to +\infty$. This is usually called weak converse. In principle, it leaves open the possibility of a trade-off between error and rate of a protocol. However, the strong converse property rules out such a possibility, stating that for any distillation protocol with a rate above the optimal rate, the corresponding transformation error $\ve$ incurred in the protocol converges to one. In other words, such protocols become worse with increasing block length $n$, and eventually fail with certainty in the asymptotic limit. In this part we showcase a standard argument how a second order result automatically implies the
strong converse property.

As a concrete example, we consider the unassisted coherence distillation whose strong converse property has been pointed out by~\cite[Theorem 16]{zhao2019one}. Here we give an alternative proof. For simplicity, we denote $C_r(\rho) := D(\rho\|\Delta(\rho))$ and $V_r(\rho) :=
V(\rho\|\Delta(\rho))$. For any achievable rate $R_n$, we have $R_n \leq \frac{1}{n}C_{d,\cO}^{\ve}(\rho^{\ox n})$. By
Theorem~\ref{thm: second order}, we have
\begin{align}\label{eq: second order rate}
R_{n} \leq C_r(\rho) + \sqrt{\frac{V_r(\rho)}{n}}\, \Phi^{-1}(\ve^2) + f(n) \quad \text{with} \quad f(n) \in O\left(\frac{\log n}{n}\right).
\end{align}
Rearranging~\eqref{eq: second order rate} and using monotonicity of $\Phi$ yields
\begin{align}\label{eq: strong converse}
    \ve^2 \geq  {\Phi\left(\sqrt{\frac{n}{V_r(\rho)}} (R_n - C_r(\rho)) + g(n)\right)} \quad \text{with} \quad g(n) = -\frac{\sqrt{n}f(n)}{\sqrt{V_r(\rho)}}.
\end{align}
Thus $\lim_{n\to +\infty} g(n) = 0$. Note that $\lim_{x\to +\infty} \Phi(x) = 1$. For any achievable rate $R_n > C_r(\rho)$, the argument in~\eqref{eq: strong converse} diverges to $+\infty$ and thus we have $\ve \to 1$ as $n \to \infty$. This implies the strong converse property of coherence distillation under $\cO \in \{\MIO,\DIO,\IO,\DIIO\}$. Similar argument works for the incoherent randomness extraction.

Moreover, following the same argument outlined above, we can conclude from Theorem~\ref{thm: second order EA}
that both the assisted coherence distillation via $\QIP$ and the assisted incoherent randomness extraction
via arbitrary free operation class $\cF\in\{\LICC,\LQICC,\SI,\SQI,\QIP\}$ satisfy the strong converse property. This
promises the unique role of $C^{A\vert B}_R(\rho_{AB})$ in the two assisted tasks.

\section{Relations among entropies of a dephased tripartite quantum state}\label{sec: tripartite state}

Let $\rho_{AB}$ be a bipartite quantum state with purification $\ket{\psi}_{RAB}$, i.e., $\tr_R\psi_{RAB}=\rho_{AB}$.
Assume the following decomposition into the basis $\cB$ of system $B$:
\begin{align}\label{eq:purification}
  \ket{\psi}_{RAB} := \sum_{b\in\cB} \sqrt{p_B(b)} \ket{b}_B\vert\psi^b\rangle_{RA},
\end{align}
where $p_B$ a probability distribution and $\{\vert\psi^b\rangle\}$ a set of pure states in $AR$ which are not
necessarily mutually orthogonal. By definition, we have
$\rho_{AB} = \tr_R\psi_{RAB}$. Dephasing $B$ yields the classical-quantum state
\begin{align}\label{eq:dephasing}
  \sigma_{RAB} := \Delta_B(\psi_{RAB})
                   = \sum_{a\in\cB}p_b\proj{b}_B\ox\proj{\psi^b}_{RA}.
\end{align}
Notice that $\sigma_{AB} = \Delta_B(\rho_{AB})$. In the following, we establish a list of relations among the various
entropies evaluated on the dephased quantum state $\sigma_{RAB}$. These relations are essential in the above second
order analysis. We regard these relations are of independent interests and may find applications in other quantum
information processing tasks.

Before presenting the relations, we first introduce some notations. For two Hermitian operators $X$ and $Y$, we denote
by $\{X\geq Y\}$ the projector onto the space spanned by the eigenvectors of $X-Y$ with non-negative eigenvalues. Let
$\rho\in\cS(\cH)$, $\sigma\in\cP(\cH)$, and $\varepsilon\in[0,1]$. The \emph{information spectrum relative
entropy} of $\rho$ w.r.t. $\sigma$ is defined as~\cite[Definition 8]{tomamichel2013hierarchy}
\begin{align}
    D_s^\varepsilon (\rho\|\sigma) := \sup \left\{x \sbar \tr \rho \{\rho \leq 2^x \sigma\} \leq \varepsilon \right\}.
\end{align}
The Nussbaum and Szko\l{}a's distributions are used intensively in the analysis that follows.
Assume the eigenvalue decompositions $\rho=\sum_x r_x
\ketbra{v_x}{v_x}$ and $\sigma=\sum_y s_y \ketbra{u_y}{u_y}$. Their Nussbaum and Szko\l{}a's
distributions~\cite{nussbaum2009} are defined as $P_{\rho,\sigma}(x,y):=r_x|\langle v_x|u_y\rangle |^2$ and
$Q_{\rho,\sigma}(x,y):=s_y|\langle v_x|u_y\rangle |^2$. These distributions satisfy the property that
\begin{align}\label{eq: PQ first and second order}
    D(P_{\rho,\sigma}\|Q_{\rho,\sigma}) = D(\rho\|\sigma) \quad \text{and} \quad V(P_{\rho,\sigma}\|Q_{\rho,\sigma}) = V(\rho\|\sigma).
\end{align}

\begin{proposition}\label{prop:relations}
Let $\varepsilon\in(0,1)$ and $\delta\in(0,\min\{\varepsilon ^2/3,1-\varepsilon ^2\})$. We have the following
relations regarding the dephased tripartite quantum state $\sigma_{RAB}$~\eqref{eq:dephasing}:
\begin{subequations}
\begin{align}
D_s^\varepsilon(P_{\sigma_{BR},\1_B \ox \sigma_R}
\|Q_{\sigma_{BR},\1_B \ox \sigma_R})
&= - D_s^{1-\varepsilon }(P_{\rho_{AB},\Delta_B(\rho_{AB})}
\|Q_{\rho_{AB},\Delta_B(\rho_{AB})}),\label{eq: reduction connection 1}\\
D(\sigma_{BR}\|\1_B \ox \sigma_R)
& = - D(\rho_{AB}\|\Delta_B(\rho_{AB})),\label{eq: reduction connection 2}\\
V(\sigma_{BR}\|\1_B \ox \sigma_R) & = V(\rho_{AB}\|\Delta_B(\rho_{AB})),\label{eq: reduction connection 3} \\
H_{\min}^{\varepsilon }(B|R)_{\sigma}
&\geq D_H^{\varepsilon ^2-2\delta}(\rho_{AB}\|\Delta_B(\rho_{AB})) - c(\rho_{AB},\varepsilon ,\delta),
\label{eq: Hmin DH}
\end{align}
\end{subequations}
where the correction term is defined as
\begin{align}\label{eq:correction term 2}
    c(\rho_{AB},\varepsilon ,\delta)
:=  \log \theta(\rho_{AB})
  + \log \theta(\Delta_B(\rho_{AB}))
  + \log (\varepsilon^2-\delta)-\log (\delta^5\varepsilon^2(1-\varepsilon ^2+\delta)) + 8.
\end{align}
\end{proposition}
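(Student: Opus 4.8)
The plan is to exploit the pure-state structure of $\ket{\psi}_{RAB}$ through two nested Schmidt decompositions and to show that the Nussbaum--Szko\l a distributions of the two pairs $(\sigma_{BR},\1_B\ox\sigma_R)$ and $(\rho_{AB},\Delta_B(\rho_{AB}))$ are related by one explicit bijection that preserves the $P$-distribution and inverts the likelihood ratio. Concretely, I would fix the Schmidt decomposition $\ket{\psi}_{RAB}=\sum_m\sqrt{s_m}\ket{w_m}_R\ket{g_m}_{AB}$ across the cut $R|AB$, so that $\{s_m\},\{\ket{g_m}\}$ are the eigendata of $\rho_{AB}$ while $\{s_m\},\{\ket{w_m}\}$ are those of $\sigma_R=\tr_{AB}\psi_{RAB}=\rho_R$; and for each $b$ the Schmidt decomposition $\ket{\psi^b}_{RA}=\sum_k\sqrt{\mu_{b,k}}\ket{r_{b,k}}_R\ket{a_{b,k}}_A$. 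With $\rho_A^b:=\tr_R\proj{\psi^b}$ and $\rho_R^b:=\tr_A\proj{\psi^b}$, the eigendata of $\Delta_B(\rho_{AB})=\sum_b p_b\proj{b}\ox\rho_A^b$ are $\{p_b\mu_{b,k}\}$ with eigenvectors $\ket{b}\ket{a_{b,k}}$, and those of $\sigma_{BR}=\sum_b p_b\proj{b}\ox\rho_R^b$ are the \emph{same} numbers $\{p_b\mu_{b,k}\}$ (as $\rho_A^b,\rho_R^b$ are Schmidt-conjugate) with eigenvectors $\ket{b}\ket{r_{b,k}}$. In particular $\sigma_{BR}$ and $\Delta_B(\rho_{AB})$ share a spectrum, and $\1_B\ox\sigma_R$ and $\rho_{AB}$ share a spectrum, which already gives $\theta(\sigma_{BR})=\theta(\Delta_B(\rho_{AB}))$ and $\theta(\1_B\ox\sigma_R)=\theta(\rho_{AB})$.

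The computational heart is the overlap identity, obtained by substituting the second Schmidt decomposition into the first and projecting onto the orthonormal family $\{\ket{b}\ket{a_{b,k}}\}$:
\[
  |\langle b,a_{b,k}|g_m\rangle|^2 = \frac{p_b\mu_{b,k}}{s_m}\,|\langle w_m|r_{b,k}\rangle|^2 .
\]
Feeding this into the definitions of the four relevant NS distributions, I expect to find, under the index bijection $(m,(b,k))\mapsto((b,k),(b,m))$ onto the support, that $P_{\rho_{AB},\Delta_B(\rho_{AB})}=P_{\sigma_{BR},\1_B\ox\sigma_R}$ while $\log(P/Q)$ for the pair $(\sigma_{BR},\1_B\ox\sigma_R)$ is exactly the negative of $\log(P/Q)$ for the pair $(\rho_{AB},\Delta_B(\rho_{AB}))$ (the ratios being $p_b\mu_{b,k}/s_m$ versus $s_m/(p_b\mu_{b,k})$). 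Relations~\eqref{eq: reduction connection 2} and~\eqref{eq: reduction connection 3} then fall out immediately: a common $P$-distribution whose log-likelihood ratio merely changes sign has its mean negated and its variance preserved, and~\eqref{eq: PQ first and second order} transports these two facts back to $D$ and $V$ of the operators. (Relation~\eqref{eq: reduction connection 2} also admits a direct sanity check from the relative-entropy-of-coherence identity together with the two spectral coincidences noted above.)

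For~\eqref{eq: reduction connection 1}, the same sign-flip of the likelihood ratio is all that is needed: writing $T:=\log(P/Q)$ for the pair $(\rho_{AB},\Delta_B(\rho_{AB}))$ as a random variable under the common distribution $P$, the defining event for $D_s^\varepsilon$ of the other pair becomes the complementary tail $\{-T\le x\}=\{T\ge -x\}$. I would therefore carry out the argument at the level of the cumulative distribution function of $T$ under $P$, where the passage from the lower tail at level $\varepsilon$ to the upper tail at level $1-\varepsilon$ yields precisely the classical information-spectrum duality $D_s^\varepsilon(P_{\sigma_{BR},\1_B\ox\sigma_R}\|Q)=-D_s^{1-\varepsilon}(P_{\rho_{AB},\Delta_B(\rho_{AB})}\|Q)$, taking a little care with atoms and the strict/non-strict boundary in the supremum.

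Relation~\eqref{eq: Hmin DH} is where the real work lies. I would first discard the optimization in $H_{\min}^\varepsilon(B|R)_\sigma=\max_{\omega_R}-D_{\max}^\varepsilon(\sigma_{BR}\|\1_B\ox\omega_R)$ by fixing $\omega_R=\sigma_R$, reducing matters to upper bounding $D_{\max}^\varepsilon(\sigma_{BR}\|\1_B\ox\sigma_R)$. Then I would chain two one-shot conversion lemmas of the Tomamichel--Hayashi type~\cite{tomamichel2013hierarchy}: one passing from the operator smooth max-relative entropy to the classical information-spectrum relative entropy of the NS distributions, at the cost of a pinching term $\log\theta(\1_B\ox\sigma_R)=\log\theta(\rho_{AB})$; then the duality~\eqref{eq: reduction connection 1} just established to move onto the pair $(\rho_{AB},\Delta_B(\rho_{AB}))$; and finally a second conversion lemma bounding the information-spectrum relative entropy below by the operator $D_H^{\,\cdot}$, at the cost of $\log\theta(\Delta_B(\rho_{AB}))$. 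The main obstacle is bookkeeping rather than conceptual: each smoothing/conversion step perturbs the error parameter, and these must be composed so that the net smoothing reads exactly $\varepsilon^2-2\delta$ on $D_H$ and the accumulated penalties assemble into $c(\rho_{AB},\varepsilon,\delta)$ of~\eqref{eq:correction term 2}. In particular the two operator-to-NS-distribution passages (pinching by the spectral projectors of the respective second arguments) are the delicate points, since they are what produce the two $\log\theta$ terms and must be invoked with the correct smoothing radii to land on the stated $\delta^5\varepsilon^2(1-\varepsilon^2+\delta)$ denominator; as a consistency check, expanding the resulting bound with~\eqref{eq: Dmax second order} together with~\eqref{eq: reduction connection 2}--\eqref{eq: reduction connection 3} reproduces the leading terms $nD(\rho_{AB}\|\Delta_B(\rho_{AB}))+\sqrt{nV(\rho_{AB}\|\Delta_B(\rho_{AB}))}\,\Phi^{-1}(\varepsilon^2)$, confirming the smoothing has been tracked correctly.
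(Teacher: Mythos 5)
Your proposal is correct and follows essentially the same route as the paper's proof: the nested Schmidt decompositions across the $R|AB$ cut and within each $\ket{\psi^b}_{RA}$ yield exactly the overlap identity the paper derives, from which the two Nussbaum--Szko\l{}a $P$-distributions coincide while the log-likelihood ratio flips sign, giving \eqref{eq: reduction connection 1}--\eqref{eq: reduction connection 3}; and your chain for \eqref{eq: Hmin DH} (fix $\omega_R=\sigma_R$, convert $D_{\max}^{\varepsilon}$ to the information-spectrum quantity at cost $\log\theta(\rho_{AB})$, apply the duality, then convert to $D_H$ at cost $\log\theta(\Delta_B(\rho_{AB}))$) is precisely the paper's sequence of steps via the Tomamichel--Hayashi lemmas. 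The only work left relative to the paper is the explicit bookkeeping of the smoothing parameters, which you correctly identify and which proceeds as you describe.
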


\noindent\textbf{[Proof of Eq.~\eqref{eq: reduction connection 1}]}
For reduced state $\sigma_R$ assume the spectral decompositions $\sigma_R=\sum_rq_r \proj{v^r_R}$, For the
reduced states $\{\psi_R^b\}_b$, assume the spectral decompositions
\begin{align}
    \psi_R^b = \sum_rp_{R\vert B}(r\vert b)\proj{v^{r\vert b}}_R,
\end{align}
where $p_{R\vert B}$ is a conditional probability distribution, and
$\{\vert v^{r\vert b}\rangle\}$ is an
orthonormal basis of $R$ for each $b$. Set
$p_{BR}:=p_{R\vert B}p_B$, we have the following spectral decompositions
\begin{align}
  \1_B \ox \sigma_R &= \sum_{b,r} q_r \proj{b}_B\ox\proj{v^r}_R, \\
  \sigma_{BR} &= \sum_{b,r} p_{BR}(b,r)\proj{b}_B\ox\proj{v^{r\vert b}}_R.
\end{align}
Notice that $\sigma_R$ and $\rho_{AB}$ are two marginal states of pure tripartite state $\Psi_{ABR}$ and thus they have
the same eigenvalues due to the Schmidt theorem. That is, we can denote the eigenvalue decomposition of $\rho_{AB}$ as
$\rho_{AB} = \sum_r q_r \proj{u^r}$, where $\{\vert u^r\rangle\}$ is an orthonormal basis of $AB$. What's more, since
each conditional sate $\vert\psi^b\rangle_{RA}$~\eqref{eq:dephasing} is pure, the two marginal states $\psi^b_R$ and
$\psi^b_A$ has the same eigenvalues from the Schmidt theorem.
That is,
there exists $\{\vert u^{a\vert b}\rangle\}$ is an orthonormal basis of $A$ for each $b$ such that
\begin{align}\label{eq:Schmidt-decomposition}
  \vert\psi^b\rangle_{RA}
= \sum_r\sqrt{p_{R\vert B}(r\vert b)}\vert u^{r\vert b}\rangle_A\vert v^{r\vert b}\rangle_R.
\end{align}
Correspondingly, we can decompose $\sigma_{AB}$ as
\begin{align}
  \sigma_{AB} &= \Delta_B(\rho_{AB}) = \sum_{b,a} p_{BR}(b,a)\proj{b}_B\ox\proj{u^{a\vert b}}_A,
\end{align}

Consider the Nussbaum and Szko\l{}a's distributions for the operators $\sigma_{BR}$ and $\1_B \ox \sigma_R$:
\begin{subequations}\label{eq: connection reduction proof tmp2}
\begin{align}
    P_{\sigma_{BR},\1_B \ox \sigma_R}((b',r'),(b,r))
&:= P_{BR}(b,r') |\langle v_{r'}^{b'}|v_r\rangle |^2 \delta_{bb'}
  = P_{BR}(b,r') |\langle v_{r'}^{b}|v_r\rangle |^2, \\
    Q_{\sigma_{BR},\1_B \ox \sigma_R}((b',r'),(b,r))
&:= q_r |\langle v_{r'}^{b'}|v_r\rangle |^2 \delta_{bb'}
  = q_r |\langle v_{r'}^{b}|v_r\rangle |^2,
\end{align}
\end{subequations}
and the Nussbaum and Szko\l{}a's distributions for the operators $\rho_{AB}$ and $\sigma_{AB}$:
\begin{subequations}\label{eq: connection reduction proof tmp3}
\begin{align}
      P_{\rho_{AB},\sigma_{AB}}(r,b,a) &= q_r |\langle u_r|b, \psi_a^b\rangle |^2, \\
    Q_{\rho_{AB},\sigma_{AB}}(r,b,a) &= P_{B,R}(b,a) |\langle u_r|b, \psi_a^b\rangle |^2.
\end{align}
\end{subequations}
Since $\ket{\Psi}_{ABR} = \sum_r \sqrt{q_r}\ket{u^r_{AB}}\ket{v^r_R}$,
it holds $\langle v^r|\Psi\rangle \langle \Psi|v^r\rangle= q_r\proj{u^r}$.
On the other hand, Eqs.~\eqref{eq:purification} and~\eqref{eq:Schmidt-decomposition} together imply that
\begin{align}
  \ket{\psi}_{ABR}
= \sum_{b,a}\sqrt{p_{BR}(b,a)}\vert u^{a\vert b}\rangle_A\ox\ket{b}_B\ox\vert v^{a\vert b}\rangle_R.
\end{align}
This leads to
\begin{align}
  P_{\rho_{AB},\sigma_{AB}}(r,b,a)
&= q_r |\langle u_r|b, \psi_a^b\rangle |^2  \\
&= q_r \langle b, \psi_a^b|u_r\rangle \langle u_r|b, \psi_a^b\rangle  \\
&= \langle b, \psi_a^b,v_r|\psi\rangle \langle \psi|b, \psi_a^b,v_r\rangle  \\
&= \tr[ (|b, \psi_a^b \rangle \langle b, \psi_a^b|\otimes \1_R) |\psi\rangle \langle \psi|]
     |\langle v_r| v_a^b\rangle |^2 \\
&= P_{B,R}(b,a) |\langle v_{a}^{b}|v_r\rangle |^2 \\
&= P_{\sigma_{BR},\1_B \ox \sigma_R}(b,a,r).\label{eq: connection reduction proof tmp1}
\end{align}
Hence we can check that
\begin{align}
  &\; D_s^\varepsilon (P_{\sigma_{BR},\1_B \ox \sigma_R}
  \|Q_{\sigma_{BR},\1_B \ox \sigma_R}) \notag\\
  \stackrel{(a)}{=}&\; \sup\left\{x\sbar P_{\sigma_{BR},\1_B \ox \sigma_R} \{(b,a,r)
  |\log P_{\sigma_{BR},\1_B \ox \sigma_R}(b,a,r)
  - \log Q_{\sigma_{BR},\1_B \ox \sigma_R}(a,r) \leq x\} \leq \varepsilon \right\}\\
  \stackrel{(b)}{=}&\; \sup \left\{x\sbar P_{\sigma_{BR},\1_B \ox \sigma_R} \{(b,a,r)|
  \log P_{B,R}(b,a) - \log q_r \leq x\} \leq \varepsilon \right\}\\
  \stackrel{(c)}{=}&\; \sup \left\{x\sbar P_{\rho_{AB},\Delta_B(\rho_{AB})}
  \{(b,a,r)|\log P_{B,R}(b,a) - \log q_r \leq x\} \leq \varepsilon \right\}\\
  \stackrel{(d)}{=}&\; \sup \left\{x\sbar P_{\rho_{AB},\Delta_B(\rho_{AB})}\{(b,a,r)|
  \log Q_{\rho,\Delta(\rho)}(b,a,r)
  - \log P_{\rho,\Delta(\rho)}(b,a,r) \leq x\} \leq \varepsilon \right\}\\
  \stackrel{(e)}{=}&\; - D_s^{1-\varepsilon }(P_{\rho_{AB},\Delta_B(\rho_{AB})}\|Q_{\rho_{AB},\Delta_B(\rho_{AB})}),
\end{align}
where $(a)$ and $(e)$ follow by definition, $(b)$ follows by~\eqref{eq: connection reduction proof tmp2}, $(c)$ follows
by~\eqref{eq: connection reduction proof tmp1}, and $(d)$ follows by~\eqref{eq: connection reduction proof tmp3}. This
concludes~\eqref{eq: reduction connection 1}.
\hfill $\blacksquare$

\noindent\textbf{[Proof of Eqs.~\eqref{eq: reduction connection 2} and~\eqref{eq: reduction connection 3}]}
We first show~\eqref{eq: reduction connection 2}. Consider the following chain of equalities:
\begin{align}
    D(\sigma_{BR}\|\1_B \ox \sigma_R)
    &\stackrel{(a)}{=} D(P_{\sigma_{BR},\1_B \ox \sigma_R}\|Q_{\sigma_{BR},\1_B \ox \sigma_R})\\
    &\stackrel{(b)}{=} \sum_{b,r} P_{\sigma_{BR},\1_B \ox \sigma_R}(b,r) [\log P_{\sigma_{BR},\1_B \ox \sigma_R}(b,r)
    - \log Q_{\sigma_{BR},\1_B \ox \sigma_R}(b,r)]\\
    &\stackrel{(c)}{=} \sum_{b,r} P_{\rho_{AB},\sigma_{AB}}(r,a) [\log p_a - \log q_r]\\
    &\stackrel{(d)}{=} \sum_{b,r} P_{\rho_{AB},\sigma_{AB}}(r,a) [\log Q_{\rho_{AB},\sigma_{AB}}(r,a)
    - \log P_{\rho_{AB},\sigma_{AB}}(r,a)]\\
    &\stackrel{(e)}{=} - D(P_{\rho_{AB},\sigma_{AB}}\|Q_{\rho_{AB},\sigma_{AB}})\\
    &\stackrel{(f)}{=} - D(\rho_{AB}\|\sigma_{AB}),
\end{align}
where $(a)$ and $(f)$ follow from the property of Nussbaum and Szko\l{}a's
distributions~\eqref{eq: PQ first and second order},
 $(b)$ and $(e)$ follow by definition,
  $(c)$ and $(d)$ follow by~\eqref{eq: connection reduction proof tmp2},~
 \eqref{eq: connection reduction proof tmp3},~\eqref{eq: connection reduction proof tmp1}.
This completes the proof of~\eqref{eq: reduction connection 2}.
We can then prove~\eqref{eq: reduction connection 3} in a similar way.
\hfill $\blacksquare$

\noindent\textbf{[Proof of Eq.~\eqref{eq: Hmin DH}]}
First notice that $\sigma_R$ and $\rho_{AB}$ are two marginal of the pure tripartite state $\psi_{ABR}$, thus they
have the same eigenvalues. This gives $\theta(\sigma_R) = \theta(\rho_{AB})$.
Consider the following chain of inequalities:
\begin{align}
    H_{\min}^{{\varepsilon }}(B|R)_{\sigma}
:=&\; - \inf_{\tau_R}D_{\max}^{{\varepsilon }}(\sigma_{BR}\|\1_B \ox \tau_R) \\
\geq&\;  - D_{\max}^{{\varepsilon }}(\sigma_{BR}\|\1_B \ox \sigma_R)\\
\stackrel{(a)}{\geq}&\;  - D_s^{1-\varepsilon ^2+\delta}(P_{\sigma_{BR},\1_B \ox \sigma_{R}}
\|Q_{\sigma_{BR},\1_B \ox \sigma_{R}}) - \log\theta(\rho_{AB}) + \log(\delta\varepsilon^2)\\
\stackrel{(b)}{=}&\; D_s^{\varepsilon ^2-\delta}(P_{\rho_{AB},\Delta(\rho_{AB})}
\|Q_{\rho_{AB},\Delta(\rho_{AB})}) - \log\theta(\rho_{AB}) + \log(\delta\varepsilon^2)\\
\stackrel{(c)}{\geq}&\; D_H^{\varepsilon ^2-2\delta}(\rho_{AB}\|\Delta_B(\rho_{AB})) - c(\rho_{AB},\varepsilon,\delta),
\end{align}
where $(a)$ follows from~\cite[Eq. (29)]{tomamichel2013hierarchy} and the fact that $\theta (\1_A
\ox \sigma_R) = \theta(\sigma_R)= \theta(\rho_{AB})$, $(b)$ follows from~\eqref{eq: reduction
connection 1}, and $(c)$ follows from~\cite[Eq. (27)]{tomamichel2013hierarchy} with the correction term
$c(\rho_{AB},\varepsilon,\delta)$ given in~\eqref{eq:correction term 2}. Note that the constraints
$\delta<1-\varepsilon ^2$ and $\delta<\varepsilon^2/3$ are imposed in $(a)$ and $(c)$, respectively. This completes the proof.

\hfill $\blacksquare$

\section{Conclusions}

Our work initiated the \emph{first} systematic second order analysis on coherence distillation with and without assistance, filling an important gap in the literature. In the unassisted setting,
we introduced a variant of randomness extraction framework in the context of quantum coherence theory, establishing an exact relation between this cryptographic task and the operational task of quantum coherence distillation. Based on this relation, we gave a finite block length analysis on these tasks, providing in particular explicit second order expansions of distillable coherence and extractable randomness under a diverse range of free operations. We then lifted the obtained results to the assisted setting in which Alice served as an assistant to help Bob do the manipulations. A crucial step for our second order expansions was the hypothesis testing characterizations of the one-shot rate with {almost tight} error dependence. These one-shot characterizations could be suitable for analysis even beyond the i.i.d. assumption of the source state if combining with a more refined result of hypothesis testing relative entropy (e.g.~\cite{datta2016second} and the references therein).

Many interesting problems remain open. First, in the unassisted setting
the coincidence of the second order expansions of $\ell_{\cO}^\ve$ and $\ell_{\id}^\ve$ indicates that optimizing the free incoherent operations before the incoherent measurement can improve the extractable randomness by the order $O(\log n)$ at most.
One may explore whether there is any advantage of performing incoherent operations in the third or higher order terms.
Second, as a reverse problem of coherence distillation, the coherence cost considers the minimum number of
coherent bits required to prepare a quantum state. It is known that the first order asymptotics of
coherence cost under $\IO$ operations is given by the coherence information~\cite{winter_2016}.
But what is the second order expansion?
Recall the important role of randomness extraction framework in our second order analysis.
For coherence cost, we may consider a randomness extraction scenario with Eve having limited power.
Such a scenario has been studied in~\cite{Yuan} and \cite[Section VI]{hayashi2018secure} and
the corresponding randomness extraction rate happens to coincide with the coherence information.
Finally, regarding the assisted scenario  it would be interesting to extend the exact one-shot relation between $C_{d,\QIP}^\varepsilon$ and $\ell_{\QIP}^\varepsilon$ to other free operation classes.
It is also appealing to further explore the alternative formulation in Appendix~\ref{sec:alternative formulation} and identify
achievable rate in terms of hypothesis testing relative entropy with $\varepsilon$-square error dependence.


\paragraph{Acknowledgements.} 
We thank Anurag Anshu and Xin Wang for discussions about the second order expansion of distillable coherence. KF thanks the Center for Quantum Computing at Peng Cheng Laboratory for their hospitality while part of this work was done during his visit. MH was supported in part by a JSPS Grant-in-Aid for Scientific Research (A) No.17H01280, (B) No. 16KT0017, the Okawa Research Grant and Kayamori Foundation of Informational Science Advancement.

\bibliographystyle{alpha_abbrv}
{\small \bibliography{bib}}

\newcommand{\etalchar}[1]{$^{#1}$}
\begin{thebibliography}{KMWY16b}

\bibitem[Abe06]{aberg2006quantifying}
J.~Aberg.
\newblock Quantifying superposition.
\newblock {\em arXiv preprint quant-ph/0612146}, 2006.

\bibitem[AJS18]{anshu2018quantum}
A.~Anshu, R.~Jain, and A.~Streltsov.
\newblock Quantum state redistribution with local coherence.
\newblock {\em arXiv:1804.04915}, 2018.

\bibitem[BCD05]{buscemi2005inverting}
F.~Buscemi, G.~Chiribella, and G.~M. D’Ariano.
\newblock Inverting quantum decoherence by classical feedback from the
  environment.
\newblock {\em Physical Review Letters}, 95(9):090501, 2005.

\bibitem[BCP14]{baumgratz2014quantifying}
T.~Baumgratz, M.~Cramer, and M.~B. Plenio.
\newblock Quantifying coherence.
\newblock {\em Physical Review Letters}, 113(14):140401, 2014.

\bibitem[BD13]{buscemi2013general}
F.~Buscemi and N.~Datta.
\newblock General theory of environment-assisted entanglement distillation.
\newblock {\em IEEE Transactions on Information Theory}, 59(3):1940--1954,
  2013.

\bibitem[Bus07]{buscemi2007channel}
F.~Buscemi.
\newblock Channel correction via quantum erasure.
\newblock {\em Physical Review Letters}, 99(18):180501, 2007.

\bibitem[CG16]{chitambar_2016}
E.~Chitambar and G.~Gour.
\newblock Critical examination of incoherent operations and a physically
  consistent resource theory of quantum coherence.
\newblock {\em Physical Review Letters}, 117:030401, 2016.

\bibitem[CG19]{chitambar2018quantum}
E.~Chitambar and G.~Gour.
\newblock Quantum resource theories.
\newblock {\em Reviews of Modern Physics}, 91(2):025001, 2019.

\bibitem[CH16]{chitambar_2016-2}
E.~Chitambar and M.-H. Hsieh.
\newblock Relating the {{Resource Theories}} of {{Entanglement}} and {{Quantum
  Coherence}}.
\newblock {\em Physical Review Letters}, 117:020402, 2016.

\bibitem[CML16]{coles2016numerical}
P.~J. Coles, E.~M. Metodiev, and N.~L{\"u}tkenhaus.
\newblock Numerical approach for unstructured quantum key distribution.
\newblock {\em Nature Communications}, 7:11712, 2016.

\bibitem[CSR{\etalchar{+}}16a]{chitambar_2016-3}
E.~Chitambar, A.~Streltsov, S.~Rana, M.~Bera, G.~Adesso, and M.~Lewenstein.
\newblock Assisted {{Distillation}} of {{Quantum Coherence}}.
\newblock {\em Physical Review Letters}, 116:070402, 2016.

\bibitem[CSR{\etalchar{+}}16b]{chitambar2016assisted}
E.~Chitambar, A.~Streltsov, S.~Rana, M.~Bera, G.~Adesso, and M.~Lewenstein.
\newblock Assisted distillation of quantum coherence.
\newblock {\em Physical Review Letters}, 116(7):070402, 2016.

\bibitem[Dat09]{datta2009min}
N.~Datta.
\newblock Min-and max-relative entropies and a new entanglement monotone.
\newblock {\em IEEE Transactions on Information Theory}, 55(6):2816--2826,
  2009.

\bibitem[DFM{\etalchar{+}}98]{divincenzo1998entanglement}
D.~P. DiVincenzo, C.~A. Fuchs, H.~Mabuchi, J.~A. Smolin, A.~Thapliyal, and
  A.~Uhlmann.
\newblock Entanglement of assistance.
\newblock In {\em NASA International Conference on Quantum Computing and
  Quantum Communications}, pages 247--257. Springer, 1998.

\bibitem[DFW{\etalchar{+}}18]{diaz2018}
M.~G. D{\'{i}}az, K.~Fang, X.~Wang, M.~Rosati, M.~Skotiniotis, J.~Calsamiglia,
  and A.~Winter.
\newblock Using and reusing coherence to realize quantum processes.
\newblock {\em {Quantum}}, 2:100, October 2018.

\bibitem[DH10]{dutil2010assisted}
N.~Dutil and P.~Hayden.
\newblock Assisted entanglement distillation.
\newblock {\em arXiv preprint arXiv:1011.1972}, 2010.

\bibitem[DL14]{datta2014second}
N.~Datta and F.~Leditzky.
\newblock Second-order asymptotics for source coding, dense coding, and
  pure-state entanglement conversions.
\newblock {\em IEEE Transactions on Information Theory}, 61(1):582--608, 2014.

\bibitem[DPR16]{datta2016second}
N.~Datta, Y.~Pautrat, and C.~Rouz{\'e}.
\newblock Second-order asymptotics for quantum hypothesis testing in settings
  beyond iid—quantum lattice systems and more.
\newblock {\em Journal of Mathematical Physics}, 57(6):062207, 2016.

\bibitem[DVS16]{vicente_2017}
J.~I. De~Vicente and A.~Streltsov.
\newblock Genuine quantum coherence.
\newblock {\em Journal of Physics A: Mathematical and Theoretical},
  50(4):045301, 2016.

\bibitem[FD11]{Frowis2011}
F.~Fr{\"{o}}wis and W.~D{\"{u}}r.
\newblock {Stable Macroscopic Quantum Superpositions}.
\newblock {\em Physical Review Letters}, 106(11):110402, mar 2011.

\bibitem[FWL{\etalchar{+}}18]{fang2018probabilistic}
K.~Fang, X.~Wang, L.~Lami, B.~Regula, and G.~Adesso.
\newblock Probabilistic distillation of quantum coherence.
\newblock {\em Physical Review Letters}, 121(7):070404, 2018.

\bibitem[FWTD19]{fang2019non}
K.~Fang, X.~Wang, M.~Tomamichel, and R.~Duan.
\newblock Non-asymptotic entanglement distillation.
\newblock {\em IEEE Transactions on Information Theory}, 65(10):6454--6465,
  2019.

\bibitem[GS08]{gour_2008}
G.~Gour and R.~W. Spekkens.
\newblock The resource theory of quantum reference frames: manipulations and
  monotones.
\newblock {\em New Journal of Physics}, 10(3):033023, 2008.

\bibitem[GW03]{gregoratti2003quantum}
M.~Gregoratti and R.~F. Werner.
\newblock Quantum lost and found.
\newblock {\em Journal of Modern Optics}, 50(6-7):915--933, 2003.

\bibitem[Hay06]{hayashi2006practical}
M.~Hayashi.
\newblock Practical evaluation of security for quantum key distribution.
\newblock {\em Physical Review A}, 74(2):022307, 2006.

\bibitem[Hay08]{hayashi2008second}
M.~Hayashi.
\newblock Second-order asymptotics in fixed-length source coding and intrinsic
  randomness.
\newblock {\em IEEE Transactions on Information Theory}, 54(10):4619--4637,
  2008.

\bibitem[HHHH09]{horodecki2009quantum}
R.~Horodecki, P.~Horodecki, M.~Horodecki, and K.~Horodecki.
\newblock Quantum entanglement.
\newblock {\em Reviews of Modern Physics}, 81(2):865, 2009.

\bibitem[Hil16]{hillery2016coherence}
M.~Hillery.
\newblock Coherence as a resource in decision problems: The {Deutsch-Jozsa}
  algorithm and a variation.
\newblock {\em Physical Review A}, 93(1):012111, 2016.

\bibitem[HK04]{hayden2004correcting}
P.~Hayden and C.~King.
\newblock Correcting quantum channels by measuring the environment.
\newblock {\em arXiv preprint quant-ph/0409026}, 2004.

\bibitem[HZ18]{hayashi2018secure}
M.~Hayashi and H.~Zhu.
\newblock Secure uniform random-number extraction via incoherent strategies.
\newblock {\em Physical Review A}, 97(1):012302, 2018.

\bibitem[KMWY16a]{karumanchi2016classical}
S.~Karumanchi, S.~Mancini, A.~Winter, and D.~Yang.
\newblock Classical capacities of quantum channels with environment assistance.
\newblock {\em Problems of Information Transmission}, 52(3):214--238, 2016.

\bibitem[KMWY16b]{karumanchi2016quantum}
S.~Karumanchi, S.~Mancini, A.~Winter, and D.~Yang.
\newblock Quantum channel capacities with passive environment assistance.
\newblock {\em IEEE Transactions on Information Theory}, 62(4):1733--1747,
  2016.

\bibitem[Lam19]{lami2019completing}
L.~Lami.
\newblock Completing the grand tour of asymptotic quantum coherence
  manipulation.
\newblock {\em IEEE Transactions on Information Theory}, to appear, 2019.

\bibitem[Li14]{li2014second}
K.~Li.
\newblock Second-order asymptotics for quantum hypothesis testing.
\newblock {\em The Annals of Statistics}, 42(1):171--189, 2014.

\bibitem[LJR15]{lostaglio2015description}
M.~Lostaglio, D.~Jennings, and T.~Rudolph.
\newblock Description of quantum coherence in thermodynamic processes requires
  constraints beyond free energy.
\newblock {\em Nature Communications}, 6:6383, 2015.

\bibitem[LM14]{levi_2014}
F.~Levi and F.~Mintert.
\newblock A quantitative theory of coherent delocalization.
\newblock {\em New Journal of Physics}, 16(3):033007, 2014.

\bibitem[LRA19]{lami2019generic}
L.~Lami, B.~Regula, and G.~Adesso.
\newblock Generic bound coherence under strictly incoherent operations.
\newblock {\em Physical Review Letters}, 122(15):150402, 2019.

\bibitem[LTA20]{lami2020assisted}
L.~Lami, R.~Takagi, and G.~Adesso.
\newblock Assisted concentration of gaussian resources.
\newblock {\em Physical Review A}, 101(5):052305, 2020.

\bibitem[LWZG09]{li2009private}
K.~Li, A.~Winter, X.~Zou, and G.~Guo.
\newblock Private capacity of quantum channels is not additive.
\newblock {\em Physical Review Letters}, 103(12):120501, 2009.

\bibitem[MS16]{marvian_2016}
I.~Marvian and R.~W. Spekkens.
\newblock How to quantify coherence: {{Distinguishing}} speakable and
  unspeakable notions.
\newblock {\em Physical Review A}, 94:052324, 2016.

\bibitem[NS09]{nussbaum2009}
M.~Nussbaum and A.~Szko{\l}a.
\newblock The chernoff lower bound for symmetric quantum hypothesis testing.
\newblock {\em The Annals of Statistics}, 37(2):1040--1057, 2009.

\bibitem[Ren05]{Renner2005}
R.~Renner.
\newblock {Security of Quantum Key Distribution}.
\newblock PhD thesis, dec 2005.

\bibitem[RFWA18]{regula2018one}
B.~Regula, K.~Fang, X.~Wang, and G.~Adesso.
\newblock One-shot coherence distillation.
\newblock {\em Physical Review Letters}, 121(1):010401, 2018.

\bibitem[RLS18]{regula2018nonasymptotic}
B.~Regula, L.~Lami, and A.~Streltsov.
\newblock Nonasymptotic assisted distillation of quantum coherence.
\newblock {\em Physical Review A}, 98(5):052329, 2018.

\bibitem[SAP17]{streltsov_2017}
A.~Streltsov, G.~Adesso, and M.~B. Plenio.
\newblock Quantum coherence as a resource.
\newblock {\em Reviews of Modern Physics}, 89:041003, 2017.

\bibitem[SCR{\etalchar{+}}16]{streltsov_2016}
A.~Streltsov, E.~Chitambar, S.~Rana, M.~Bera, A.~Winter, and M.~Lewenstein.
\newblock Entanglement and {{Coherence}} in {{Quantum State Merging}}.
\newblock {\em Physical Review Letters}, 116:240405, 2016.

\bibitem[SP00]{shor2000simple}
P.~W. Shor and J.~Preskill.
\newblock Simple proof of security of the bb84 quantum key distribution
  protocol.
\newblock {\em Physical Review Letters}, 85(2):441, 2000.

\bibitem[SRBL17]{streltsov2017towards}
A.~Streltsov, S.~Rana, M.~N. Bera, and M.~Lewenstein.
\newblock Towards resource theory of coherence in distributed scenarios.
\newblock {\em Physical Review X}, 7(1):011024, 2017.

\bibitem[SVW05]{smolin2005entanglement}
J.~A. Smolin, F.~Verstraete, and A.~Winter.
\newblock Entanglement of assistance and multipartite state distillation.
\newblock {\em Physical Review A}, 72(5):052317, 2005.

\bibitem[TBR16]{tomamichel2016quantum}
M.~Tomamichel, M.~Berta, and J.~M. Renes.
\newblock Quantum coding with finite resources.
\newblock {\em Nature communications}, 7(1):1--8, 2016.

\bibitem[TH13]{tomamichel2013hierarchy}
M.~Tomamichel and M.~Hayashi.
\newblock A hierarchy of information quantities for finite block length
  analysis of quantum tasks.
\newblock {\em IEEE Transactions on Information Theory}, 59(11):7693--7710,
  2013.

\bibitem[TLGR12]{tomamichel2012tight}
M.~Tomamichel, C.~C.~W. Lim, N.~Gisin, and R.~Renner.
\newblock Tight finite-key analysis for quantum cryptography.
\newblock {\em Nature communications}, 3(1):1--6, 2012.

\bibitem[Tom15]{tomamichel2015quantum}
M.~Tomamichel.
\newblock {\em Quantum Information Processing with Finite Resources:
  Mathematical Foundations}, volume~5.
\newblock Springer, 2015.

\bibitem[Uhl76]{uhlmann1976transition}
A.~Uhlmann.
\newblock The “transition probability” in the state space of *-algebra.
\newblock {\em Reports on Mathematical Physics}, 9(2):273--279, 1976.

\bibitem[VCH18]{vijayan2018one}
M.~K. Vijayan, E.~Chitambar, and M.-H. Hsieh.
\newblock One-shot assisted concentration of coherence.
\newblock {\em Journal of Physics A: Mathematical and Theoretical},
  51(41):414001, 2018.

\bibitem[WFT19]{wang2019converse}
X.~Wang, K.~Fang, and M.~Tomamichel.
\newblock On converse bounds for classical communication over quantum channels.
\newblock {\em IEEE Transactions on Information Theory}, 65(7):4609--4619,
  2019.

\bibitem[WHZ{\etalchar{+}}17]{wu2017experimentally}
K.-D. Wu, Z.~Hou, H.-S. Zhong, Y.~Yuan, G.-Y. Xiang, C.-F. Li, and G.-C. Guo.
\newblock Experimentally obtaining maximal coherence via assisted distillation
  process.
\newblock {\em Optica}, 4(4):454--459, 2017.

\bibitem[WHZ{\etalchar{+}}18]{wu2018experimental}
K.-D. Wu, Z.~Hou, Y.-Y. Zhao, G.-Y. Xiang, C.-F. Li, G.-C. Guo, J.~Ma, Q.-Y.
  He, J.~Thompson, and M.~Gu.
\newblock Experimental cyclic interconversion between coherence and quantum
  correlations.
\newblock {\em Physical Review Letters}, 121(5):050401, 2018.

\bibitem[Win05]{winter2005environment}
A.~Winter.
\newblock On environment-assisted capacities of quantum channels.
\newblock {\em arXiv preprint quant-ph/0507045}, 2005.

\bibitem[WR12]{wang2012one}
L.~Wang and R.~Renner.
\newblock One-shot classical-quantum capacity and hypothesis testing.
\newblock {\em Physical Review Letters}, 108(20):200501, 2012.

\bibitem[WTX{\etalchar{+}}20]{wu2020quantum}
K.-D. Wu, T.~Theurer, G.-Y. Xiang, C.-F. Li, G.-C. Guo, M.~B. Plenio, and
  A.~Streltsov.
\newblock Quantum coherence and state conversion: theory and experiment.
\newblock {\em npj Quantum Information}, 6(1):1--9, 2020.

\bibitem[WY16]{winter_2016}
A.~Winter and D.~Yang.
\newblock Operational resource theory of coherence.
\newblock {\em Physical Review Letters}, 116:120404, 2016.

\bibitem[YHW19]{yang2019distributed}
D.~Yang, K.~Horodecki, and A.~Winter.
\newblock Distributed private randomness distillation.
\newblock {\em Physical Review Letters}, 123(17):170501, 2019.

\bibitem[YVH19]{yamasaki2019hierarchy}
H.~Yamasaki, M.~K. Vijayan, and M.-H. Hsieh.
\newblock Hierarchy of quantum operations in manipulating coherence and
  entanglement.
\newblock {\em arXiv preprint arXiv:1912.11049}, 2019.

\bibitem[YZCM15]{Yuan}
X.~Yuan, H.~Zhou, Z.~Cao, and X.~Ma.
\newblock Intrinsic randomness as a measure of quantum coherence.
\newblock {\em Physical Review A}, 92(2):022124, 2015.

\bibitem[ZLY{\etalchar{+}}18]{zhao_2018}
Q.~Zhao, Y.~Liu, X.~Yuan, E.~Chitambar, and X.~Ma.
\newblock One-{{Shot Coherence Dilution}}.
\newblock {\em Physical Review Letters}, 120:070403, 2018.

\bibitem[ZLY{\etalchar{+}}19]{zhao2019one}
Q.~Zhao, Y.~Liu, X.~Yuan, E.~Chitambar, and A.~Winter.
\newblock One-shot coherence distillation: Towards completing the picture.
\newblock {\em IEEE Transactions on Information Theory}, 65(10):6441--6453,
  2019.

\bibitem[ZMF17]{zhao2017coherence}
M.-J. Zhao, T.~Ma, and S.-M. Fei.
\newblock Coherence of assistance and regularized coherence of assistance.
\newblock {\em Physical Review A}, 96(6):062332, 2017.

\bibitem[ZMQ{\etalchar{+}}19]{zhao20191}
M.-J. Zhao, T.~Ma, Q.~Quan, H.~Fan, and R.~Pereira.
\newblock {$l_1$}-norm coherence of assistance.
\newblock {\em Physical Review A}, 100(1):012315, 2019.

\end{thebibliography}

\newpage

\begin{appendices}

\section{Proof of Proposition~\ref{prop: IO protocol}}
\label{sec:proposition-one-proof}

\noindent \textbf{[Proof]}
Let $\sigma^*_{R}$ be a quantum state that attains the minimum in
\begin{align}
    d_{sec}(\rho[\id,\Delta,f]_{LR}|R) = \min_{\sigma_R \in \cS(R)} P(\rho[\id,\Delta,f]_{LR},\pi_L\ox \sigma_R).
\end{align}
Let $\ket{\phi^*}_{BR}$ on $\cH_B \ox \cH_R$ be a purification of $\sigma^*_{R}$. Thus we have
\begin{align}\label{eq:direct-tmp1}
    F(\rho[\id,\Delta,f]_{LR},\pi_L\ox \sigma^*_{R}) \geq \sqrt{1-\ve^2}.
\end{align}
Define the incoherent isometry $U_f$ from $\cH_B$ to $\cH_L \ox \cH_B$ as
\begin{align}
U_f \ket{b}_B:= \ket{f(b)}_L \ox \ket{b}_B.
\end{align}
We choose normalized vectors $\ket{\phi_\ell}_{BR}$ and normalization factors $r_\ell$ such that
\begin{align}
    U_f \ket{\psi}_{BR} = \sum_{\ell\in \cL} \sqrt{r_\ell} \ket{\ell}_L \ox \ket{\phi_\ell}_{BR}
    \quad \text{and}\quad
    \sqrt{r_\ell} \ket{\phi_\ell}_{BR}:= \sum_{b \in \cB: f(b) = \ell} \sqrt{p_b}\ket{b}_B\ox \ket{\psi_b}_R.
\end{align}
By Uhlmann's theorem~\cite{uhlmann1976transition} there exists a unitary $U_\ell$ on $\cH_B$ such that
\begin{align}\label{eq:direct-tmp2}
  F(\tr_B \ket{\phi_\ell}\bra{\phi_\ell}_{BR},\sigma^*_{R})
= F(U_\ell \ket{\phi_\ell}_{BR},\ket{\phi^*}_{BR}) = \<\phi^*|U_\ell|\phi_\ell\> .
\end{align}
Take $U := \sum_{\ell \in \cL} \ket{\ell}\bra{\ell}_L \ox U_\ell$.
We have
\begin{align}
F\left(UU_f \ket{\psi}_{BR}, \ket{\Psi_L} \ox \ket{\phi^*}_{BR}\right)
& \stackrel{(a)}{=} F\Big(\sum_{\ell\in \cL} \sqrt{r_\ell} \ket{\ell} \ox U_\ell \ket{\phi_\ell}, \sum_{\ell\in\cL} \frac{1}{\sqrt{|L|}} \ket{\ell} \ox \ket{\phi^*}\Big)\\
& \stackrel{(b)}{=} \sum_{\ell\in \cL} \sqrt{r_\ell}\frac{1}{\sqrt{|L|}} F(U_\ell \ket{\phi_\ell},\ket{\phi^*})\\
& \stackrel{(c)}{=} \sum_{\ell\in\cL} \sqrt{r_\ell}\frac{1}{\sqrt{|L|}} F(\tr_B \ket{\phi_\ell}\bra{\phi_\ell},\sigma_R^*)\\
& \stackrel{(d)}{=} F\Big(\sum_{\ell\in \cL} r_\ell \ket{\ell}\bra{\ell}\ox \tr_B \ket{\phi_\ell}\bra{\phi_\ell}, \sum_{\ell\in\cL} \frac{1}{|L|} \ket{\ell}\bra{\ell}\ox \sigma_R^*\Big)\\
& \stackrel{(e)}{=} F(\rho[\id,\Delta,f]_{LR}, \pi_L\ox \sigma^*_{R}), \label{eq:direct-tmp5}
\end{align}
where $(a)$ follows by definition, $(b)$ and $(d)$ follow from Lemma~\ref{lem: fidelity}, $(c)$ follows from Eq.~\eqref{eq:direct-tmp2} and $(e)$ follows from the fact that $\rho[\id,\Delta,f]_{LR} = \sum_{\ell\in \cL} r_\ell \ket{\ell}\bra{\ell}\ox \tr_B \ket{\phi_\ell}\bra{\phi_\ell}$.
We construct a quantum operation $\Gamma_{B\to L}(\cdot) := \tr_B [UU_f (\cdot) U_f^\dagger U^\dagger]$, whose
schematic diagram is given in Figure~\ref{fig: protocol of coherence distillation}.
Then $\Gamma_{B\to L}(\rho_B) = \tr_{BR}[UU_f \ket{\psi}\bra{\psi}_{BR} U_f^\dagger U^\dagger]$ and we can check that
\begin{align}
F \left(\Gamma_{B\to L}(\rho_B), \Psi_L\right) \geq F\big(UU_f \ket{\psi}_{BR}, \ket{\Psi_L} \ox \ket{\phi^*}_{BR}\big) = F\big(\rho[\id,\Delta,f]_{LR},\pi_L\ox \sigma^*_{R}\big) \geq \sqrt{1-\ve^2},
\end{align}
where the first inequality follows by the data-processing inequality of quantum fidelity under $\tr_{BR}$, the equality follows by~\eqref{eq:direct-tmp5} and the second inequality follows from~\eqref{eq:direct-tmp1}. This implies that $P(\Gamma_{B\to L}(\rho_B), \Psi_L ) \leq \ve$.

It remains to check $\Gamma \in \DIIO$. Note that $\Gamma$ admits a Kraus decomposition $\Gamma(\cdot) = \sum_{b\in
\cB} K_b U_f (\cdot) U_f^\dagger K_b^\dagger$ with operators $K_b = \bra{b}U$. For any computational basis $\ket{x}$
and any $b \in \cB$, we have
\begin{align}
K_b U_f \ket{x}\bra{x} U_f^\dagger K_b^\dagger  = |\<b|U_{f(x)}|x\>|^2\ket{f(x)}\bra{f(x)} \in \cI^{**},
\end{align}
by direct calculation.
Thus $\Gamma \in \IO$. For any computational basis $\ket{x}$ and $\ket{y}$, we can first check that
\begin{align}
\Gamma(\ket{x}\bra{y}) = \<y|U_{f(x)}^\dagger U_{f(y)}|x\>  \ket{f(x)}\bra{f(y)}.
\end{align}
Thus it holds
\begin{align}\label{eq:direct-tmp3}
\Gamma(\Delta(\ket{x}\bra{y})) & = \Gamma(\delta_{x,y} \ket{x}\bra{x}) = \delta_{x,y} \Gamma(\ket{x}\bra{x}) = \delta_{x,y} \ket{f(x)}\bra{f(x)},
\end{align}
and
\begin{align}\label{eq:direct-tmp4}
\Delta(\Gamma(\ket{x}\bra{y})) = \delta_{f(x),f(y)}\<y|U_{f(x)}^\dagger U_{f(y)}|x\> \ket{f(x)}\bra{f(x)} =  \delta_{f(x),f(y)} \delta_{x,y}\ket{f(x)}\bra{f(x)} = \delta_{x,y} \ket{f(x)} \bra{f(x)}.
\end{align}
Combining~\eqref{eq:direct-tmp3} and~\eqref{eq:direct-tmp4}, we have $\Gamma \circ \Delta = \Delta \circ \Gamma$, indicating that $\Gamma \in \DIO$. Finally we have $\Gamma \in \DIIO$.
\hfill $\blacksquare$

\begin{figure}[t]
\centering
\includegraphics[width=0.6\textwidth]{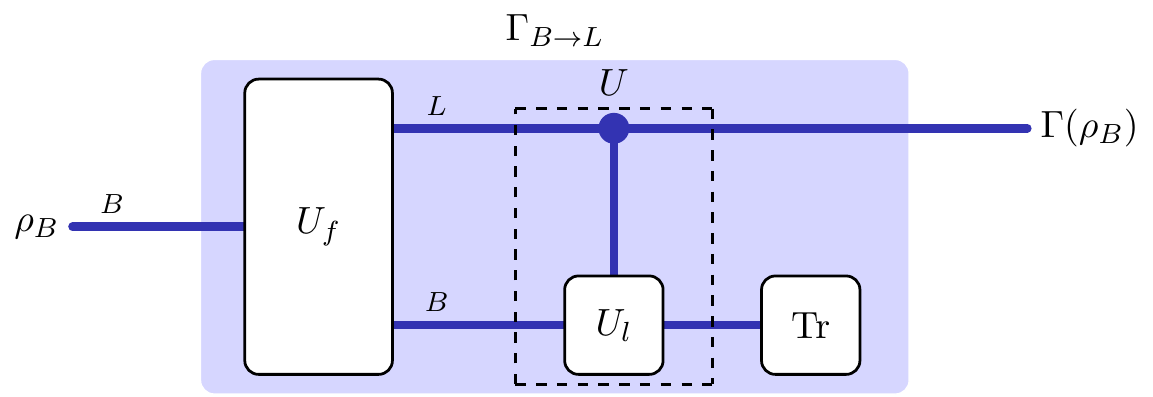}
\caption{Schematic diagram of the $\DIIO$ distillation protocol in 
        Proposition~\ref{prop: IO protocol}.
        The isometry $U_f$ is incoherent.
        The unitary $U$, represented by the inner dashed box, is a controlled unitary but is not
        necessarily incoherent. 
        The overall operation $\Gamma_{B\to L}$, represented by the dashed box, belongs to DIIO.}
\label{fig: protocol of coherence distillation}
\end{figure}

\section{An alternative formulation of assisted incoherent randomness extraction}
\label{sec:alternative formulation}

In this appendix, we consider an alternative formulation of the assisted incoherent randomness extraction framework
that is different from the one discussed in Section~\ref{sec:assisted randomness extraction}. In this formulation, Bob
is more discreet in the sense that he accepts the assistance from Alice but does not allow her to possess any
information about the extracted randomness.

\subsection{Task description}

In the beginning, Alice and Bob preshare a bipartite quantum state $\rho_{AB}$ with purification $\ket{\psi}_{RAB}$
such that the reference system $R$ held by Eve. A general assisted incoherent randomness extraction protocol is
characterized by a triplet $(\Lambda,\Delta,f)$, where $\Lambda\in\cF$ a
free operation and $f$ a hash function. The protocol has three steps:
\begin{enumerate}
\item Alice and Bob first perform a free operation $\Lambda_{AB\to C}\in\cF$ on their joint system, where $C$ is in
  Bob's possession. That is, there is no output system at Alice's hand any more. This can be done by performing a
  partial trace operation. Let $U_{AB\to CE}$ be a Stinespring isometry representation of $\Lambda$. We assume the
  environment system $E$ of $\Lambda$ is also controlled by Eve. After the action of $\Lambda$, the whole system is in
  a pure state
  \begin{align}\label{eq:rho-Lambda-CER}
    \tau[\Lambda]_{CER} := U_{AB\to CE}(\proj{\psi}_{RAB})U_{AB\to CE}^\dagger,
  \end{align}
  where we use $\tau$ instead of $\rho$ in~\eqref{eq:rho-Lambda-CER} to indicate that we consider an alternative
  formulation here.
\item Bob dephases system $C$ into the incoherent basis via $\Delta_C$.
\item Bob performs the hash function $f$ to extract randomness. 
  These two steps lead to the final output state
  \begin{align}
    \tau[\Lambda,\Delta,f]_{LER} := \tau[\Lambda,\Delta]_{f(C)ER}
    = \sum_{b\in\cB}p_b\proj{f(b)}_{L}\ox\sigma_{ER}^b,
  \end{align}
  where $p_b := \tr\bra{b}\tau[\Lambda]_{CER}\ket{b}$ and $\sigma_{ER}^b:=\bra{b}\tau[\Lambda]_{CER}\ket{b}/p_b$.
\end{enumerate} 
A detailed procedure of this alternative assisted randomness extraction via $(\Lambda,\Delta,f)$ is depicted in
Figure~\ref{fig:assisted-extraction-alt}.
Likewise, the \emph{one-shot assisted extractable randomness via $\cF$}
of the quantum state
$\rho_{AB}$ in this alternative formulation is defined as
\begin{align}\label{eq:one-shot-er-alt}
    \widetilde{\ell}_{\cF}^\varepsilon(\rho_{AB}) :=
    \max_{\Lambda_{AB\to C}\in\cF}
    \max_{f}\left\{\log |L| :
    d_{sec}\left(\tau[\Lambda,\Delta,f]_{LER}|ER\right)\leq \varepsilon \right\}.
\end{align}

\begin{remark}\label{remark:alternative}
As one can tell, the essential difference between this formulation and the original definition in
Section~\ref{sec:assisted randomness extraction} lies in what kind of incoherent operation $\Lambda\in\cF$ that Alice
and Bob can use. In the original definition, Alice and Bob can adopt operations of the form $\Lambda_{AB\to A^\prime
B^\prime}$ such that the output $A^\prime$ is at Alice's hand and $B^\prime$ is at Bob's hand. As a result, Alice has
some information on Bob's extracted randomness that is secure from Ever. However, in this alternative formulation, we
rule out this possibility by only allowing operations of the form $\Lambda_{AB\to C}$ such that the output system $C$
is under the full control of Bob. Apparently, Bob in the alternative formulation is more stringent since he solely
makes use of Alice while does not allow her to possess any secrecy. However, this difference does not cause
any trouble in the assisted coherence distillation scenario, since $\Lambda_{AB\to A^\prime B^\prime}$ can be converted
to $\Lambda_{AB\to C}$ by doing an extra partial trace $\tr_{A^\prime}$ without affecting the distillation rate.
\end{remark}

\begin{figure}[!hbtp]
\centering
\includegraphics[width=0.8\textwidth]{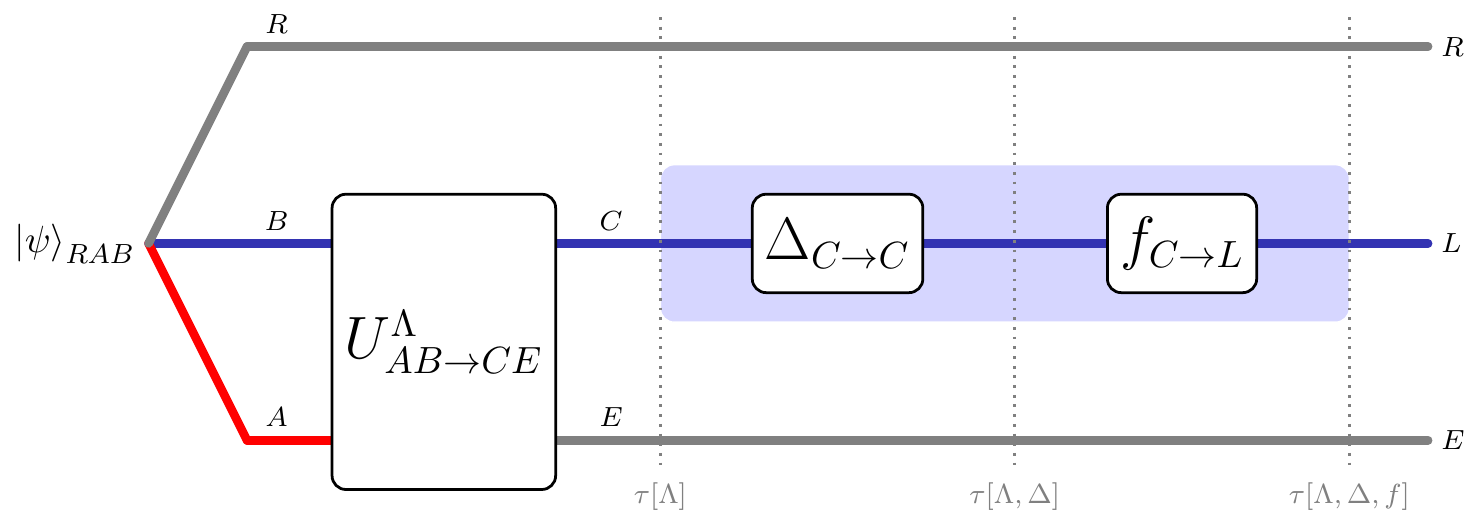}
\caption{Alternative formulation of the assisted randomness extraction framework given by $[\Lambda,\Delta,f]$.
The system in red belongs to Alice, the system in blue belong to Bob, and the two systems in gray belong to Eve. In the
shaded area, we illustrate an incoherent randomness extraction protocol $[\id,\Delta,f]$ (without assistance!) for the
state $\tau_C$ with purification $\tau[\Lambda]_{CER}$.}
\label{fig:assisted-extraction-alt}
\end{figure}

Based on the argument in Remark~\ref{remark:alternative}, we conclude that Bob in the original assisted randomness
extraction framework (cf. Section~\ref{sec:assisted randomness extraction}) can extract more randomness than in this
alternative framework.

\begin{proposition}\label{prop:two-er-relation}
Let $\cF\in\{\LICC,\LQICC,\SI,\SQI,\QIP\}$. Let $\rho_{AB}$ be a bipartite quantum state and $\varepsilon\in[0,1]$. It
holds that
\begin{align}
  \widetilde{\ell}_{\cF}^\varepsilon(\rho_{AB}) \leq \ell_{\cF}^\varepsilon(\rho_{AB}).
\end{align}
\end{proposition}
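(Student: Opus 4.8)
The plan is to show that every feasible protocol in the alternative formulation is also a feasible protocol in the original formulation of Section~\ref{sec:assisted randomness extraction} achieving the same rate; passing to the suprema then gives the claimed inequality. Concretely, suppose $\widetilde{\ell}_{\cF}^\varepsilon(\rho_{AB}) = \log|L|$ is attained by a protocol $(\Lambda_{AB\to C},\Delta_C,f)$ with $\Lambda_{AB\to C}\in\cF$ and $d_{sec}(\tau[\Lambda,\Delta,f]_{LER}|ER)\leq\varepsilon$. The key observation, which is the formal counterpart of Remark~\ref{remark:alternative}, is that an operation delivering its entire output into Bob's register $C$ is a special case of the bipartite operations $\Lambda_{AB\to A'B'}$ permitted in the original framework: one simply regards $\Lambda_{AB\to C}$ as $\Lambda_{AB\to A'B'}$ with a trivial (one-dimensional) Alice output $A'$ and $B'\equiv C$.

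First I would check that this re-interpretation keeps the operation inside the free class $\cF$. For $\QIP$ this is immediate, since the requirement that $\Lambda$ map $\mathcal{QI}(A{:}B)$ into $\mathcal{QI}(A'{:}B')$ with $A'$ trivial is exactly the requirement that it map $\mathcal{QI}(A{:}B)$ into $\cI(C)$, which is the membership condition already imposed in the alternative formulation. For $\LICC,\LQICC,\SI,\SQI$, membership is characterized through the local or separable structure of the Kraus operators, and this structure is unaffected by declaring Alice's output to be trivial. Hence $\Lambda_{AB\to A'B'}\in\cF$ in every case.

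Next I would verify that the two protocols produce identical final states, so that the security constraint transfers verbatim. Using a common Stinespring isometry $U_{AB\to CE}=U_{AB\to A'B'E}$ with $A'$ trivial, the post-isometry state $\rho[\Lambda]_{A'B'ER}$ coincides with $\tau[\Lambda]_{CER}$ under the identification $B'\equiv C$, and this equality is preserved by the subsequent dephasing $\Delta_{B'}=\Delta_C$ and the hashing $f$. Because $A'$ is one-dimensional, tracing it out does nothing, so $\rho[\Lambda,\Delta,f]_{LER}=\tau[\Lambda,\Delta,f]_{LER}$ and therefore $d_{sec}(\rho[\Lambda,\Delta,f]_{LER}|ER)\leq\varepsilon$. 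Thus $(\Lambda_{AB\to A'B'},\Delta,f)$ is a legitimate protocol for the original formulation with the same output size $|L|$, yielding $\ell_{\cF}^\varepsilon(\rho_{AB})\geq\log|L|=\widetilde{\ell}_{\cF}^\varepsilon(\rho_{AB})$.

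I do not anticipate a genuine obstacle here, as the statement merely records that the alternative task restricts Bob to operations that hand nothing to Alice, which is a sub-case of the original task. The only point requiring care is the class-membership check in the first step, and even there the argument is uniform across all five classes, because replacing Alice's output by a trivial system neither enlarges the relevant input $\mathcal{QI}$ set nor alters the Kraus-operator constraints defining $\cF$.
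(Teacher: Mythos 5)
Your proof is correct and follows essentially the same route as the paper, which justifies Proposition~\ref{prop:two-er-relation} by the observation in Remark~\ref{remark:alternative} that the alternative task only permits operations $\Lambda_{AB\to C}$ handing the whole output to Bob, a special case of the operations $\Lambda_{AB\to A'B'}$ allowed in the original framework. Your write-up simply makes explicit the two checks (class membership with a trivial $A'$, and coincidence of the output states and hence of the security constraint) that the paper leaves implicit.
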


\subsection{New equivalence relation}

The main advantage of this alternative assisted incoherent randomness extraction framework is that we are able to
establish an equivalence relation between the assisted coherence distillation described in Section~\ref{sec:assisted
coherence distillation} and the alternative assisted incoherent randomness extraction described here, for all free
operation classes under consideration. This equivalence hence is much stronger than that was concluded in
Theorem~\ref{thm:equivalence}, holding only for the $\QIP$ class.

\begin{theorem}[Equivalence relation II]\label{thm:equivalence-alternative}
Let $\cF\in\{\LICC,\LQICC,\SI,\SQI,\QIP\}$.
Let $\rho_{AB}$ be a bipartite quantum state and $\varepsilon\in[0,1]$. It holds that
\begin{align}
  C_{d,\cF}^{\ve}(\rho_{AB}) = \widetilde{\ell}_{\cF}^\varepsilon(\rho_{AB}).
\end{align}
\end{theorem}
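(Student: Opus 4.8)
The plan is to establish the two inequalities $C_{d,\cF}^{\ve}(\rho_{AB}) \le \widetilde{\ell}_{\cF}^\varepsilon(\rho_{AB})$ and $C_{d,\cF}^{\ve}(\rho_{AB}) \ge \widetilde{\ell}_{\cF}^\varepsilon(\rho_{AB})$ separately, mirroring the proof of Theorem~\ref{thm:equivalence} but exploiting the fact that in the alternative formulation the output of the free operation is held \emph{entirely} by Bob. This single feature is what upgrades the equivalence from the $\QIP$ class alone to every $\cF\in\{\LICC,\LQICC,\SI,\SQI,\QIP\}$.

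For the direction $C_{d,\cF}^{\ve}(\rho_{AB}) \le \widetilde{\ell}_{\cF}^\varepsilon(\rho_{AB})$, I would reuse the argument of Lemma~\ref{lemma:equivalence-converse} almost verbatim. Starting from an optimal distillation operation $\Lambda_{AB\to A'B'}\in\cF$ with $P(\tr_{A'}\Lambda(\rho_{AB}),\Psi_{B'})\le\varepsilon$ and $\log|B'| = C_{d,\cF}^{\ve}(\rho_{AB})$, I would first absorb the partial trace $\tr_{A'}$ into the operation (as in Remark~\ref{remark:alternative}), producing $\widetilde\Lambda_{AB\to B'}:=\tr_{A'}\circ\Lambda\in\cF$ of the single-output form required by the alternative framework. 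Feeding $\widetilde\Lambda$ into the protocol $(\widetilde\Lambda,\Delta,\id)$ and invoking Uhlmann's theorem to pick the optimal extension $\Psi_{B'}\otimes\sigma^*_{ER}$ of $\Psi_{B'}$, the same chain of data-processing steps under $\Delta_{B'}$ (using $\Delta_{B'}(\Psi_{B'})=\pi_{B'}$) shows $d_{sec}(\tau[\widetilde\Lambda,\Delta,\id]_{LER}|ER)\le\varepsilon$, so $\log|B'|$ is an achievable alternative extraction rate. The only new point relative to Lemma~\ref{lemma:equivalence-converse} is that here Eve's environment absorbs the discarded system $A'$; this strengthens Eve but does not affect the estimate, since the bound is controlled purely by the fidelity between the purification and $\Psi_{B'}\otimes\sigma^*_{ER}$.

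For the reverse direction $C_{d,\cF}^{\ve}(\rho_{AB}) \ge \widetilde{\ell}_{\cF}^\varepsilon(\rho_{AB})$, this is where the alternative formulation pays off. Let $(\Lambda,\Delta,f)$ with $\Lambda_{AB\to C}\in\cF$ achieve $\widetilde{\ell}_{\cF}^\varepsilon(\rho_{AB})=\log|L|$. The key observation is that running $(\Lambda,\Delta,f)$ on $\rho_{AB}$ coincides with running the \emph{unassisted} protocol $(\id,\Delta,f)$ on the single-system state $\tau_C:=\Lambda_{AB\to C}(\rho_{AB})$, whose purification is $\tau[\Lambda]_{CER}$ with reference system $ER$. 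Because $\tau_C$ lives on the single system $C$ in Bob's hand, I can apply the single-partite construction of Proposition~\ref{prop: IO protocol} (with $B\mapsto C$ and $R\mapsto ER$) to obtain $\Gamma_{C\to L}\in\DIIO$ with $P(\Gamma(\tau_C),\Psi_L)\le\varepsilon$. The composite $\Gamma_{C\to L}\circ\Lambda_{AB\to C}$ then distills $\Psi_L$ within error $\varepsilon$, giving $C_{d,\cF}^{\ve}(\rho_{AB})\ge\log|L|$, provided $\Gamma\circ\Lambda\in\cF$.

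The main obstacle, and the crux of the argument, is the membership $\Gamma\circ\Lambda\in\cF$ for \emph{every} class. Here the decisive improvement over Theorem~\ref{thm:equivalence} is that $\Gamma$ is only required to lie in $\DIIO$ (a single-partite incoherent operation acting on Bob's system), rather than in the genuinely bipartite $\QIP$ delivered by Proposition~\ref{prop: QIP protocol}. Viewing $\Gamma$ as a bipartite operation with trivial Alice system, it is a local incoherent post-processing on Bob's side; since each of $\LICC,\LQICC,\SI,\SQI$ is stable under appending a local incoherent operation on Bob (recalling that the local free operations in $\LICC$ are taken to be $\MIO\supseteq\DIIO$, and that composing incoherent Kraus operators on Bob's side preserves the separable incoherent / separable quantum-incoherent structure), and since $\QIP$ is closed under composition~\cite{yamasaki2019hierarchy} while the single-partite map $\Gamma\in\DIIO\subseteq\MIO$ itself belongs to $\QIP$ (as $\QIP$ reduces to $\MIO$ on a single party), the composite $\Gamma\circ\Lambda$ remains in $\cF$ in all five cases. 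I would verify this closure class-by-class as the one genuinely new step; the two fidelity estimates above are then immediate from the cited results. Combining the two inequalities yields the claimed equality.
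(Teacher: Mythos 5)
Your proposal is correct and follows essentially the same route as the paper's proof: the ``$\leq$'' direction reuses Lemma~\ref{lemma:equivalence-converse}, and the ``$\geq$'' direction reinterprets $(\Lambda,\Delta,f)$ as the unassisted protocol $(\id,\Delta,f)$ on $\tau_C=\Lambda_{AB\to C}(\rho_{AB})$, invokes Proposition~\ref{prop: IO protocol} to get $\Gamma_{C\to L}\in\DIIO$, and then checks $\Gamma\circ\Lambda\in\cF$ class by class. You correctly identify that closure step as the crux and give the right reasons for each of the five classes, matching the paper's case analysis.
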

\begin{proof}
``$\leq$'': This direction can be shown using similar argument 
in Lemma~\ref{lemma:equivalence-converse}.

``$\geq$'': Let $\ell_{\cF}^\varepsilon(\rho_{AB})=\log\vert L\vert$ and let $[\Lambda,\Delta,f]$ achieves this rate
such that $\Lambda_{AB\to C}\in\cF$. The key observation is that an assisted incoherent randomness extraction protocol
$[\Lambda_{AB\to C},\Delta_C,f]$ for state $\rho_{AB}$ can be regarded as an incoherent randomness extraction protocol
$[\id_{C\to C},\Delta_C,f]$ (without assistance) for the state $\tau_{C}\equiv\Lambda_{AB\to C}(\rho_{AB})$, with
purification $\tau[\Lambda]_{CER}$ defined in~\eqref{eq:rho-Lambda-CER} and joint reference system $ER$ (cf. the shaded
area of Figure~\ref{fig:assisted-extraction-alt}). That is, for the state $\tau_C$, there exists a hash function $f$
such that
\begin{align}\label{eq:gPedLUEcrgSIdg}
  d_{sec}\left(\tau[\id,\Delta,f]_{LER}|ER\right)\leq \varepsilon.
\end{align}
Recalling the one-to-one correspondence between coherence distillation protocol and incoherent randomness extraction
protocol in Proposition~\ref{prop: IO protocol}, we conclude from~\eqref{eq:gPedLUEcrgSIdg} that there exists a quantum
operation $\Gamma_{C\to L}\in\DIIO$ such that $P(\Gamma_{C\to L}(\tau_C), \Psi_L)\leq\varepsilon$. Compositing these
two quantum operations yields
\begin{align}
  P(\Gamma_{C\to L}(\tau_C),\Psi_L)
= P(\Gamma_{C\to L}\circ\Lambda_{AB\to C}(\rho_{AB}),\Psi_L) \leq\varepsilon.
\end{align}
That is to say, the composite operation $\Gamma\circ\Lambda$ distills a MCS of rank $\vert L\vert$ such that the error
is bounded by $\varepsilon$. It then suffices to show that $\Gamma\circ\Lambda\in\cF$. In the following, we handle
case by case to show that the conditions $\Lambda\in\cF$ and $\Gamma\in\DIIO$ together indeed imply that
$\Gamma\circ\Lambda\in\cF$.

\textit{Cases that $\cF\equiv\LICC$ or $\cF\equiv\LQICC$.} We only consider $\cF\equiv\LQICC$ since the other case
$\cF\equiv\LICC$ can be shown in a similar way. We focus on the last round of classical communication in the local
incoherent operations and classical communication operation $\Lambda_{AB\to C}\in\LQICC$. It must be that Bob sends
her outcome to Bob. More specifically, in the last round, Bob performs a POVM $\{E^x_A\}$ such that $E^x_A\geq 0$ and
$\sum_xE^x_A=\1$. She sends the outcome $x$ to Bob. Conditioned on $x$, Bob performs an operation $\cD^x_{B\to
C}\in\MIO$ on the post-measurement state. After that, he continues to do the operation $\Gamma_{C\to L}$. Since
$\Gamma_{C\to L}\in\DIIO$, we know $\Gamma\circ\cD^x\in\MIO$ as $\DIIO\subseteq\MIO$ and $\MIO$ is closed under
composition. As one can tell, the only difference between $\Lambda$ and $\Gamma\circ\Lambda$ is that in the last round
Bob performs different conditional operations -- for the former $\cD^x$ is taken, while for the latter
$\Gamma\circ\cD^x$ is adopted. They are both free local incoherent operations. Hence, $\Gamma\circ\Lambda\in\LQICC$.

\textit{Cases that $\cF\equiv\SI$ or $\cF\equiv\SQI$.}
We only consider $\cF\equiv\SI$ since the other case
$\cF\equiv\SQI$ can be shown in a similar way.
Assume the incoherent operations $\Lambda_{AB\to C}\in\SI$ has the form
\begin{align}
    \Lambda_{AB\to C}(\cdot) = \sum_i (A_i\ox B_i) (\cdot) (A_i\ox B_i)^\dagger,
\end{align}
where both $A_i$ and $B_i$ are incoherent Kraus operators. Since $\DIIO\subset\IO$, $\Gamma_{C\to L}$ admits an
incoherent Kraus decomposition as $\Gamma_{C\to L}(\rho)=\sum_lK_l\rho K_l^\dagger$, where $K_l$ are incoherent Kraus
operators. Thus, $\Gamma\circ\Lambda$ admits the decomposition
\begin{align}
    \Gamma_{C\to L}\circ\Lambda_{AB\to C}(\rho_{AB})
&=  \sum_l K_l \left(\sum_i(A_i\ox B_i)\rho (A_i\ox B_i)^\dagger\right) K_l^\dagger \\
&=  \sum_{l,i} \left(A_i\ox K_lB_i\right)\rho_{AB}\left(A_i\ox K_lB_i\right)^\dagger.
\end{align}
The completeness condition holds since
\begin{align}
  \sum_{l,i}\left(A_i\ox K_lB_i\right)^\dagger\left(A_i\ox K_lB_i\right)
&= \sum_i(A_i\ox B_i)^\dagger \left(\sum_lK_l^\dagger K_l\right) (A_i\ox B_i) \\
&= \sum_i(A_i\ox B_i)^\dagger(A_i\ox B_i) = \1_{AB}.
\end{align}
What's more, the Kraus operators $K_lB_i$ are incoherent since incoherent Kraus operators are closed under composition.
Hence $\Gamma\circ\Lambda\in\SI$.

\textit{Cases that $\cF\equiv\QIP$.} Notice that $\DIIO$ preserves the free incoherent states $\cI$. On the other hand
it holds by definition that $\cI\subset\mathcal{QI}$. Thus $\DIIO$ preserves $\mathcal{QI}$. This implies that
$\Gamma(\mathcal{QI})\subset\mathcal{QI}$. Since $\Lambda\in\QIP$ and $\QIP$ is closed under composition, we conclude
$\Gamma\circ\Lambda\in\QIP$.
\end{proof}

\subsection{Comparison among various assisted tasks}

We have considered three assisted tasks concerning a bipartite quantum state $\rho_{AB}$ in
this work:
\begin{enumerate}
  \item The assisted coherence distillation task introduced in Section~\ref{sec:assisted coherence distillation}
    and the corresponding one-shot optimal rate $C_{d,\cF}^{\ve}(\rho_{AB})$~\eqref{eq:one-shot-EAD},
  \item The assisted incoherent randomness extraction task introduced in
        Section~\ref{sec:assisted randomness extraction}
        and the corresponding one-shot optimal rate $\ell_{\cF}^\varepsilon(\rho_{AB})$~\eqref{eq:one-shot-er}, and
  \item An alternative formulation of the assisted incoherent randomness extraction task introduced in
    Appendix~\ref{sec:alternative formulation} and the corresponding one-shot optimal rate
    $\widetilde{\ell}_{\cF}^\varepsilon(\rho_{AB})$~\eqref{eq:one-shot-er-alt}.
\end{enumerate}
The following relation holds among these quantities as a consequence of
Proposition~\ref{prop:two-er-relation} and Theorem~\ref{thm:equivalence-alternative}:
\begin{align}\label{eq:teQjuJHcTSsXHoDTtC}
  C_{d,\cF}^{\ve}(\rho_{AB}) = \widetilde{\ell}_{\cF}^\varepsilon(\rho_{AB}) \leq \ell_{\cF}^\varepsilon(\rho_{AB}).
\end{align}
Specially, for the class of $\QIP$ we obtain an equivalence relation due to Theorem~\ref{thm:equivalence} and
Theorem~\ref{thm:equivalence-alternative}:
\begin{align}
    C_{d,\QIP}^{\ve}(\rho_{AB}) = \widetilde{\ell}_{\QIP}^\varepsilon(\rho_{AB}) = \ell_{\QIP}^\varepsilon(\rho_{AB}).
\end{align}
To prove or disprove that the inequality in~\eqref{eq:teQjuJHcTSsXHoDTtC} is an equality for other free
classes is left as future work.

\end{appendices}

\end{document}